\titlespacing*{\section}{0pt}{*1}{*1}
\titlespacing*{\subsection}{0pt}{*1}{*1}
\setdefaultleftmargin{\parindent}{}{}{}{}{}
\newtheorem{condition}{Condition}
\newtheorem{lemma}{Lemma}
\newtheorem{proposition}{Proposition}
\newtheorem{theorem}{Theorem}
\newtheorem{corollary}{Corollary}
\def\bA{\mathbf{A}}
\def\bD{\mathbf{D}}
\def\bE{\mathbf{E}}
\def\bF{\mathbf{F}}
\def\bG{\mathbf{G}}
\def\bH{\mathbf{H}}
\def\bI{\mathbf{I}}
\def\bV{\mathbf{V}}
\def\bW{\mathbf{W}}
\def\bX{\mathbf{X}}
\def\bY{\mathbf{Y}}
\def\bZ{\mathbf{Z}}
\def\ba{\mathbf{a}}
\def\be{\mathbf{e}}
\def\bs{\mathbf{s}}
\def\bv{\mathbf{v}}
\def\bSigma{\bm\Sigma}
\def\bOmega{\bm\Omega}
\def\ve{\varepsilon}
\def\blambda{\bm\lambda}
\def\bomega{\bm\omega}
\def\bone{\mathbf{1}}
\def\bzero{\mathbf{0}}
\def\cS{\mathcal{S}}
\def\cU{\mathcal{U}}
\DeclareMathOperator\CV{CV}
\DeclareMathOperator\Var{Var}
\DeclareMathOperator\diag{diag}
\DeclareMathOperator\sgn{sgn}
\DeclareMathOperator\tr{tr}
\begin{document}
\begin{onehalfspace}

\title{CARE: Large Precision Matrix Estimation for Compositional Data}
\author{Shucong Zhang$^\text{a}$, Huiyuan Wang$^\text{b}$, and Wei Lin$^\text{c}$}
\date{}
\maketitle

\footnotetext{$^\text{a}$School of Statistics, University of International Business and Economics, Beijing, China; $^\text{b}$Department of Biostatistics, Epidemiology and Informatics, Perelman School of Medicine, University of Pennsylvania, Philadelphia, PA; $^\text{c}$School of Mathematical Sciences and Center for Statistical Science, Peking University, Beijing, China}

\footnotetext{\emph{Contact}: Wei Lin, School of Mathematical Sciences and Center for Statistical Science, Peking University, Beijing 100871, China (weilin@math.pku.edu.cn).}

\begin{abstract}
\noindent High-dimensional compositional data are prevalent in many applications. The simplex constraint poses intrinsic challenges to inferring the conditional dependence relationships among the components forming a composition, as encoded by a large precision matrix. We introduce a precise specification of the compositional precision matrix and relate it to its basis counterpart, which is shown to be asymptotically identifiable under suitable sparsity assumptions. By exploiting this connection, we propose a composition adaptive regularized estimation (CARE) method for estimating the sparse basis precision matrix. We derive rates of convergence for the estimator and provide theoretical guarantees on support recovery and data-driven parameter tuning. Our theory reveals an intriguing trade-off between identification and estimation, thereby highlighting the blessing of dimensionality in compositional data analysis. In particular, in sufficiently high dimensions, the CARE estimator achieves minimax optimality and performs as well as if the basis were observed. We further discuss how our framework can be extended to handle data containing zeros, including sampling zeros and structural zeros. The advantages of CARE over existing methods are illustrated by simulation studies and an application to inferring microbial ecological networks in the human gut.\bigskip

\noindent\emph{Keywords}: Blessing of dimensionality; Graphical modeling; High-dimensional data; Identifiability; Microbiome
\end{abstract}
\end{onehalfspace}

\newpage
\section{Introduction}
High-dimensional compositional data, which lie in a high-dimensional unit simplex, arise in a wide spectrum of contemporary applications. For example, in metagenomic studies, the composition of microbial communities can be comprehensively quantified using 16S rRNA gene or shotgun metagenomic sequencing, which yields the relative abundances of thousands of bacterial taxa \citep{li2015}. Other prominent examples include the chemical composition of particulate matter in environmental studies \citep{Donkelaar.etal2019}, market shares of industries in economics \citep{Sutton2007}, and word frequencies of texts in machine learning \citep{Blei.Lafferty2007}. A central task in such applications is to infer the conditional dependence relationships among the components forming a composition, which are often represented by an undirected graphical model and encoded by the inverse covariance or precision matrix \citep{Lauritzen1996}.

The simplex constraint creates intrinsic challenges to inference for high-dimensional graphical models and precision matrices. A conventional method starts with applying the log-ratio transformation \citep{aitchison1982,aitchison2003} to the compositional data, and proceeds with inference as if the transformed data lie in a Euclidean space. This approach, however, is unsatisfactory for our problem in the following respects.

\begin{compactenum}
\item The simplex constraint induces negative correlations between originally independent components, and hence a ``null'' structure of the compositional precision matrix away from a diagonal matrix. It is therefore not clear how to impose the sparsity structure as in the case of Euclidean data.
\item Among the conventional matrix specifications of compositional covariance structures \citep{aitchison2003}, the log-ratio and isometric log-ratio covariance matrices treat the components asymmetrically, while the variation matrix is not a covariance matrix and the centered log-ratio covariance matrix is not invertible. Thus, none of these specifications give rise to a compositional precision matrix that is permutation invariant under sparsity assumptions.
\item Since each log-ratio involves more than one component, conditional dependence relationships among the log-ratios are more difficult to interpret, especially when model selection or support recovery is the primary purpose.
\end{compactenum}

A potential way to resolve these difficulties is to return to the basis precision matrix $\bOmega_0$, that is, the precision matrix of the log-basis variables that generate a composition; see Section \ref{sec:ident} for the formal definition. Since the log-basis variables are free of constraints, $\bOmega_0$ can be assumed to be sparse and interpreted in the usual manner. However, owing to the many-to-one relationship between the basis and the composition, $\bOmega_0$ is unidentifiable from compositional data in any finite dimension, making it a seemingly unrealistic goal.

A closely related question was addressed by \citet{cao2019}. They showed that the basis covariance matrix $\bSigma_0=\bOmega_0^{-1}$ belonging to a suitable sparsity class is approximately identifiable and indistinguishable from the centered log-ratio covariance matrix $\bSigma_c$ as the dimensionality increases. Their idea relies on a fundamental decomposition, originally due to \citet{aitchison2003}, that relates the basis and compositional covariance structures. On the other hand, such a relationship for the precision matrices has long been lacking in the literature, preventing the problem of precision matrix estimation from being solved. Any attempt to apply \citet{cao2019}'s argument directly to the precision matrix setting would face the necessity of imposing sparsity assumptions simultaneously on the covariance and precision matrices, which is unreasonable and impractical.

This article bridges the conceptual gap by introducing a precise specification of the compositional precision matrix and establishing a transparent relationship between the basis and compositional precision matrices. This novel connection inherits a similar low-rank plus sparse form from the covariance decomposition mentioned above, allowing us to prove the approximate identifiability of $\bOmega_0$ under suitable sparsity assumptions. These results are of independent interest and would be useful for other inference problems for compositional data where precision matrices play a role. Building on our precision matrix specification and its nice properties, we adopt the idea of constrained $\ell_1$-minimization \citep{cai2011} to develop a composition adaptive regularized estimation (CARE) method for estimating the sparse basis precision matrix. We further derive rates of convergence for the resulting estimator under various matrix norms, and provide theoretical guarantees on support recovery and data-driven parameter tuning.

Our theoretical results reveal an intriguing trade-off between identification and estimation. More specifically, as the dimension $p$ grows, the identification error decreases, while the estimation error increases. Remarkably, when $p\log p\gg n$ with $n$ being the sample size, the estimation error dominates the identification error and the CARE estimator achieves the minimax optimal rates for unconstrained data in \citet{cai2016}. This entails that, in sufficiently high dimensions, our method performs as well as if the basis were observed. This blessing of dimensionality has direct implications for practical data analysis. For example, in microbiome studies, the number of bacterial taxa depends on the taxonomic level under examination, which should be judiciously chosen to facilitate inference and interpretation.

\subsection{Related Work}
There is a vast and rapidly growing literature on high-dimensional graphical models and precision matrix estimation. The existing methods fall roughly into three categories. The neighborhood-based approach exploits a nodewise regression formulation of Gaussian graphical models, and fits the high-dimensional regressions using regularization methods such as the lasso, Dantzig selector, and scaled lasso \citep{meinshausen2006,yuan2010,ren2015,Fan.Lv2016}. Instead, the likelihood-based approach estimates the precision matrix as a whole by directly penalizing the joint Gaussian likelihood with the graphical lasso \citep{yuan2007,friedman2008} or nonconvex penalties \citep{lam2009}. Finally, the CLIME-related methods are motivated by modified score functions and solve constrained $\ell_1$-minimization or penalized empirical risk minimization problems \citep{cai2011,zhang2014,liu2015,cai2016}. Their ideas are to simplify the loss function and yield theoretical and computational advantages.

Extensions of the problem to compositional data have attracted considerable attention in the computational biology literature. The SPIEC-EASI procedure proposed by \citet{kurtz2015} treats the centered log-ratio transformed data as Euclidean data and directly applies the usual neighborhood selection and graphical lasso methods. \citet{fang2017} developed an $\ell_1$-penalized likelihood method called gCoda by parametrizing the logistic normal likelihood with $\bOmega_0$. The identifiability issue under the parametrization was noted but not rigorously addressed. A related method using the Bayesian graphical lasso was considered by \citet{Schwager.etal2017}. \citet{yuan2019} introduced an $\ell_1$-penalized CD-trace approach by borrowing the idea of D-trace loss from \citet{zhang2014}. The derivation, however, relies on an exchangeability condition, which is equivalent to the row sums of $\bSigma_0$ being all equal. \citet{Zhang.He2019} employed the sparse columnwise inverse operator (SCIO) method of \citet{liu2015} to find an approximate inverse of the sample centered log-ratio covariance matrix. Their method is based on the same heuristic argument as that of SPIEC-EASI, and would suffer from similar conceptual issues.

The difficulties and ambiguities in deriving and lending support to the aforementioned methods showcase the lack of convenient mathematical tools for the study of compositional precision matrices. As commented by \citet[p.~64]{aitchison2003}, before introducing the tools of compositional covariance matrices,
\begin{quoting}
we would not expect that excellent tool of the wide open spaces (or $\mathbb{R}^d$) of North America, namely the barbecue, necessarily to be an appropriate concept for cooking in the confined space (or $\cS^d$) of a low-cost housing flatlet in Hong Kong. If our concepts fail to serve us in new situations we must invent new concepts.
\end{quoting}
Similarly, we would not expect the oven for roasting Cantonese ducks (or compositional covariance matrices) to be perfect for roasting Peking ducks (or compositional precision matrices). We shall therefore suggest a new concept, whose mathematical form was first defined but not explored for inferential use by \citet{aitchison2003}, along with the associated tools that are more convenient and powerful for our purpose.

\subsection{Organization of the Article}
In Section \ref{sec:ident}, we present the precision matrix specification and its properties, thereby addressing the identifiability problem. Section \ref{sec:method} motivates and describes the CARE methodology. Section \ref{sec:theory} establishes theoretical guarantees for our proposed estimator in terms of rates of convergence, support recovery, and data-driven choice of tuning parameters. Extensions to data containing zeros and a perturbation theory are developed in Section 5. Simulation studies and an application to human gut microbiome data are presented in Sections 6 and 7, respectively. Section 8 contains some final discussion. Proofs and additional discussion and numerical results are provided in the supplementary materials.

We close this section by introducing some notation to be used throughout the article. For any vector $\ba$, matrix $\bA$, and $1\le r\le\infty$, we denote by $\|\ba\|_r$ the vector $\ell_r$-norm, and $\|\bA\|_{\max}$, $\|\bA\|_{L_r}$, and $\|\bA\|_F$ the entrywise $\ell_{\infty}$-norm, matrix $\ell_r$-norm, and Frobenius norm, respectively. Also, $\lambda_{\min}(\bA)$ and $\lambda_{\max}(\bA)$ denote the smallest and largest eigenvalues of $\bA$, respectively, and $\bA\succ0$ stands for $\bA$ being symmetric and positive definite. Finally, we write $\bI_p$ for the $p\times p$ identity matrix, $\bone_p$ the $p$-vector of 1s, and $\be_j$ the vector with 1 in the $j$th component and 0 elsewhere.

\section{Precision Matrix Specification and Identifiability}\label{sec:ident}
Consider a composition $\bX=(X_1,\dots,X_p)^T$ that takes values in the $(p-1)$-dimensional unit simplex $\cS^{p-1}=\{(x_1,\dots,x_p):x_j>0\text{ for }j=1,\dots,p,\sum_{j=1}^p x_j=1\}$. Let $\bW=(W_1,\dots,W_p)^T$ with $W_j>0$ for all $j$ be a latent vector, called the basis, that generates the observed composition via the normalization
\[
X_j=\frac{W_j}{\sum_{i=1}^{p}W_i},\qquad j=1,\dots,p.
\]
To specify a basis covariance structure compatible with the log-ratio form of compositional covariance structure, we define the log-basis vector $\bY=(Y_1,\dots,Y_p)^T$ with $Y_j=\log W_j$ for all $j$. Denote by $\bOmega_0=(\omega_{ij}^0)$ the precision matrix of $\bY$, which we call the basis precision matrix. In the important case where $\bY$ is multivariate normal, so that $\bX$ is logistic normal \citep[Property 6.1]{aitchison2003}, it follows from nonparanormal graphical model theory \citep{liu2009} that $W_i$ and $W_j$ are conditionally independent given the other $p-2$ variables if and only if $\omega_{ij}^0=0$. The graphical model interpretation of the precision matrix has also been extended to certain discrete and non-Gaussian cases \citep{loh2013,morrison2022}. Hence, recovering the graphical model structure among the $p$ basis variables amounts to estimating the support of $\bOmega_0$. In microbiome studies, $\bX$ represents the relative abundances of bacterial taxa measured by 16S rRNA gene or shotgun metagenomic sequencing, while $\bW$ represents the absolute abundances that are of primary interest but not directly observed. It has been found that sample abundance measurements after accounting for undersampling are well approximated by a log-normal distribution \citep{limpert2001,paulson2013}, and thus the nonparanormal graphical model seems plausible. In such applications, our goal is to estimate the basis precision matrix $\bOmega_0$ from the observed data on $\bX$.

Denote by $\bSigma_0=(\sigma_{ij}^0)=\bOmega_0^{-1}$ the basis covariance matrix, that is, the covariance matrix of $\bY$. The centered log-ratio transformation \citep{aitchison2003} gives the transformed data $\bZ=(Z_1,\dots,Z_p)^T$ with $Z_j=\log\{X_j/g(\bX)\}$, where $g(\bX)=\bigl(\prod_{j=1}^pX_j\bigr)^{1/p}$ is the geometric mean of $\bX$. Accordingly, the centered log-ratio covariance matrix $\bSigma_c=(\sigma_{ij}^c)$ is the covariance matrix of $\bZ$. Although $\bSigma_0$ is generally unidentifiable, \citet{cao2019} showed that it is close to its compositional counterpart $\bSigma_c$ in the sense that
\[
\|\bSigma_0-\bSigma_c\|_{\max}=O\left(\frac{\|\bSigma_0\|_{L_1}}{p}\right).
\]
This identifiability bound derives from the representation of $\bSigma_c$ in terms of $\bSigma_0$:
\begin{equation}\label{eq:sigma_c}
\bSigma_c=\bSigma_0-\bv_p\bv_p^T\bSigma_0-\bSigma_0\bv_p\bv_p^T+\bv_p\bv_p^T\bSigma_0\bv_p\bv_p^T,
\end{equation}
where $\bv_p=\bone_p/\sqrt p$. As a result, under sparsity assumptions on $\bSigma_0$ such that $\|\bSigma_0\|_{L_1}=o(p)$, $\bSigma_0$ is asymptotically identifiable and indistinguishable with $\bSigma_c$ as $p\to\infty$. This fact entails that $\bSigma_c$ serves as a good proxy for $\bSigma_0$ in sparse covariance estimation. Intuitively, we would expect a well-defined inverse of $\bSigma_c$ to be a good approximation of $\bOmega_0$. However, owing to the zero-sum constraint on $\bZ$, $\bSigma_c$ has an eigenvalue 0 and the corresponding eigenvector $\bv_p$ \citep[Theorem 4.6]{aitchison2003}.

A conceptually simple remedy for the singularity of $\bSigma_c$ is to find a generalized or pseudo-inverse. Let $\bSigma_c=\bV\bD_c\bV^T$ be the spectral decomposition of $\bSigma_c$, where $\bV$ is a $p\times(p-1)$ matrix of eigenvectors and $\bD_c=\diag(d_1,\dots,d_{p-1})$ with $d_1\ge\dots\ge d_{p-1}>0$. The rank-deficient matrix $\bSigma_c$ can be made invertible by a rank-1 correction. More precisely, for any constant $\rho>0$, $\bSigma_c+\rho\bv_p\bv_p^T$ is nonsingular and has the spectral decomposition
\[
\bSigma_c+\rho\bv_p\bv_p^T=(\bV,\bv_p)
\Biggl(\begin{matrix}
\bD_c & \bzero\\
\bzero & \rho
\end{matrix}\Biggr)
(\bV,\bv_p)^T,
\]
whose inverse is given by
\[
\bigl(\bSigma_c+\rho\bv_p\bv_p^T\bigr)^{-1}=(\bV,\bv_p)
\Biggl(\begin{matrix}
\bD_c^{-1} & \bzero\\
\bzero & \rho^{-1}
\end{matrix}\Biggr)
(\bV,\bv_p)^T.
\]
The Moore--Penrose inverse of $\bSigma_c$ can then be expressed as
\begin{equation}\label{eq:inverse}
\bSigma_c^+=\bV\bD_c^{-1}\bV^T=\bigl(\bSigma_c+\rho\bv_p\bv_p^T\bigr)^{-1}-\frac{1}{\rho}\bv_p\bv_p^T.
\end{equation}
We now define $\bOmega_c=(\omega_{ij}^c)=\bSigma_c^+$ as the \emph{compositional precision matrix}, since it provides a full parametrization of compositional inverse covariance structure and is identifiable from the observed compositional data. The definition of $\bSigma_c^+$ and expression \eqref{eq:inverse} were mentioned by \citet[sec.\ 5.6]{aitchison2003}, but its use for modeling and inference was not pursued therein. Indeed, since $\bOmega_c$ is indirectly defined through $\bSigma_c$, it is less interpretable and may not be very useful without an explicit link to the basis precision matrix $\bOmega_0$. Fortunately, by substituting \eqref{eq:sigma_c} into \eqref{eq:inverse} and using the Sherman--Morrison formula, we can establish the following relationship between $\bOmega_c$ and $\bOmega_0$.

\begin{theorem}\label{thm:rel}
The following relationship between $\bOmega_c$ and $\bOmega_0$ holds:
\[
\bOmega_c=\bOmega_0-\frac{\bOmega_0\bone_p\bone_p^T\bOmega_0}{\bone_p^T\bOmega_0\bone_p}.
\]
\end{theorem}

While there can be many possible ways to specify the compositional precision matrix, Theorem \ref{thm:rel} illustrates a major advantage of our particular specification: it admits a representation as the sum of its basis counterpart $\bOmega_0$ and a low-rank correction, which inherits a similar form from the decomposition of $\bSigma_c$ in \eqref{eq:sigma_c}. Here, the rank-1 component reflects a global effect of compositionality across all rows and columns of $\bOmega_0$, and is related to the information loss due to the normalization operation. In light of this decomposition, statistical wisdom learned from the literature on low-rank plus sparse matrix recovery problems \citep{candes2011,Chandrasekaran.etal2012,Fan.etal2013} suggests that a suitably sparse $\bOmega_0$ can be separated from the compositionality effect. The following proposition collects some useful properties of $\bOmega_c$.

\begin{proposition}\label{prop:omega_c}
The matrix $\mathbf{\Omega}_c$ satisfies the following properties:
\begin{compactenum}[(a)]
\item $\bSigma_c\bOmega_c=\bG$, where $\bG=\bI_p-p^{-1}\bone_p\bone_p^T$;
\item $\bG\bOmega_c=\bOmega_c$ and $\bOmega_c\bone_p=\bzero$;
\item $\lambda_{\min}(\bOmega_0)\le\lambda_{\min}^+(\bOmega_c)\le\lambda_{\max}(\bOmega_c)\le\lambda_{\max}(\bOmega_0)$, where $\lambda_{\min}^+(\bOmega_c)$ denotes the smallest positive eigenvalue of $\bOmega_c$;
\item $\sigma_{jj}^c\omega_{jj}^c\ge(1-1/p)^2$ for $j=1,\dots,p$.
\end{compactenum}
\end{proposition}

Part (a) in Proposition \ref{prop:omega_c} indicates that $\bOmega_c$ is an approximate inverse of $\bSigma_c$ with $\bG$ playing the role of the identity matrix for centered log-ratio vectors. In fact, any $p$-vector $\ba$ satisfies $\ba^T\bone_p=0$ if and only if $\bG\ba=\ba$ \citep[app.\ A]{aitchison2003}. Part (b) parallels Theorem 4.6 of \citet{aitchison2003}, part (c) confines the range of the positive eigenvalues of $\bOmega_c$ to within that of $\bOmega_0$, and part (d) is analogous to the inequality that $\sigma_{jj}^0\omega_{jj}^0\ge1$. These properties are essential for the development of methodology and theory based on $\bOmega_c$.

To impose sparsity on the target matrix $\bOmega_0$, we consider the class of sparse precision matrices
\[
\cU_q(s_0(p),M_p)=
\begin{Bmatrix}
\displaystyle\bOmega=(\omega_{ij}):\bOmega\succ0,\max_{1\le j\le p}\sum_{i=1}^p|\omega_{ij}|^{q}\le s_0(p),\\
\displaystyle\|\bOmega\|_{L_1}\le M_p,1/R\le\lambda_{\min}(\bOmega)\le\lambda_{\max}(\bOmega)\le R
\end{Bmatrix},
\]
where $0\le q<1$, $R>1$ is a constant, and $s_0(p)$ and $M_p$ are bounded away from 0 and may diverge with $p$. This class covers a wide range of sparse precision matrices and is similar to those considered for unconstrained high-dimensional data \citep{cai2011,ren2015,cai2016}. Under the assumption that $\bOmega_0$ belongs to the class $\cU_q(s_0(p),M_p)$, $\bOmega_c$ is generally not sparse, but remains close to $\bOmega_0$ and preserves some of its properties, as formalized by the following results.

\begin{proposition}\label{prop:ident}
Suppose that $\bOmega_0\in\cU_q(s_0(p),M_p)$. Then
\[
\frac{R^{-3}}{p}\le\|\bOmega_0-\bOmega_c\|_{\max}\le\frac{RM_p^2}{p}.
\]
\end{proposition}

\begin{proposition}\label{prop:bound}
Suppose that $\bOmega_0\in\cU_q(s_0(p),M_p)$. Then
\[
\|\bOmega_c\|_{L_1}\le(1+R^2)M_p,\qquad 1/R\le\lambda_{\min}^+(\bOmega_c)\le\lambda_{\max}(\bOmega_c)\le R.
\]
\end{proposition}

Proposition \ref{prop:ident} entails that the sparse basis precision matrix $\bOmega_0$ is approximately identifiable provided that $M_p=o(\sqrt p)$. The identification error due to approximating $\bOmega_0$ by $\bOmega_c$ vanishes asymptotically as $p\to\infty$. This blessing of dimensionality motivates us to use $\bOmega_c$ as a proxy for $\bOmega_0$ and develop procedures for estimating $\bOmega_0$ through $\bOmega_c$. In particular, when $M_p$ does not grow with $p$, the identification error decays at the rate of $O(p^{-1})$, which is sharp and cannot be improved. Proposition \ref{prop:bound} gives bounds on the matrix $\ell_1$-norm and positive eigenvalues of $\bOmega_c$, which are analogous to those for $\bOmega_0$. Finally, we observe that if the eigenvalue condition in the definition of $\cU_q(s_0(p),M_p)$ is replaced by $\lambda_{\max}(\bOmega)/\lambda_{\min}(\bOmega)\le R$ as in \citet{cai2016}, then the $\|\cdot\|_{\max}$-distance between $\bOmega_0$ and $\bOmega_c$ in Proposition \ref{prop:ident} can be similarly bounded as
\[
\|\bOmega_0-\bOmega_c\|_{\max}\le\frac{R\|\bOmega_0\|_{L_1}}{\sqrt p}\le\frac{RM_p}{\sqrt p}.
\]
In this case, the approximate identifiability of $\bOmega_0$ still holds as long as $M_p=o(\sqrt p)$.

\section{Composition Adaptive Regularized Estimation}\label{sec:method}
Suppose that $(\bX_1,\bZ_1),\dots,(\bX_n,\bZ_n)$ are independent realizations of $(\bX,\bZ)$. Denote by $\widehat\bSigma_c=(\hat\sigma_{ij}^c)= n^{-1}\sum_{k=1}^n(\bZ_k-\bar\bZ)(\bZ_k-\bar\bZ)^T$ the sample centered log-ratio covariance matrix, where $\bar\bZ=n^{-1}\sum_{k=1}^n\bZ_k$. In view of Proposition \ref{prop:ident}, we shall develop an estimator of $\bOmega_0$ via the proxy matrix $\bOmega_c$ based on the observed data or $\widehat\bSigma_c$. However, $\bOmega_c$ is in general not sparse, although it is entrywise close to the sparse matrix $\bOmega_0$. Also, we shall not impose restrictive distributional assumptions. Thus, it is unclear whether existing neighborhood or likelihood-based methods can effectively leverage the proxy $\bOmega_c$. A further look at the CLIME estimator reveals that its rate of convergence under the entrywise $\ell_{\infty}$-norm depends crucially on the matrix $\ell_1$-norm bound $M_p$ \citep[Theorem 6]{cai2011}. This, together with the fact from Proposition \ref{prop:ident} that the identification error between $\bOmega_0$ and $\bOmega_c$ can be tightly bounded in terms of $M_p$, motivates us to consider a CLIME-type estimator for $\bOmega_0$.

Without loss of generality, we assume that the log-basis vector $\bY$ has mean $\bzero$. Since our inference goal is to estimate the precision matrix $\bOmega_0$ of $\bY$, it is conventional to impose the following two types of tail conditions on $\bY$.

\begin{condition}[Sub-Gaussian tail]\label{cond:subG}
Assume that $\log p=o(n)$ and there exist some constants $\eta>0$ and $K>0$ such that
\[
\sup_{\bs\in\mathbb{R}^p,\,\|\bs\|_2=1}E[\exp\{\eta(\bs^T\bY)^2/\Var(\bs^T\bY)\}]\le K.
\]
\end{condition}

\begin{condition}[Polynomial-type tail]\label{cond:poly}
Assume that $p=O(n^\gamma)$ for some constant $\gamma>0$ and there exist some constants $\ve>0$ and $K'>0$ such that
\[
\sup_{\bs\in\mathbb{R}^p,\,\|\bs\|_2=1}E|\bs^T\bY|^{4\gamma+4+\ve}\le K'.
\]
\end{condition}

If the log-basis $\bY$ were observable, then $\widehat\bSigma\bOmega_0$ would concentrate around $\bI_p$, where $\widehat\bSigma$ is the sample basis covariance matrix. In our setting, however, $\bY$ is not observable and our estimator should be based on $\widehat\bSigma_c$. Since $\bOmega_0$ and $\bOmega_c$ are asymptotically indistinguishable when $M_p=o(\sqrt p)$, by part (a) of Proposition \ref{prop:omega_c} we expect that $\widehat\bSigma_c\bOmega_0$, similar to $\widehat\bSigma_c\bOmega_c$, would concentrate around $\bG$. The following lemma shows that this is indeed the case.

\begin{lemma}\label{lem:conc}
Suppose that $\bOmega_0\in\cU_q(s_0(p),M_p)$ with $M_p=o(\sqrt p)$. Under Condition \ref{cond:subG} (or Condition \ref{cond:poly}), for any $\xi>0$, there exists some constant $C_0>0$ such that
\[
\|\widehat\bSigma_c\bOmega_0-\bG\|_{\max}\le C_0\biggl(\sqrt{\frac{\log p}{n}}+\frac{M_p}{\sqrt p}\biggr)
\]
with probability at least $1-O(p^{-\xi})$ (or $1-O(p^{-\xi/2}+n^{-\ve/8})$).
\end{lemma}

Lemma \ref{lem:conc} allows us to construct a constrained $\ell_1$-minimization procedure for estimating $\bOmega_0$ in a similar spirit to the CLIME approach of \citet{cai2011}. More specifically, we consider the following optimization problem:
\begin{equation}\label{eq:opt}
\text{minimize }\|\bOmega\|_1\quad\text{subject to}\quad\|\widehat\bSigma_c\bOmega-\bG\|_{\max}\le\lambda,
\end{equation}
where $\lambda>0$ is a tuning parameter and can be easily chosen to ensure the feasibility of $\bOmega_0$ provided that $M_p=o(\sqrt p)$. It is important to note that the second term $M_p/\sqrt p$ of the error bound in Lemma \ref{lem:conc} is due to the departure of $\bOmega_c$ from $\bOmega_0$. The main differences from the case of unconstrained data are that problem \eqref{eq:opt} incorporates the identifiability gap between $\bOmega_0$ and $\bOmega_c$, and that the matrix $\bG$ plays the role of the identity matrix. The overall optimization problem \eqref{eq:opt} can be further decomposed into $p$ independent vector minimization problems. As a result, the estimator $\widetilde\bOmega=(\tilde\omega_{ij})=(\widetilde\bomega_1,\dots,\widetilde\bomega_p)$ is defined through the solutions $\widetilde\bomega_j$ to the optimization problems
\begin{equation}\label{eq:opt_col}
\text{minimize }\|\bomega_j\|_1\quad\text{subject to}\quad\left\|\widehat\bSigma_c\bomega_j-\left(\be_j-\frac{\bone_p}{p}\right)\right\|_{\max}\le\lambda_j,
\end{equation}
where $\lambda_j>0$ are tuning parameters. To adapt to the variability and sparsity of the columns of $\bOmega_0$, $\lambda_j$ are allowed to vary from column to column, thus differing from the CLIME method where the tuning parameter is fixed. In practice, $\lambda_j$ can be chosen by cross-validation, which will be further discussed in Section \ref{sec:cv}. Finally, we define our CARE estimator $\widehat\bOmega=(\hat\omega_{ij})$ by symmetrizing $\widetilde\bOmega$:
\[
\hat\omega_{ij}=\hat\omega_{ji}=\tilde\omega_{ij}I(|\tilde\omega_{ij}|\le|\tilde\omega_{ji}|)+\tilde\omega_{ji} I(|\tilde\omega_{ij}|>|\tilde\omega_{ji}|).
\]

It is worth mentioning that our methodology differ from those of \citet{cao2019} and \citet{cai2016} in some important ways. First, while the COAT method is optimization-free, our method requires solving an optimization problem, whose dependence on the basis--composition relationship is implicit. This would make the theoretical development technically more challenging. Second, our method consists of a single-step constrained $\ell_1$-minimization procedure with column-specific tuning parameters, which departs markedly from the two-step ACLIME procedure and adapts better to the identification error in our problem. An alternative method that follows the ACLIME approach more closely is discussed in Supplementary Section \ref{sec:aclime}.

\section{Theoretical Properties}\label{sec:theory}
We investigate the theoretical properties of the CARE estimator. In Section \ref{sec:rate_supp}, we derive rates of convergence under various matrix norms and provide a support recovery guarantee for the estimator $\widehat\bOmega$. Data-driven choice of the tuning parameters $\lambda_j$ is justified in Section \ref{sec:cv}. In deriving the rates of convergence, we explicitly characterize the impact of dimensionality on the degree of identifiability and decompose the total error into an estimation error and an identification error. Results of this kind are rare in the literature, but see \citet{Fan.etal2013} and \citet{cao2019}.

\subsection{Rates of Convergence and Support Recovery}\label{sec:rate_supp}
We first present the rates of convergence for the estimator $\widehat\bOmega$ under different matrix norms.

\begin{theorem}\label{thm:omega}
Suppose that $\bOmega_0\in\cU_q(s_0(p),M_p)$ with $M_p=o(\sqrt p)$ and the tuning parameters are chosen as $\lambda_j\asymp\sqrt{(\log p)/n}+M_p/\sqrt p$ for all $j$. Under Condition \ref{cond:subG} (or Condition \ref{cond:poly}), for any $\xi>0$, there exists some constant $C>0$ such that the estimator $\widehat\bOmega$ satisfies
\begin{align*}
\|\widehat\bOmega-\bOmega_0\|_{\max}&\le C\biggl(M_p\sqrt{\frac{\log p}{n}}+\frac{M_p^2}{\sqrt p}\biggr),\\
\|\widehat\bOmega-\bOmega_0\|_{L_1}&\le Cs_0(p)\biggl(M_p\sqrt{\frac{\log p}{n}}+\frac{M_p^2}{\sqrt p}\biggr)^{1-q},\\
\frac{1}{p}\|\widehat\bOmega-\bOmega_0\|_F^2&\le Cs_0(p)\biggl(M_p\sqrt{\frac{\log p}{n}}+\frac{M_p^2}{\sqrt p}\biggr)^{2-q}
\end{align*}
with probability at least $1-O(p^{-\xi})$ (or $1-O(p^{-\xi/2}+n^{-\ve/8})$).
\end{theorem}

Note that the error bound under the matrix $\ell_1$-norm also holds under any matrix $\ell_r$-norm with $1\le r\le\infty$, since $\widehat\bOmega$ and $\bOmega_0$ are symmetric \citep[Lemma 7.2]{cai2016}. All the error bounds in Theorem \ref{thm:omega} have an appealing form that decomposes into two terms. The first term represents the estimation error due to the estimation of $\bOmega_c$ based on the observed data, while the second term arises from the identification error between $\bOmega_c$ and $\bOmega_0$. Note that the latter is due to the intrinsic difficulty of nonidentifiability in estimating $\bOmega_0$; as a result, it cannot be eliminated by any other methods. Theoretically, this can be seen from Proposition \ref{prop:ident}, where the identification error between $\bOmega_c$ and $\bOmega_0$ is bounded away from zero. Moreover, it is important to note that the dimension $p$ plays opposite roles in these two terms: it contributes a factor of $\log p$ to the former but a factor of $p^{-1/2}$ to the latter. This leads to a clear trade-off between identification and estimation, or a trade-off between the blessing and the curse of dimensionality, a striking phenomenon not observed in precision matrix estimation for unconstrained data. The quantity $M_p$ also has a role to play in the trade-off and has a larger impact on the identification error than on the estimation error. As $p$ becomes sufficiently large, the estimation error will dominate the identification error, as summarized in the following corollary.

\begin{corollary}\label{cor:minimax}
Assume that the conditions of Theorem \ref{thm:omega} hold. If $M_p=o(\sqrt{p(\log p)/n})$, then the estimator $\widehat\bOmega$ satisfies
\begin{align*}
\|\widehat\bOmega-\bOmega_0\|_{\max}&\le CM_p\sqrt{\frac{\log p}{n}},\\
\|\widehat\bOmega-\bOmega_0\|_{L_1}&\le CM_p^{1-q}s_0(p)\biggl(\frac{\log p}{n}\biggr)^{(1-q)/2},\\
\frac{1}{p}\|\widehat\bOmega-\bOmega_0\|_F^2&\le CM_p^{2-q}s_0(p)\biggl(\frac{\log p}{n}\biggr)^{1-q/2}
\end{align*}
with probability at least $1-O(p^{-\xi})$ or $1-O(p^{-\xi/2}+n^{-\ve/8})$.
\end{corollary}

Corollary \ref{cor:minimax} requires that $M_p=o(\sqrt{p(\log p)/n})$, which is stronger than the identifiability condition $M_p=o(\sqrt{p})$ and suggests a phase transition phenomenon for optimal estimation. In particular, the resulting rates of convergence under the matrix $\ell_r$-norm with $1\le r\le\infty$ and Frobenius norm match the minimax optimal rates for unconstrained data \citep{cai2016}. This indicates that, in sufficiently high dimensions, the CARE estimator performs as well as if the basis were observed, highlighting the blessing of dimensionality in compositional data analysis.

Based on the entrywise $\ell_{\infty}$-error bound in Theorem \ref{thm:omega}, we define the hard-thresholded estimator
$\widehat\bOmega^t=(\hat\omega_{ij}^t)$ with $\hat\omega_{ij}^t=\hat\omega_{ij}I(|\hat\omega_{ij}|>\tau_{np})$, where $\tau_{np}=C(M_p\sqrt{(\log p)/n}+M_p^2/\sqrt p)$. Our next result shows that, under an additional minimum signal assumption, the hard-thresholded estimator $\widehat\bOmega^t$ recovers the support of $\bOmega_0$ successfully.

\begin{theorem}\label{thm:supp}
Assume that the conditions of Theorem \ref{thm:omega} hold. If $\min_{(i,j):\omega_{ij}^0\ne0}|\omega_{ij}^0|>2\tau_{np}$, then the estimator $\widehat\bOmega^t$ satisfies
\[
\sgn(\hat\omega_{ij}^t)=\sgn(\omega_{ij}^0)
\]
for all $i,j=1,\dots,p$ with probability at least $1-O(p^{-\xi})$ or $1-O(p^{-\xi/2}+n^{-\ve/8})$.
\end{theorem}

\subsection{Data-Driven Choice of $\lambda_j$}\label{sec:cv}
We describe the cross-validation procedure for choosing the tuning parameters $\lambda_j$ in the $p$ vector optimization problems \eqref{eq:opt_col}. We randomly split the whole dataset into a training set with sample size $n_1$ and a test set with sample size $n_2=n-n_1$, where $n_1\asymp n_2\asymp n$. To measure the predictive performance of $\widetilde\bomega_j$, we consider the loss function
\begin{equation}\label{eq:loss}
L(\bomega_j,\bSigma_c)=\frac{1}{2}\bomega_j^T\bSigma_c\bomega_j-\left(\be_j-\frac{\bone_p}{p}\right)^T\bomega_j,
\end{equation}
whose derivative $\partial L/\partial\bomega_j=\bSigma_c\bomega_j-(\be_j-\bone_p/p)$ corresponds to the constraint in problem \eqref{eq:opt_col}. In a similar spirit to the D-trace loss of \citet{zhang2014}, we write the loss function \eqref{eq:loss} in matrix form as
\[
L(\bOmega,\bSigma_c)=\frac{1}{2}\tr(\bOmega\bSigma_c\bOmega)-\tr(\bG\bOmega).
\]
There is a subtle connection between $L(\bOmega,\bSigma_c)$ and the CD-trace loss function \citep{yuan2019}
\[
L_D(\bOmega,\bSigma_0)=\frac{1}{2}\tr(\bG\bOmega\bG\bSigma_0\bG\bOmega)-\tr(\bG\bOmega),
\]
where $\bOmega$ is a positive definite basis precision matrix. This loss function is minimized at $\bOmega_0$ only when the exchangeability condition $\bG\bSigma_0=\bSigma_0\bG$ holds. Similarly, if $\bG\bOmega_0=\bOmega_0\bG$, from $\bG\bSigma_0\bG=\bSigma_c$ and $\bG^2=\bG$ we see that the two loss functions coincide at $\bOmega_0$.

For $j=1,\dots,p$ and the $b$th split, denote by $\widehat\bomega_j^{(1b)}(\lambda_j)$ the estimator with tuning parameter $\lambda_j$ based on the training set, and by $\widehat\bSigma_c^{(2b)}$ the sample centered log-ratio covariance matrix based on the test set.  We choose the optimal value $\hat\lambda_j$ of $\lambda_j$ by minimizing
\[
\CV(\lambda_j)=\frac{1}{B}\sum_{b=1}^BL\{\widehat\bomega_j^{(1b)}(\lambda_j),\widehat\bSigma_c^{(2b)}\},
\]
and compute the estimator $\widehat\bomega_j(\hat\lambda_j)$ based on the whole dataset. Let $\widehat\blambda=(\hat\lambda_1,\dots,\hat\lambda_j)$ and then the data-driven estimator $\widehat\bOmega(\widehat\blambda)$ is obtained by combining $\widehat\bomega_j(\hat\lambda_j)$ for all columns. The procedure searches for the optimal $\hat\lambda_j$ through a grid of points $\delta_j\ell/N$, $\ell=1,\dots,N$, for some sufficiently large $\delta_j>0$. In practice, $\delta_j$ is set to the smallest value such that $\widehat\bomega_j=\bzero$; it is seen immediately from problem \eqref{eq:opt_col} that such a choice is $\delta_j=1-p^{-1}$. To establish theoretical guarantees for the data-driven estimator, we consider for simplicity the case of $B=1$ and omit the superscript $b$. The rate of convergence for the data-driven estimator $\widehat\bOmega^{(1)}(\widehat\blambda)= (\widehat\bomega_1^{(1)}(\hat\lambda_1),\dots,\widehat\bomega_p^{(1)}(\hat\lambda_p))$ is given by the following result.

\begin{theorem}\label{thm:cv}
Assume that Condition \ref{cond:subG} holds, $\bOmega_0\in\cU_q(s_0(p),M_p)$ with $M_p=o(\sqrt p)$, $\log N=o(n)$, and $\min_{1\le j\le p}\delta_j\ge NC_0(\sqrt{(\log p)/n}+M_p/\sqrt p)$ for $C_0>0$ defined in Lemma \ref{lem:conc}. Then the data-driven estimator $\widehat\bOmega^{(1)}(\widehat\blambda)$ satisfies
\[
\frac{1}{p}\|\widehat\bOmega^{(1)}(\widehat\blambda)-\bOmega_0\|_F^2=O_p\biggl\{s_0(p)\biggl(M_p\sqrt{\frac{\log p}{n}}+\frac{M_p^2}{\sqrt p}\biggr)^{2-q}\biggr\}.
\]
\end{theorem}

\section{Extensions to Data Containing Zeros}
We have so far confined our discussion to data lying in the strictly positive simplex. In many applications, however, the compositions are not directly observable but instead estimated from count data. When the count data are sparse, as is the case in microbiome studies, the compositional data obtained by simple normalization may contain an abundance of zeros and thus prevent the direct application of log-ratio transformations. In such situations, it is useful to distinguish between sampling zeros, which are due to undersampling of rare yet present taxa, and structural zeros, which represent truly absent taxa regardless of the sampling or sequencing depth \citetext{\citealp{agresti2013}, Sec.\ 10.6}. We describe how our framework can be extended to accommodate these two types of zeros in Sections \ref{ssec:sampl} and \ref{sec:struct}.

\subsection{Sampling Zeros}\label{ssec:sampl}
In the presence of only sampling zeros, the underlying proportions are still positive and can be estimated by many existing methods, ranging from easy-to-implement zero replacement and variable correction procedures \citep{shi2022} to more sophisticated approaches that borrow information across samples and taxa \citep{Zhang.Lin2019,cao2020}. This amounts to observing a noisy version of the compositional data, which are subject to measurement error that should be accounted for in subsequent analysis. Since the optimization problem \eqref{eq:opt} depends on the data only through $\widehat\bSigma_c$, it suffices to consider a perturbed covariance matrix $\check\bSigma_c=\widehat\bSigma_c+\bE$. For simplicity, consider the CARE estimator $\widehat\bOmega\equiv\widehat\bOmega(\lambda)$ with $\lambda_j=\lambda$ for all $j$, and let $\check\bOmega$ be the perturbed version with $\widehat\bSigma_c$ replaced by $\check\bSigma_c$. Let $\lambda_0=\|\widehat\bSigma_c\bOmega_0-\bG\|_{\max}$. The following perturbation result characterizes the impact of the measurement error $\bE$ on the estimator $\check\bOmega$.

\begin{theorem}\label{thm:perturb}
Suppose that $\bOmega_0\in\cU_q(s_0(p),M_p)$. If $\lambda=\lambda_0+M_p\|\bE\|_{\max}$, then the estimator $\check\bOmega$ satisfies
\[
\|\check\bOmega-\widehat\bOmega\|_{\max}\le 2M_p(\lambda_0+p^{-1})+3M_p^2\|\bE\|_{\max}.
\]
\end{theorem}

It is readily seen from Theorem \ref{thm:perturb} that as long as $\bE$ is small enough so that $M_p\|\bE\|_{\max}=O(\lambda_0)$, the second term in the perturbation bound is small. Furthermore, by Lemma \ref{lem:conc} we can choose $\lambda_0\asymp\sqrt{(\log p)/n}+M_p/\sqrt{p}$ with high probability, so that the term $p^{-1}$ is also negligible. Then the bound becomes
\[
\|\check\bOmega-\widehat\bOmega\|_{\max}=O\biggl(M_p\sqrt{\frac{\log p}{n}}+\frac{M_p^2}{\sqrt p}\biggr),
\]
matching the entrywise $\ell_\infty$-error bound in Theorem \ref{thm:omega}. This indicates that the measurement error would have no notable effect on the performance of the CARE estimator.

\subsection{Structural Zeros}\label{sec:struct}
Structural zeros may occur when the frequency of zeros in the count data is higher than the sampling model predicts. A popular approach to this problem is zero-inflated models, which are mixtures of a point mass at zero (zero part) and a count data distribution with positive mean (nondegenerate part). One can, in principle, use a zero-inflated model for multivariate count data \citep[e.g.,][]{tang2019,xu2021,zeng2022} to obtain strictly positive composition estimates from the nondegenerate part of the model; our approach can then be applied to these positive estimates as described earlier. The low-rank assumptions underlying some of these models, however, may lead to rank-deficient covariance structures, which render the estimation of a sparse precision matrix impossible. A remedy for the rank degeneracy problem is to construct an unbiased estimator of the centered log-ratio covariance matrix by inverse probability weighting. More specifically, suppose that $X_{kp}>0$ for all $k$ and let $\bZ_k^{(p)}=(Z_{k1}^{(p)},\dots,Z_{k,p-1}^{(p)})^T$ with $Z_{kj}^{(p)}=\log(X_{kj}/X_{kp})$ be the additive log-ratio transformation of $\bX_k$. Let $\Delta_{kj}$ be the latent variable indicating whether $X_{kj}$ comes from the nondegenerate part, and $\pi_j$ the corresponding probability. In practice, $\Delta_{kj}$ and $\pi_j$ are unknown but can be estimated from the adopted zero-inflated model. The covariance matrix $\bSigma_{(p)}$ of $\bZ_k^{(p)}$ can then be estimated by $\widehat\bSigma_{(p)}=(\hat\sigma_{ij}^{(p)})$ with
\[
\hat\sigma_{ij}^{(p)}=\frac{1}{n}\sum_{k=1}^n\frac{\Delta_{ki}\Delta_{kj}}{\pi_{ij}}Z_{ki}^{(p)}Z_{kj}^{(p)},
\]
where $\pi_{ij}=\pi_i\pi_j$ if $i\ne j$ and $\pi_i$ otherwise. By the relationship between $\bSigma_c$ and $\bSigma_{(p)}$ \citep[eq.\ (4.28)]{aitchison2003}, we obtain an unbiased estimator of $\bSigma_c$:
\[
\check\bSigma_c=\bF^T\bH^{-1}\widehat\bSigma_{(p)}\bH^{-1}\bF,
\]
where $\bF=(\bI_{p-1},-\bone_{p-1})$ and $\bH=\bI_{p-1}+\bone_{p-1}\bone_{p-1}^T$. Now we can substitute $\check\bSigma_c$ for $\widehat\bSigma_c$ in the optimization problem \eqref{eq:opt_col}. The performance of this extension, of course, hinges on the quality of the estimates of $\Delta_{kj}$ and $\pi_j$. A detailed investigation is needed but beyond the scope of this article.

\subsection{Limitations}
Despite the extensions outlined above, important limitations of our method exist. First, since structural zeros cannot be handled by the log transformation per se, our method should be combined with a zero-inflated model to estimate the covariance structure of its nondegenerate part. More generally, the covariance structure of the zero part and the interaction between the zero part and the nondegenerate part are also of interest, whose modeling remains an open problem. Second, our measurement error framework for dealing with sampling zeros requires the perturbation term $\bE$ to be sufficiently small, which may not be satisfied when the true proportions are extremely small and the total counts are limited. Developing more robust and efficient methods for these scenarios is an interesting topic.

\section{Simulation Studies}
In this section, we examine the numerical performance of the CARE estimator through simulation studies under different scenarios and compare it with a variety of existing methods. To implement our procedure, by splitting $\bomega_j$ into a positive part $\bomega_j^+$ and a negative part $\bomega_j^-$, we reformulate problem \eqref{eq:opt_col} as the following linear programming problem:
\begin{gather*}
\begin{aligned}
&\text{minimize }\bone_p^T(\bomega_j^++\bomega_j^-)\quad\text{subject to}\\
&\Biggl(\begin{matrix}
\widehat\bSigma_c&-\widehat\bSigma_c\\
-\widehat\bSigma_c&\widehat\bSigma_c
\end{matrix}\Biggr)
\Biggl(\begin{matrix}
\bomega_j^+\\
\bomega_j^-
\end{matrix}\Biggr)
\le
\Biggl(\begin{matrix}
\lambda_j\bone_p+\be_j-\bone_p/p\\
\lambda_j\bone_p-\be_j+\bone_p/p
\end{matrix}\Biggr),
\end{aligned}\\
\bomega_j^+\ge\bzero,\quad\bomega_j^-\ge\bzero,
\end{gather*}
where the inequalities apply componentwise. These problems have the same size as, and hence comparable computational efficiency to, those in the CLIME procedure, and can be solved by many existing algorithms. Here we use an implementation of the parametric simplex method \citep{vanderbei2020}, which produces the entire solution path by solving the linear programming problem only once. Compared with interior-point and simplex methods, the parametric simplex method is more efficient for large-scale sparse learning problems \citep{pang2017}.

We compare our estimator to three previously proposed methods: CD-trace \citep{yuan2019}, gCoda \citep{fang2017}, and SPIEC-EASI with the graphical lasso \citep{kurtz2015}. The oracle estimator, which is obtained by applying the CLIME approach to the basis as it were observed, is also included for comparison. This ideal estimator serves as the benchmark for assessing whether the compositionality effect has been removed.

\subsection{Simulation Results for Compositional Data}
We first consider the case with positive compositional data and no zero correction is needed. The compositional data were generated as follows. First, we independently sampled $\bY_k=(Y_{k1},\dots,Y_{kp})^T$, $k=1,\dots,n$, from a multivariate normal distribution $N_p(\bzero,\bOmega_0^{-1})$. Then we generated $\bW_k=(W_{k1},\dots,W_{kp})^T$ and $\bX_k=(X_{k1},\dots,X_{kp})^T$, $k=1,\dots,n$, through the transformations $W_{kj}=\exp(Y_{kj})$ and $X_{kj}=W_{kj}/\sum_{i=1}^pW_{ki}$ for $j=1,\dots,p$. As a result, $\bW_k$ and $\bX_k$ follow multivariate log-normal and logistic-normal distributions, respectively. We considered the following four models for generating $\bOmega_0$ with diverse network structures.

\begin{compactenum}[(a)]
\item \emph{Band graph}: Let $\bOmega_1=(\omega_{ij}^1)$ and $\bOmega_0=\bOmega_1+(|\lambda_{\min}(\bOmega_1)|+0.01)\bI_p$, where
$\omega_{i,i+1}^1=\omega_{i+1,i}^1=0.8$, $\omega_{i,i+2}^1=\omega_{i+2,i}^1=0.5$, $\omega_{ij}^1=0$ for $|i-j|\ge3$, and $\omega_{ii}^1$ were drawn from a uniform distribution $U(1,2)$.
\item \emph{Hub graph}: The $p$ nodes were divided into blocks of size 5. For each block, one hub was selected and connected to the other nodes in the same block, and each edge weight was set to 0.8 or 0.5 with equal probability. The diagonal entries were set as in model (a).
\item \emph{Block graph}: The $p$ nodes were equally divided into 5 blocks. Each pair of nodes in the same block were connected with probability $20/p$, and each edge weight was set as in model (b). The diagonal entries were set large enough as in model (a) so that  $\bOmega_0$ was positive definite.
\item \emph{Random graph}: Each pair of nodes were connected with probability $4/p$. The edge weights were set as in model (b) and the diagonal entries were set as in model (a).
\end{compactenum}

Throughout the simulations, we took the sample size $n=200$ and dimension $p=50,100,200,400$. Five performance measures are adopted: the spectral norm, matrix $\ell_1$-norm and Frobenius norm losses for evaluating the estimation accuracy, and the true positive and false positive rates for assessing the support recovery performance. For the CARE and oracle methods, the tuning parameters $\lambda_j$ were chosen by fivefold cross-validation.

\begin{table}
\caption{Means and standard errors (in parentheses) of performance measures for different methods in model (a) over 100 replications.}\label{tab:model_a}
\def~{\phantom{0}}
\begin{tabular*}{\textwidth}{@{}c*{5}{@{\extracolsep{\fill}}c}@{}}
\hline
& \multicolumn{5}{c}{Method}\\
\cline{2-6}
$p$ & CARE & Oracle & CD-trace & gCoda & SPIEC-EASI\\
\hline
\multicolumn{6}{c}{Spectral norm loss}\\
~50 & ~2.46 (0.12) & ~2.29 (0.12) & ~3.21 (0.03) & ~3.47 (0.05) & ~3.35 (0.04)\\
100 & ~2.70 (0.10) & ~2.59 (0.10) & ~3.24 (0.02) & ~3.62 (0.04) & ~3.44 (0.02)\\
200 & ~2.96 (0.07) & ~2.91 (0.08) & ~3.27 (0.02) & ~3.70 (0.02) & ~3.37 (0.02)\\
400 & ~3.25 (0.03) & ~3.24 (0.04) & ~3.34 (0.02) & ~3.76 (0.02) & ~3.50 (0.03)\\
\multicolumn{6}{c}{Matrix $\ell_1$-norm loss}\\
~50 & ~3.28 (0.20) & ~3.05 (0.23) & ~3.53 (0.06) & ~3.86 (0.06) & ~3.93 (0.06)\\
100 & ~3.45 (0.15) & ~3.36 (0.16) & ~3.61 (0.07) & ~3.98 (0.05) & ~3.89 (0.03)\\
200 & ~3.61 (0.12) & ~3.56 (0.11) & ~3.64 (0.05) & ~4.07 (0.03) & ~3.91 (0.03)\\
400 & ~3.77 (0.07) & ~3.76 (0.07) & ~3.76 (0.07) & ~4.18 (0.03) & ~3.99 (0.03)\\
\multicolumn{6}{c}{Frobenius norm loss}\\
~50 & ~6.79 (0.29) & ~6.31 (0.25) & ~9.83 (0.08) & 10.93 (0.21) & 10.37 (0.13)\\
100 & 10.30 (0.25) & ~9.84 (0.27) & 14.22 (0.07) & 16.35 (0.24) & 14.87 (0.04)\\
200 & 16.68 (0.19) & 16.33 (0.19) & 20.21 (0.07) & 23.98 (0.18) & 20.59 (0.07)\\
400 & 27.04 (0.15) & 26.85 (0.16) & 29.12 (0.12) & 34.50 (0.18) & 29.71 (0.33)\\
\multicolumn{6}{c}{True positive rate (\%)}\\
~50 &  90.5 (2.6)  &  94.5 (2.3)  &  74.9 (2.8)  &  74.5 (6.4)  &  54.8 (4.0)\\
100 &  89.1 (2.2)  &  91.8 (2.0)  &  60.4 (2.4)  &  49.2 (5.8)  &  65.6 (2.3)\\
200 &  83.7 (1.6)  &  85.9 (1.4)  &  60.9 (1.5)  &  39.6 (3.5)  &  66.9 (1.6)\\
400 &  68.2 (1.4)  &  69.6 (1.5)  &  53.0 (1.1)  &  34.2 (2.2)  &  65.4 (3.4)\\
\multicolumn{6}{c}{False positive rate (\%)}\\
~50 &  ~7.4 (0.7)  &  ~6.0 (0.7)  &  ~3.6 (0.2)  &  ~9.3 (1.0)  &  ~5.9 (0.4)\\
100 &  ~3.6 (0.3)  &  ~3.3 (0.2)  &  ~1.1 (0.1)  &  ~2.9 (0.4)  &  ~2.3 (0.1)\\
200 &  ~1.3 (0.1)  &  ~1.2 (0.1)  &  ~0.5 (0.0)  &  ~1.2 (0.1)  &  ~1.2 (0.1)\\
400 &  ~0.4 (0.0)  &  ~0.4 (0.0)  &  ~0.2 (0.0)  &  ~0.5 (0.1)  &  ~0.5 (0.1)\\
\hline
\end{tabular*}
\end{table}

\begin{table}
\caption{Means and standard errors (in parentheses) of performance measures for different methods in model (b) over 100 replications.}\label{tab:model_b}
\def~{\phantom{0}}
\begin{tabular*}{\textwidth}{@{}c*{5}{@{\extracolsep{\fill}}c}@{}}
\hline
& \multicolumn{5}{c}{Method}\\
\cline{2-6}
$p$ & CARE & Oracle & CD-trace & gCoda & SPIEC-EASI\\
\hline
\multicolumn{6}{c}{Spectral norm loss}\\
~50 & ~2.04 (0.17) & ~1.91 (0.20) & ~2.29 (0.11) & ~3.47 (0.07) & ~3.68 (0.09)\\
100 & ~1.99 (0.12) & ~1.92 (0.14) & ~2.23 (0.05) & ~3.77 (0.12) & ~3.55 (0.12) \\
200 & ~2.23 (0.15) & ~2.16 (0.12) & ~2.83 (0.09) & ~3.82 (0.06) & ~3.75 (0.04)\\
400 & ~2.44 (0.15) & ~2.36 (0.14) & ~2.94 (0.07) & ~4.02 (0.01) & ~3.43 (0.02) \\
\multicolumn{6}{c}{Matrix $\ell_1$-norm loss}\\
~50 & ~3.60 (0.35) & ~3.17 (0.43) & ~3.77 (0.22) & ~6.03 (0.18) & ~6.35 (0.16)\\
100 & ~4.22 (0.47) & ~3.76 (0.56) & ~4.38 (0.35) & ~7.97 (0.30) & ~7.21 (0.34)\\
200 & ~4.59 (0.42) & ~4.35 (0.33) & ~6.39 (0.23) & ~7.49 (0.06) & ~7.05 (0.11)\\
400 & ~5.69 (0.50) & ~5.48 (0.44) & ~7.15 (0.17) & ~8.69 (0.03) & ~6.91 (0.09)\\
\multicolumn{6}{c}{Frobenius norm loss}\\
~50 & ~5.49 (0.24) & ~5.18 (0.23) & ~6.13 (0.16) & 11.27 (0.31) & 13.16 (0.33)\\
100 & ~7.78 (0.23) & ~7.54 (0.24) & ~9.03 (0.14) & 16.47 (0.45) & 17.84 (0.43)\\
200 & 11.68 (0.25) & 11.45 (0.27) & 13.97 (0.13) & 25.03 (0.75) & 26.94 (0.04)\\
400 & 16.84 (0.12) & 16.69 (0.14) & 20.18 (0.10) & 34.12 (0.08) & 35.95 (0.02)\\
\multicolumn{6}{c}{True positive rate (\%)}\\
~50 &  87.6 (3.8)  &  89.4 (3.8)  &  79.7 (3.3)  &  56.9 (2.0)  &  54.4 (4.6)\\
100 &  85.8 (3.3)  &  87.4 (3.0)  &  65.6 (3.1)  &  30.5 (2.6)  &  63.6 (8.9)\\
200 &  81.7 (2.4)  &  83.2 (2.3)  &  53.4 (2.4)  &  14.8 (2.2)  &  71.9 (2.2)\\
400 &  82.4 (1.5)  &  83.1 (1.5)  &  48.5 (1.1)  &  10.2 (0.4)  &  90.4 (1.1)\\
\multicolumn{6}{c}{False positive rate (\%)}\\
~50 &  ~2.1 (0.5)  &  ~1.4 (0.5)  &  ~2.1 (0.3)  &  ~4.4 (0.4)  &  10.0 (1.8)\\
100 &  ~1.2 (0.2)  &  ~1.0 (0.2)  &  ~0.4 (0.1)  &  ~1.1 (0.1)  &  ~4.5 (0.4)\\
200 &  ~0.7 (0.1)  &  ~0.7 (0.1)  &  ~0.1 (0.0)  &  ~0.3 (0.0)  &  ~1.1 (0.1)\\
400 &  ~0.2 (0.0)  &  ~0.2 (0.0)  &  ~0.0 (0.0)  &  ~0.1 (0.0)  &  ~0.5 (0.0)\\
\hline
\end{tabular*}
\end{table}

The simulation results for models (a) and (b) over 100 replications are reported in Tables \ref{tab:model_a} and \ref{tab:model_b}, and those for models (c) and (d) in Supplementary Tables \ref{tab:model_c} and \ref{tab:model_d}. Overall, we see that the CARE and oracle methods perform nearly equally well and outperform the other three competitors by a large margin in almost all settings. In particular, the performance of CARE tends to be closer to that of the oracle estimator in higher dimensions, confirming the blessing of dimensionality revealed by our theory. Among the three previously proposed methods, CD-trace seems to outperform gCoda and SPIEC-EASI in terms of estimation accuracy. In addition, SPIEC-EASI appears to have some advantages for recovering more edges in high dimensions over CD-trace and gCoda, but it does so at the expense of an inflated false positive rate.

\begin{figure}[!t]
\centering
\includegraphics[width=.25\textwidth]{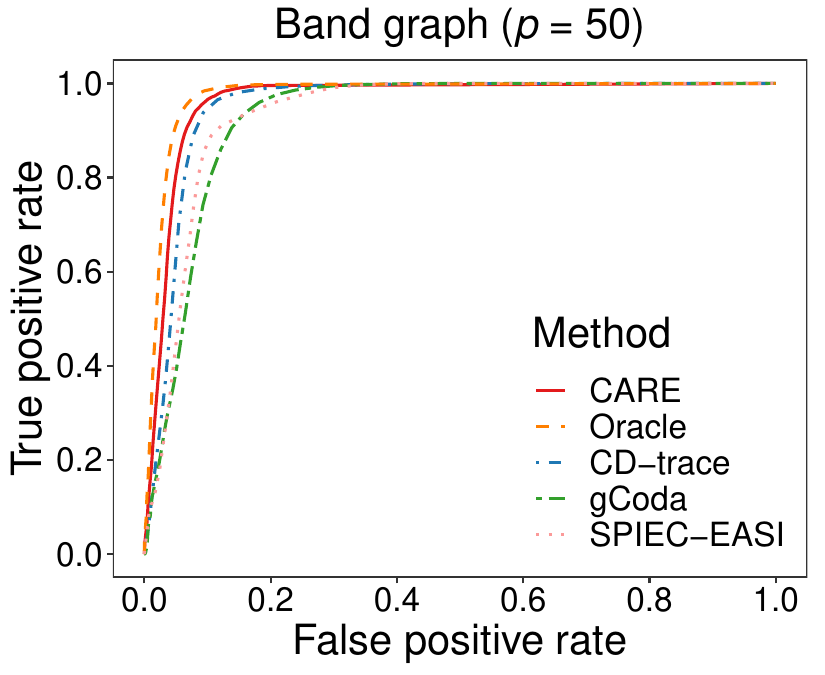}%
\includegraphics[width=.25\textwidth]{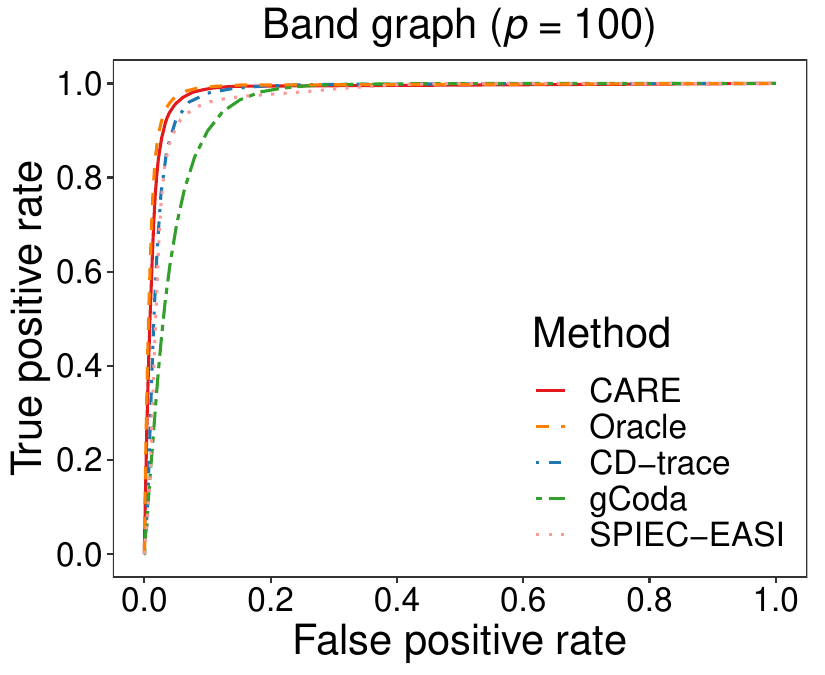}%
\includegraphics[width=.25\textwidth]{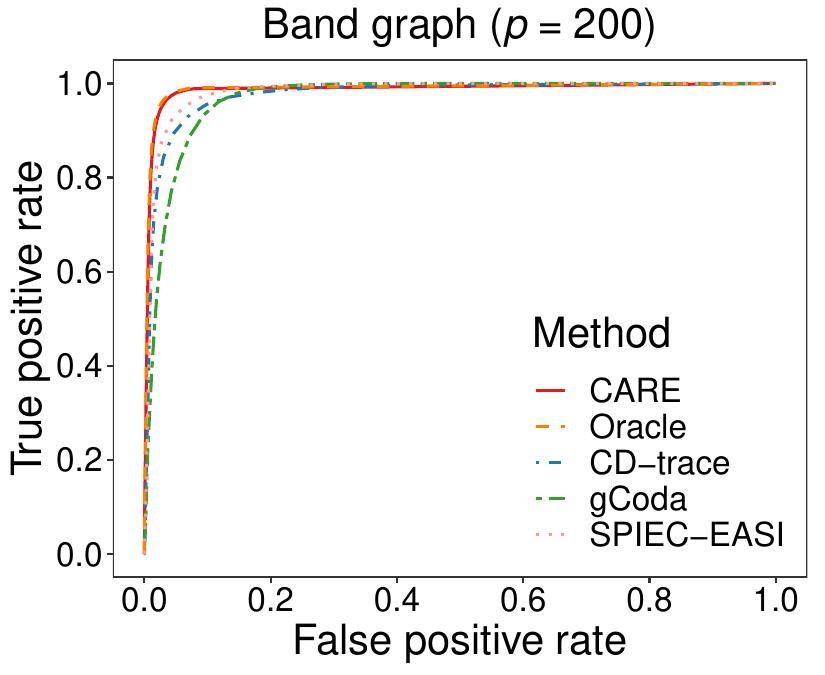}%
\includegraphics[width=.25\textwidth]{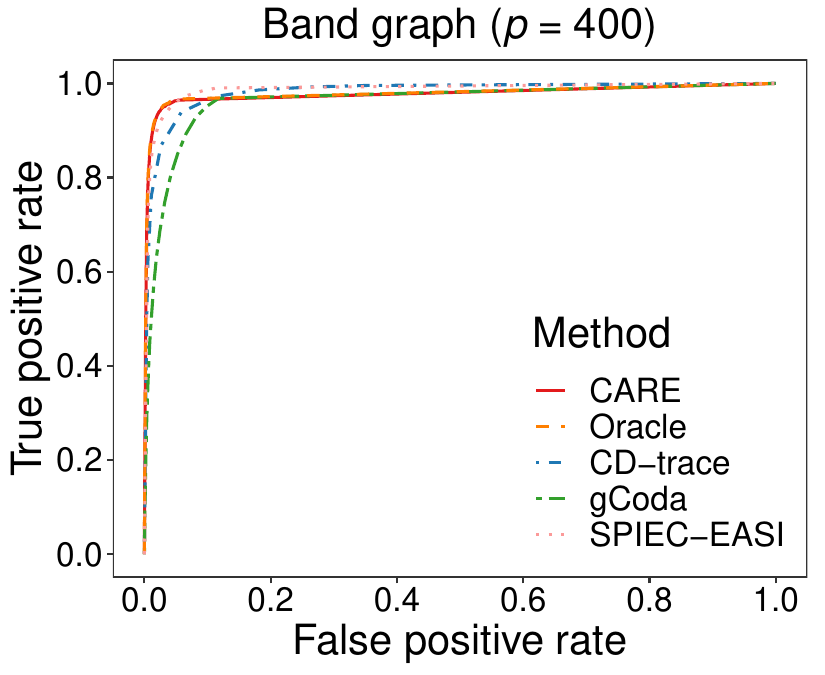}\\
\includegraphics[width=.25\textwidth]{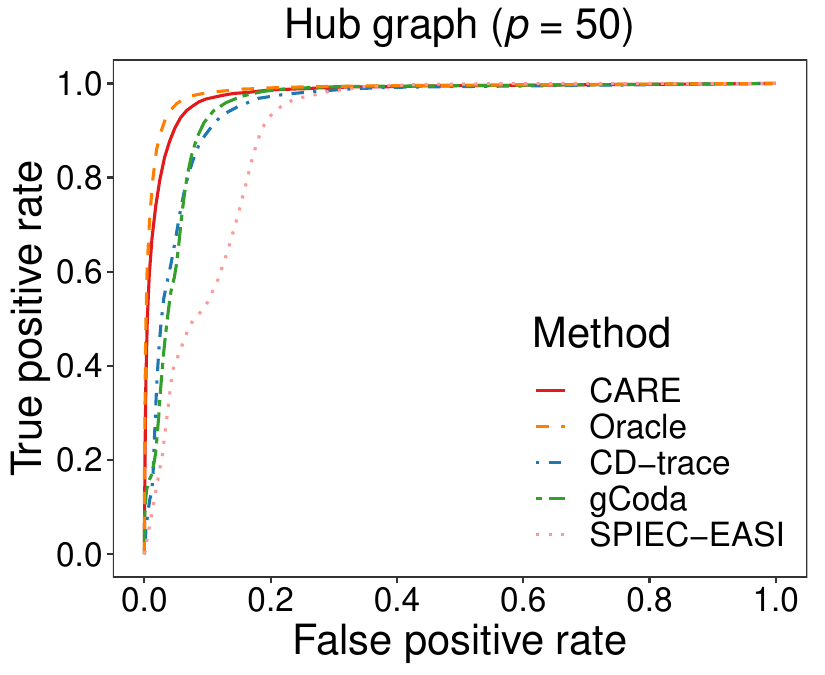}%
\includegraphics[width=.25\textwidth]{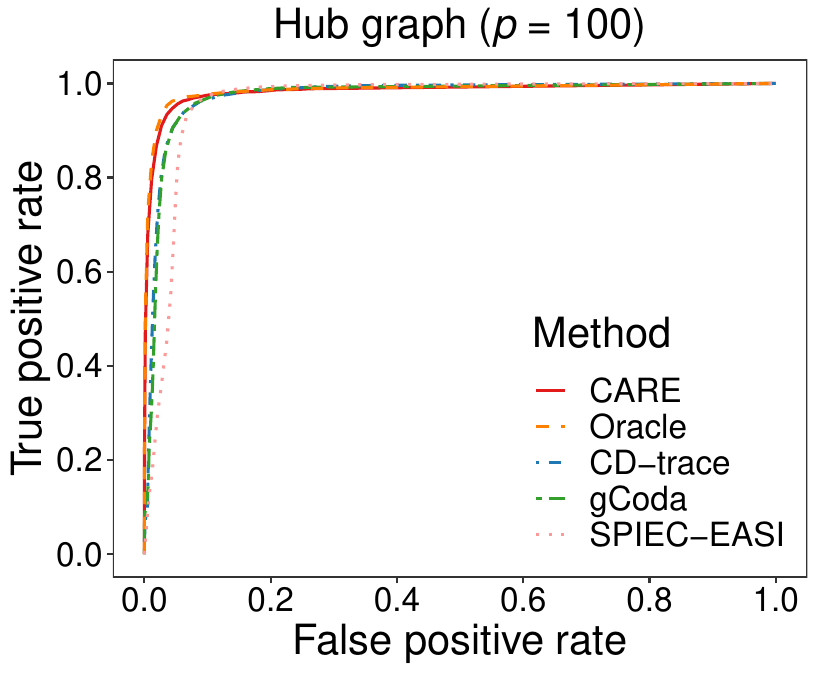}%
\includegraphics[width=.25\textwidth]{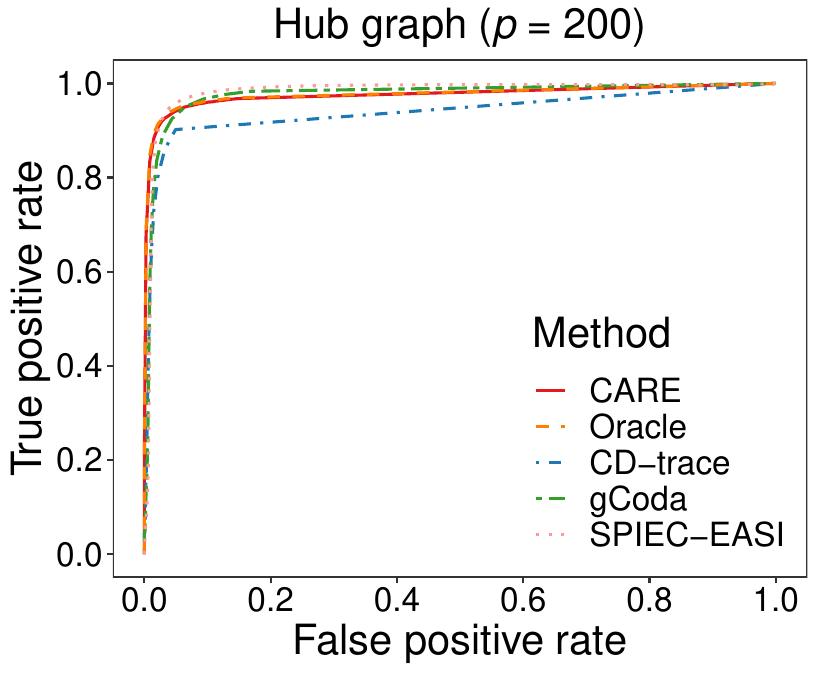}%
\includegraphics[width=.25\textwidth]{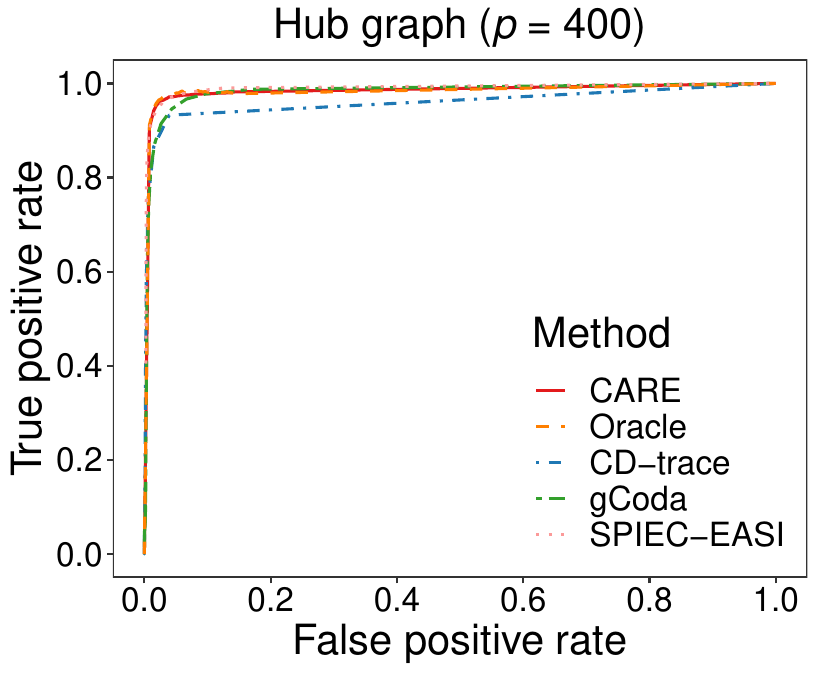}\\
\includegraphics[width=.25\textwidth]{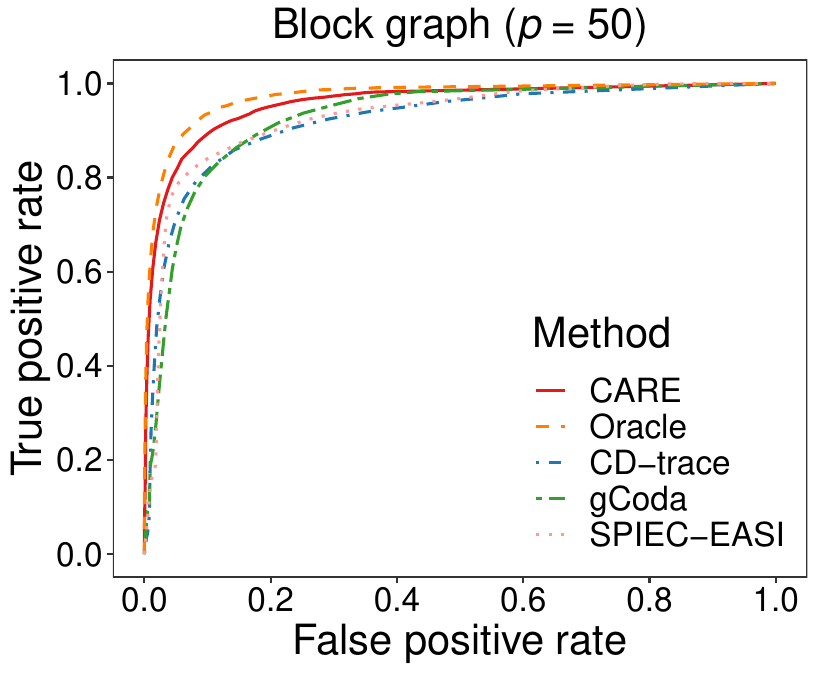}%
\includegraphics[width=.25\textwidth]{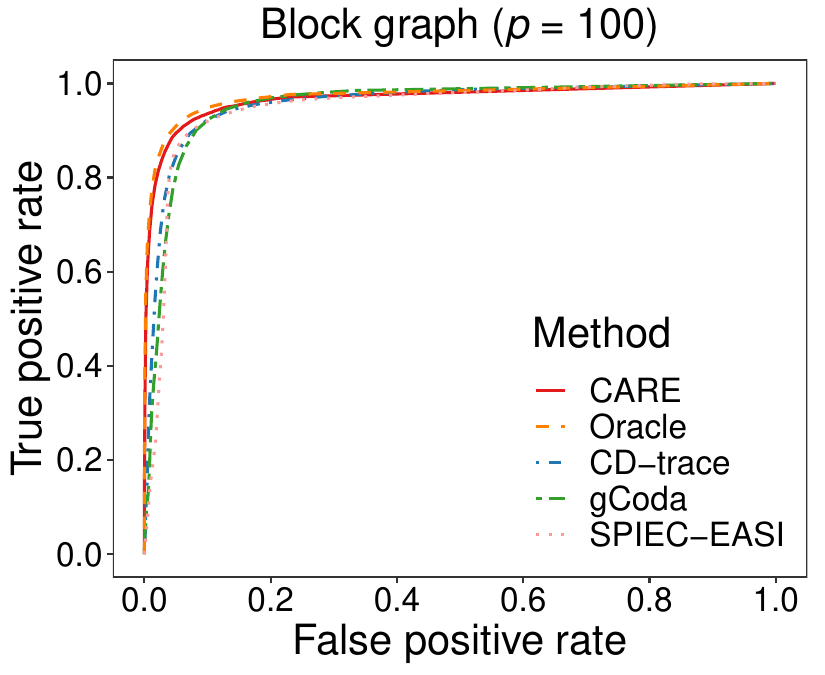}%
\includegraphics[width=.25\textwidth]{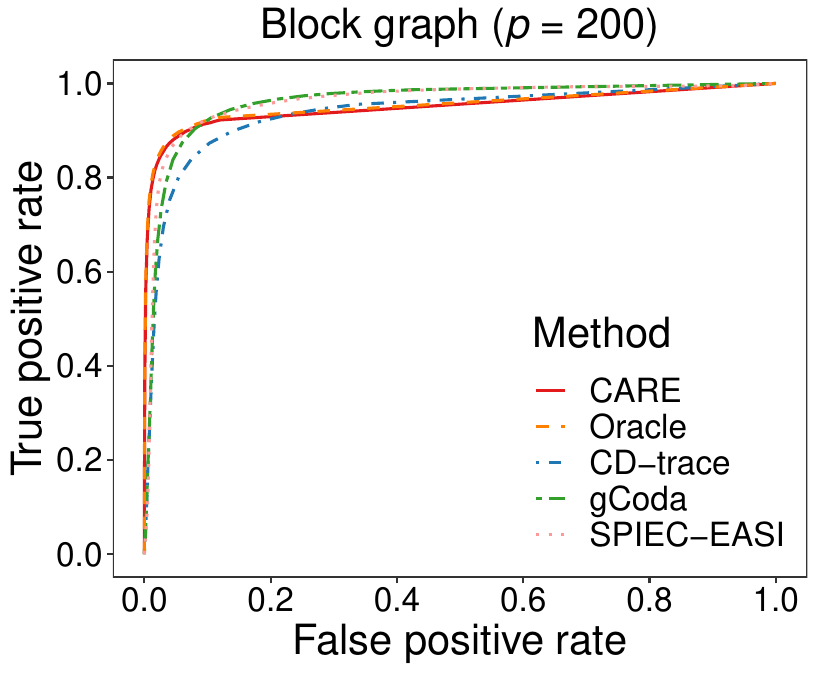}%
\includegraphics[width=.25\textwidth]{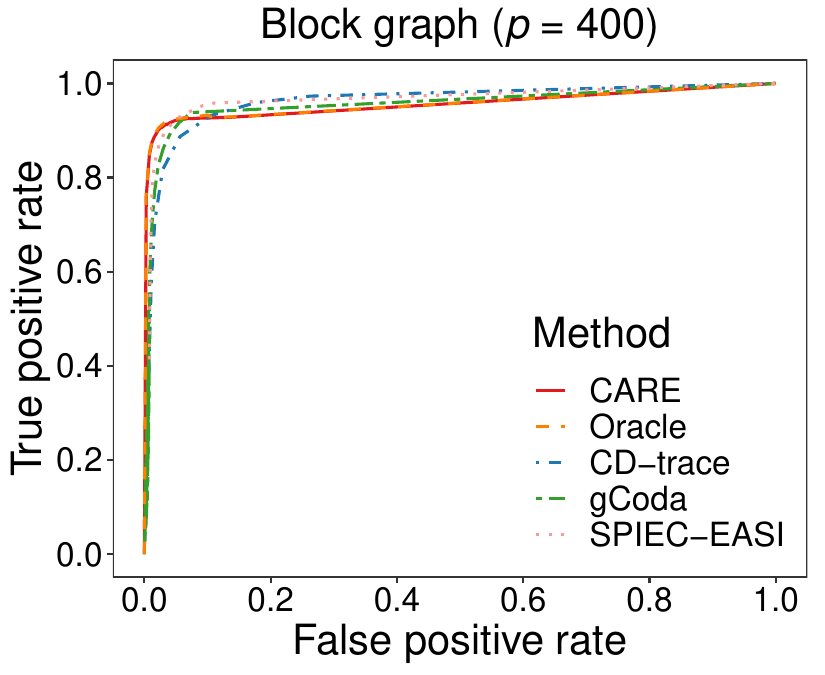}\\
\includegraphics[width=.25\textwidth]{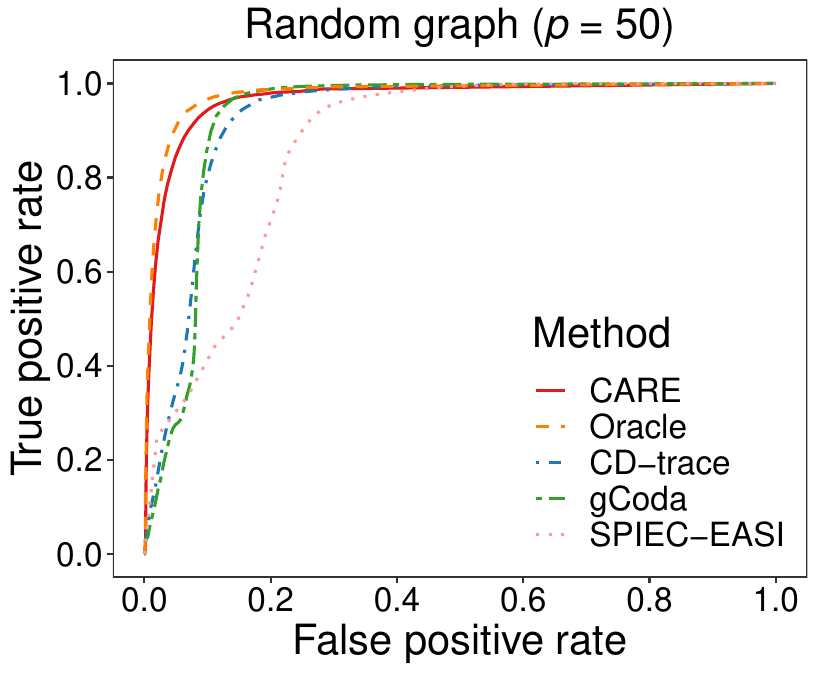}%
\includegraphics[width=.25\textwidth]{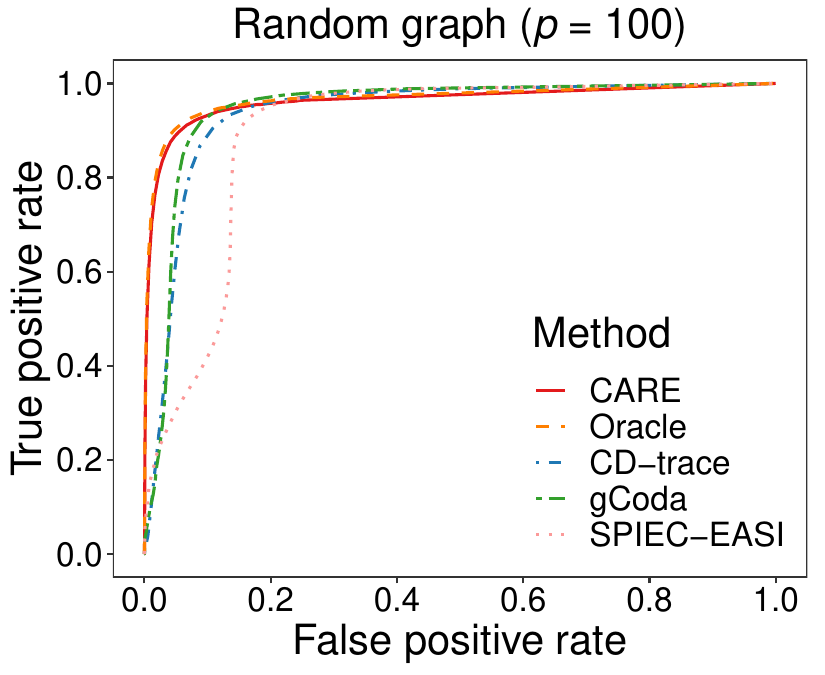}%
\includegraphics[width=.25\textwidth]{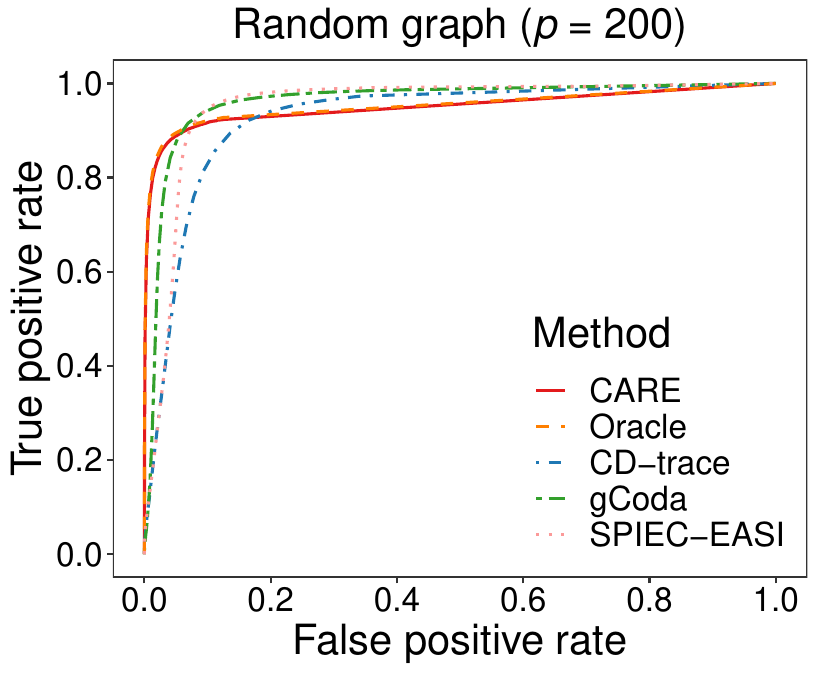}%
\includegraphics[width=.25\textwidth]{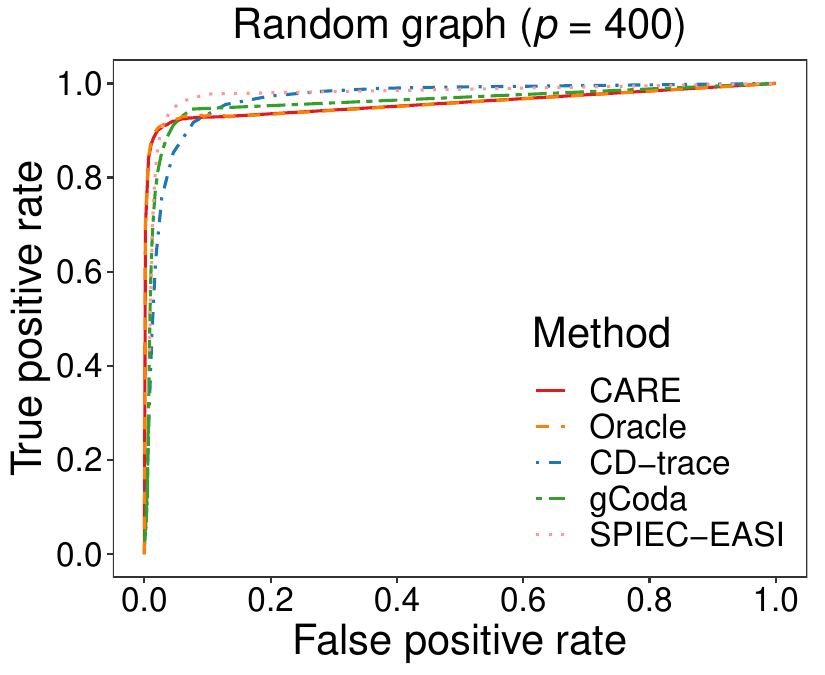}
\caption{The ROC curves for different methods in models (a)--(d) with $p=50,100,200,400$.}\label{fig:roc}
\end{figure}

To compare the support recovery performance without using a specific tuning approach, the receiver operating characteristic (ROC) curves for all methods under models (a)--(d) are shown in Figure \ref{fig:roc}. We observe that the ROC curves of the CARE and oracle methods dominate those of the other three methods uniformly in low to moderate dimensions, as well as in high dimensions when the false positive rates are small. These results embody the superiority of CARE in terms of support recovery. Moreover, SPIEC-EASI performs better for models (a) and (c) than for models (b) and (d) in low dimensions, indicating that its performance may depend more critically on the network topology. The higher right tails of ROC curves for the competing methods in certain high-dimensional settings may not be relevant in practice, since only the sparse regime is of interest.

\subsection{Simulation Results for Count Data}
To illustrate how our method work with zeros, we next examine the case where the generated data are counts and include many sampling zeros. We adopted the graph structure of model (a) and the same settings as before for generating $(\bY_k, \bX_k)$, $k=1,\dots,n$, except that the mean vector of $\bY_k$ was drawn from a uniform distribution $U(0,5)$ to allow a higher diversity. The count data were then generated from a multinomial distribution $\text{Mult}(m_k,\bX_k)$, where $m_k$ were sampled uniformly from $15p,\dots,15p+500$ and $p,\dots,2p$, resulting in about 35\% and 70\% zero counts, respectively, for $p=50$ and 100.

The following methods are included in our comparisons: CARE using the compositional data directly (Comp+CARE), CARE, CD-trace, and SPIEC-EASI using composition estimates from the logistic normal multinomial model of \citet{Zhang.Lin2019} (LNM+CARE, LNM+CD-trace, and LNM+SPIEC-EASI, respectively), and CARE using the $+0.5$ variable correction procedure of \citet{shi2022} (VC+CARE). The simulation results reported in Table \ref{tab:count} suggest that the proposed method works reasonably well with count data that contain many zeros, with the performance diminishing as the proportion of zeros increases. Moreover, model-based, more computationally intensive methods for composition estimation such as LNM tend to improve the performance of our method. All combined with LNM, CARE still exhibits a performance gap over CD-trace and SPIEC-EASI by identifying substantially more true positives.

\begin{table}
\caption{Simulation results for count data: means and standard errors (in parentheses) of performance measures for different methods in model (a) over 100 replications.}\label{tab:count}
\def~{\phantom{0}}
\begin{tabular*}{\textwidth}{@{}c*{6}{@{\extracolsep{\fill}}c}@{}}
\hline
\small
& & \multicolumn{5}{c}{Method}\\
\cline{3-7}
$p$ & Zeros & Comp+CARE & LNM+CARE & VC+CARE & LNM+CD-trace & LNM+SPIEC-EASI\\
\hline
\multicolumn{7}{c}{Spectral norm loss}\\
~50 & 35\% & ~2.46 (0.12) & ~3.21 (0.09) & ~3.40 (0.08) & ~3.41 (0.06) & ~3.42 (0.04)\\
	& 70\% & --           & ~3.78 (1.09) & 14.78 (3.80) & ~3.67 (0.79) & ~3.44 (0.03)\\
100 & 35\% & ~2.91 (0.09) & ~3.44 (0.04) & ~3.47 (0.06) & ~3.47 (0.04) & ~3.60 (0.02)\\
	& 70\% & --           & ~3.82 (0.97) & 14.18 (1.59) & ~3.76 (0.88) & ~3.60 (0.01)\\
\multicolumn{7}{c}{Matrix $\ell_1$-norm loss}\\
~50 & 35\% & ~3.23 (0.19) & ~3.83 (0.11) & ~4.07 (0.14) & ~3.82 (0.09) & ~3.97 (0.06)\\
    & 70\% & --           & ~4.61 (1.24) & 17.16 (3.93) & ~4.44 (1.09) & ~3.94 (0.08)\\
100 & 35\% & ~3.60 (0.11) & ~3.99 (0.07) & ~4.09 (0.10) & ~3.87 (0.07) & ~3.98 (0.05)\\
	& 70\% & --           & ~4.52 (1.14) & 16.59 (1.63) & ~4.39 (1.18) & ~3.99 (0.05)\\
\multicolumn{7}{c}{Frobenius norm loss}\\
~50 & 35\% & ~6.79 (0.21) & ~9.43 (0.24) & 10.13 (0.13) & 10.31 (0.12) & 10.74 (0.21)\\
	& 70\% & --           & 10.61 (0.65) & 28.74 (5.87) & 10.66 (0.50) & 10.98 (0.13)\\
100 & 35\% & 11.16 (0.20) & 14.79 (0.20) & 14.81 (0.15) & 15.18 (0.14) & 15.84 (0.10)\\
	& 70\% & --           & 15.58 (0.45) & 42.97 (5.17) & 15.41 (0.43) & 16.04 (0.04)\\
\multicolumn{7}{c}{True positive rate (\%)}\\
~50 & 35\% & 91.2 (3.0)   & 63.8 (5.4)   & 43.3 (5.0)   & 42.7 (4.0)   & 37.9 (7.0)\\
	& 70\% & --           & 24.3 (4.1)   & 10.3 (2.6)   & 15.9 (3.2)   & 17.5 (4.0)\\
100 & 35\% & 83.3 (2.5)   & 47.7 (3.7)   & 32.5 (3.4)   & 28.7 (4.0)   & 28.8 (3.6)\\
	& 70\% & --           & 16.8 (2.7)   & ~6.1 (1.9)   & ~9.4 (1.9)   & 12.1 (2.0)\\
\multicolumn{7}{c}{False positive rate (\%)}\\
~50 & 35\% & ~7.3 (0.8)   & 10.8 (1.2)   & ~8.2 (1.0)   & ~4.0 (0.5)   & ~5.1 (0.8)\\
	& 70\% & --           & ~9.1 (1.2)   & ~4.1 (0.7)   & ~4.0 (0.8)   & ~3.4 (0.8)\\
100 & 35\% & ~2.8 (0.2)   & ~3.7 (0.4)   & ~2.7 (0.4)   & ~1.0 (0.2)   & ~1.6 (0.2)\\
	& 70\% & --           & ~3.6 (0.4)   & ~1.2 (0.2)   & ~1.0 (0.3)   & ~1.4 (0.3)\\
\hline
\end{tabular*}
\end{table}

\section{Application to Gut Microbiome Data}
The gut microbiome is considered to participate in many host physiological processes and have a tremendous impact on human health. Reconstructing ecological networks from metagenomic data holds the potential to reveal complex microbial interaction patterns in the gut. Some studies \citep[e.g.,][]{busiello2017} have found that species interaction networks are sparse and have related the property to explorability and dynamical robustness, thus meeting the requirement of our method. Here we illustrate our method by applying it to a dataset in \citet{wu2011}, which was previously analyzed by \citet{cao2019} using covariance estimation. In this study, DNA from fecal samples of 98 healthy subjects were quantified by 454/Roche pyrosequencing of 16S rRNA gene segments, yielding 87 genera that appeared in at least one sample. These subjects were divided into a lean group of $n=63$ samples with body mass index (BMI) $<25$ and an obese group of $n=35$ samples with BMI $\ge25$. To ensure stable detection of microbial interactions, we filtered out too rare genera and retained $p=40$ genera that were present in at least four samples in each group. We adopted the multisample approach of \citet{cao2020} to deal with excess zeros and obtained a positive composition matrix.

For each group, we estimated the basis precision matrix using different methods. The results are represented as microbial interaction networks among the genera. For the CARE method, the tuning parameters were selected by tenfold cross-validation. To assess the stability of support recovery, we randomly subsampled 80\% of the subjects and recorded the proportion of edges that reoccurred in the subsample. We repeated the subsampling procedure 100 times and adopt the average proportion of reproduced edges as a measure of network stability. Finally, only those edges reproduced in at least 80\% of the subsamples were retained in the networks. The networks identified by the CARE method are shown in Figure \ref{fig:net}, and those by the other methods in Supplementary Figures \ref{fig:CD_net}--\ref{fig:SPIEC_net}. The numbers of positive and negative edges and stability of networks for all methods are summarized in Table \ref{tab:net}.

\begin{figure}
\begin{subfigure}{\textwidth}
\centering
\includegraphics[width=.75\textwidth]{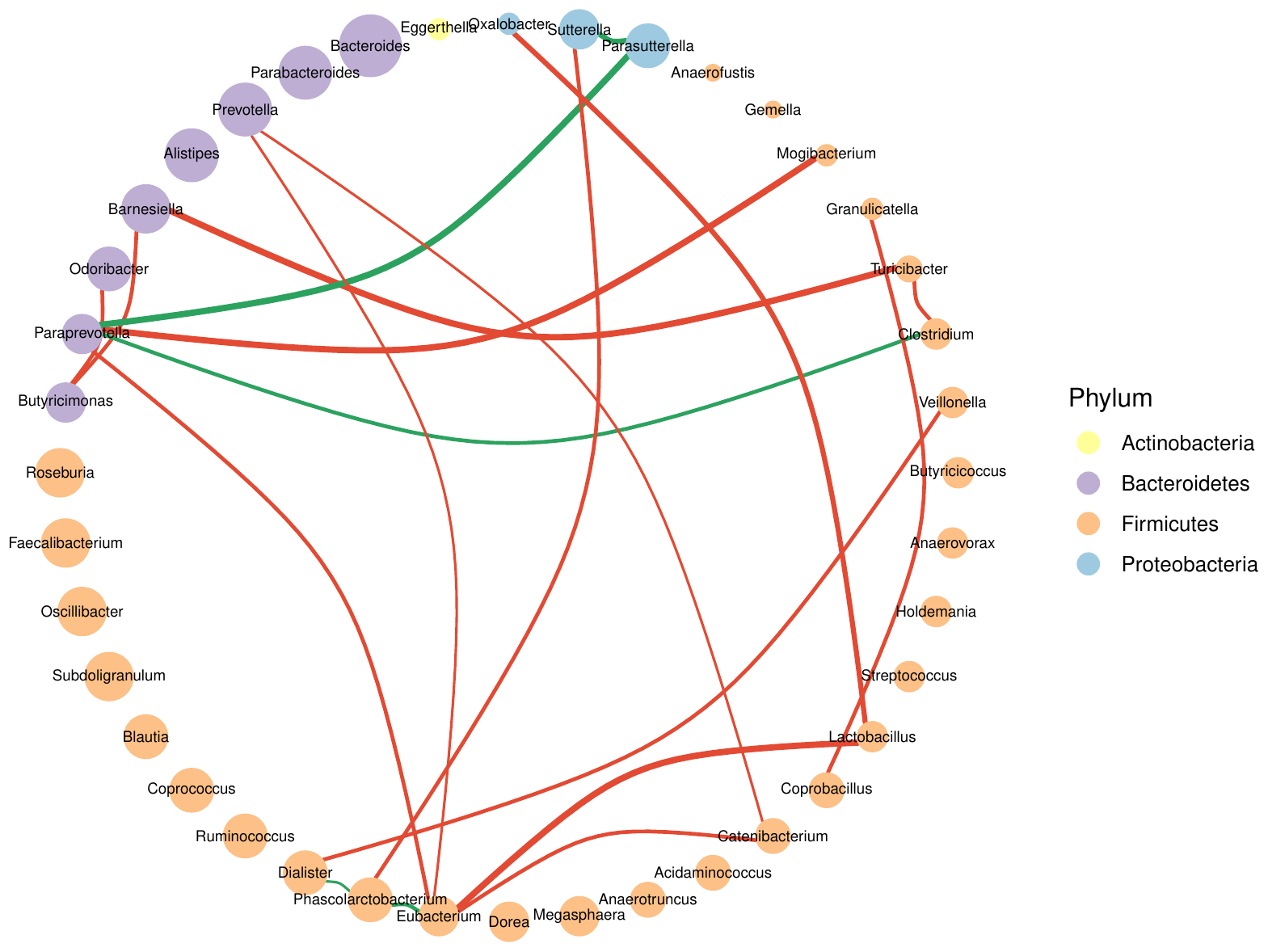}
\caption{Lean}
\end{subfigure}
\begin{subfigure}{\textwidth}
\centering
\includegraphics[width=.75\textwidth]{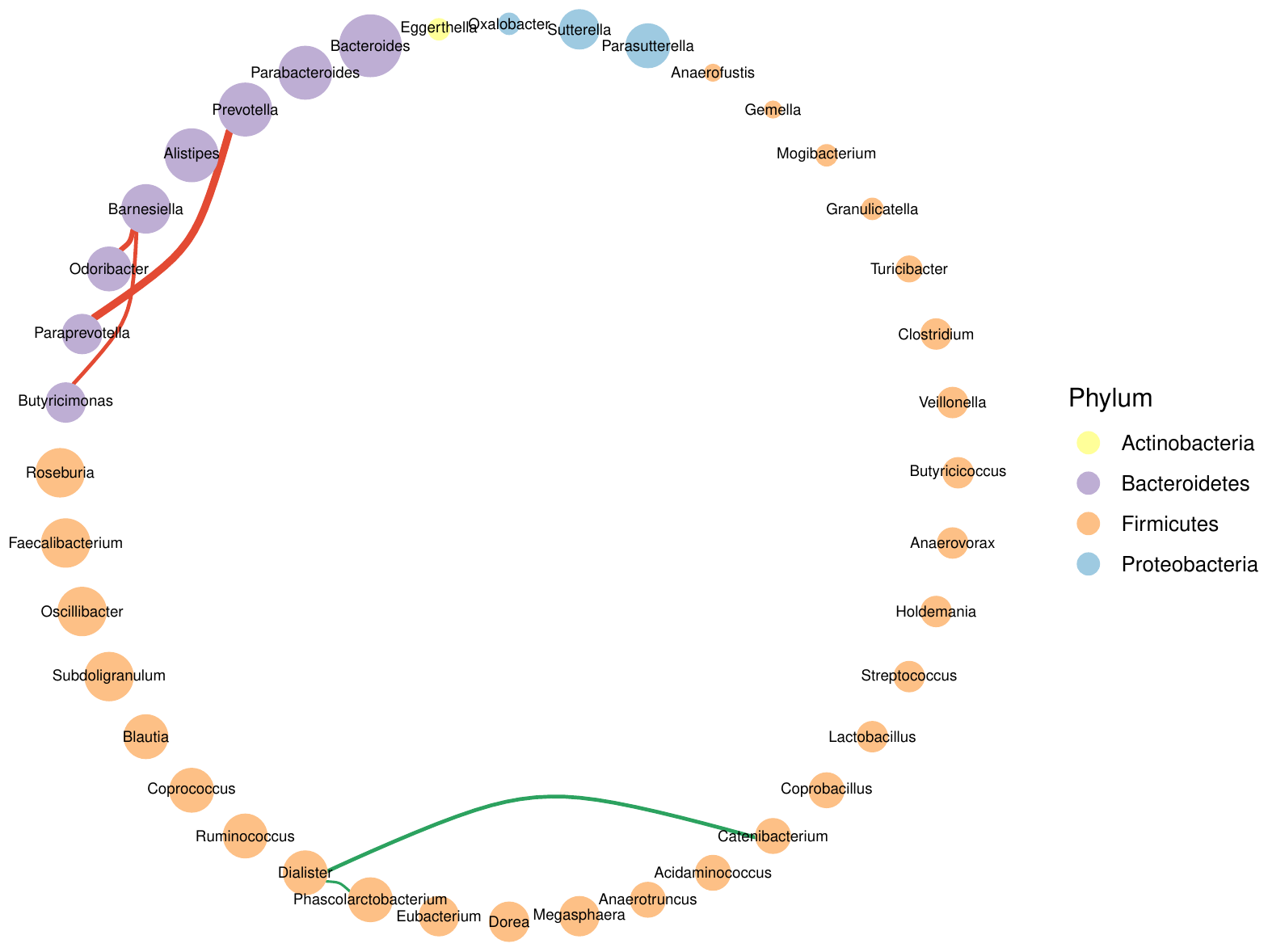}
\caption{Obese}
\end{subfigure}
\caption{Microbial interaction networks identified by the CARE method for the (a) lean and (b) obese groups in the gut microbiome data. Positive and negative edges are displayed in green and red, respectively, with thicknesses proportional to their strengths. Node sizes are proportional to the relative abundances of genera among all samples.}\label{fig:net}
\end{figure}

\begin{table}
\caption{Numbers of positive and negative edges and stability of networks for different methods applied to the gut microbiome data.}\label{tab:net}
\def~{\phantom{0}}
\begin{tabular*}{\textwidth}{@{}l*{5}{@{\extracolsep{\fill}}c}@{}}
\hline
& \multicolumn{2}{@{}c@{}}{Number of all edges} & \multicolumn{2}{@{}c@{}}{Number of stable edges} &\\
\cline{2-3}\cline{4-5}
Method & \multicolumn{1}{c}{Positive} & \multicolumn{1}{c}{Negative} & \multicolumn{1}{c}{Positive} & \multicolumn{1}{c}{Negative} & Network stability\\
\hline
\multicolumn{6}{c}{Lean group}\\
CARE       & ~9 & 19 & 5 & 14 & 0.766\\
CD-trace   & ~9 & 10 & 4 & ~1 & 0.582\\
gCoda      & 12 & 16 & 2 & ~5 & 0.577\\
SPIEC-EASI & 12 & 18 & 4 & ~6 & 0.680\\
\multicolumn{6}{c}{Obese group}\\
CARE       & ~4 & ~8 & 2 & ~3 & 0.708\\
CD-trace   & 16 & 15 & 1 & ~1 & 0.538\\
gCoda      & ~4 & ~6 & 0 & ~3 & 0.636\\
SPIEC-EASI & 13 & ~9 & 1 & ~3 & 0.551\\
\hline
\end{tabular*}
\end{table}

As depicted in Figure \ref{fig:net}, the network structures revealed by the CARE method for the lean and obese groups look markedly different. The genus--genus interactions for the obese group are substantially fewer and less complex than those for the lean group, agreeing with the previous finding that obese microbiomes induce a less modular metabolic network than lean microbiomes \citep{greenblum2012}. From Table \ref{tab:net} we see that CARE maintains the highest network stability and detects more stable edges than the other methods. Interestingly, the networks constructed by CARE involve more negative than positive interactions; a similar phenomenon was observed by \citet{cao2019} in microbial correlation networks and tends to be supported by the ecological theory of microbiome stability \citep{coyte2015}. More discussion on the identified microbial interactions can be found in Supplementary Section \ref{ssec:genera}.

\section{Discussion}
The specification of the compositional precision matrix and the associated tools developed in this article establish a natural link to the basis inverse covariance structure. The blessing of dimensionality derived from this inverse covariance relationship formalizes the intuition that the nonidentifiability due to not knowing the magnitude of the basis spreads over all $p$ components and becomes negligible as the dimensionality grows. These insights are of both theoretical and practical importance, opening up the possibility of investigating existing and developing new methodology for estimating the basis precision matrix. The general idea is to use $\bOmega_c$ as a proxy for $\bOmega_0$ and develop procedures for estimating $\bOmega_0$ through $\bOmega_c$. Although $\bOmega_c$ is not sparse, its matrix $\ell_1$-norm can be tightly controlled, a fact that can be directly exploited by a compositionally adjusted CLIME procedure. Extensions to neighborhood and likelihood-based methods seem possible but remain to be explored. Despite the similarities we have discussed, our problem also differs from general low-rank plus sparse matrix recovery problems in important ways. In particular, the rank in our problem is known and need not be estimated from the data. Also, the usual spiked eigenvalue assumption \citep[e.g.,][]{Fan.etal2013} is not met. These features prevent methods based on nuclear norm regularization or principal components from being applied to our context.

Our conditional dependence modeling approach relies on the log transformation, which is not artificial but natural in at least two ways. First, as a consequence of a large number of independent multiplicative effects, the log-normal distribution is usually a better model than the normal for many real data with small means, large variances, and positive values \citep{limpert2001}. This is, for instance, the case for microbiome data since bacterial abundance generally follows an exponential growth model with additional variability. Second, methods and theory induced by the log transformation, such as multinomial logistic and log-linear models, have been fundamental to categorical data analysis \citep{agresti2013}. In light of the intimate connections between the two areas, the log transformation is a sensible choice that is compatible with many related methods and theory. Nevertheless, it is not the only viable way to deal with the simplex constraint. Notably, square root and power transformations have been applied to compositional data in regression and principal component analysis \citep{Scealy.Welsh2011,Scealy.etal2015}. It would be worthwhile to develop notions of conditional dependence and graphical models under these alternative transformations, which we leave for future work.

\section*{Supplementary Materials}
The supplementary materials contain the proofs of theoretical results, additional discussion and numerical results, and R code and data.

\section*{Acknowledgements}
The authors thank the Associate Editor and two reviewers for valuable comments that have led to an improved article.

\section*{Disclosure Statement}
The authors report there are no competing interests to declare.

\section*{Funding}
Lin's research was supported by National Natural Science Foundation of China (12171012, 12292980, and 12292981). Zhang's research was supported by National Natural Science Foundation of China (12201111 and 12371264) and the Fundamental Research Funds for the Central Universities in UIBE (CXTD14-05).

\bibliographystyle{jasa}
\bibliography{care_ref}

\end{document}

% --- supplement: care_supp.tex ---

\title{Supplementary Materials for ``CARE: Large Precision Matrix Estimation for Compositional Data''}
\author{Shucong Zhang, Huiyuan Wang, and Wei Lin}
\date{}
\maketitle

In these supplementary materials, we prove all the theoretical results in the main article and present additional discussion and numerical results.

\section{Proofs}

\subsection{Proof of Theorem \ref{thm:rel}}
By the representation of $\bSigma_c$ in \eqref{eq:sigma_c}, we have the decomposition
\begin{align*}
\bSigma_c+\rho\bv_p\bv_p^T&=\bSigma_0-\bv_p\bv_p^T\bSigma_0-\bSigma_0\bv_p\bv_p^T+\bv_p\bv_p^T\bSigma_0\bv_p\bv_p^T+\rho\bv_p\bv_p^T \\
&=\bSigma_0+(\rho\bI_p-\bSigma_0)\bv_p\bv_p^T+\bv_p\bv_p^T\bSigma_0(\bv_p\bv_p^T-\bI_p)\\
&=\bSigma_0+\bu_p\bv_p^T+\bv_p\bz_p^T,
\end{align*}
where $\bu_p=(\rho\bI_p-\bSigma_0)\bv_p$ and $\bz_p=(\bv_p\bv_p^T-\bI_p)\bSigma_0\bv_p$. Let $\bSigma_0^{\dagger}=\bSigma_0+\bu_p\bv_p^T$. Applying the Sherman--Morrison formula twice gives
\[
(\bSigma_0^{\dagger})^{-1}=\bOmega_0-\frac{\bOmega_0\bu_p\bv_p^T\bOmega_0}{1+\bv_p^T\bOmega_0\bu_p} =\bOmega_0-\frac{\rho\bOmega_0\bv_p\bv_p^T\bOmega_0-\bv_p\bv_p^T\bOmega_0}{\rho\bv_p^T\bOmega_0\bv_p}
\]
and
\begin{equation}\label{eq:sherman-morrison}
(\bSigma_c+\rho\bv_p\bv_p^T)^{-1}=(\bSigma_0^{\dagger}+\bv_p\bz_p^T)^{-1}=(\bSigma_0^{\dagger})^{-1}-\frac{(\bSigma_0^{\dagger})^{-1}\bv_p \bz_p^T(\bSigma_0^{\dagger})^{-1}}{1+\bz_p^T(\bSigma_0^{\dagger})^{-1}\bv_p}.
\end{equation}
The terms $(\bSigma_0^{\dagger})^{-1}\bv_p$, $\bz_p^T(\bSigma_0^{\dagger})^{-1}$, and $\bz_p^T(\bSigma_0^{\dagger})^{-1}\bv_p$ are calculated as follows:
\begin{align*}
(\bSigma_0^{\dagger})^{-1}\bv_p&=\bOmega_0\bv_p-\bOmega_0\bv_p+\frac{\bv_p}{\rho}=\frac{\bv_p}{\rho},\\
\bz_p^T(\bSigma_0^{\dagger})^{-1}&=\bv_p^T\bSigma_0\biggl(\frac{\bOmega_0\bv_p\bv_p^T\bOmega_0}{\bv_p^T\bOmega_0\bv_p}-\bOmega_0\biggr)
=\frac{\bv_p^T\bOmega_0}{\bv_p^T\bOmega_0\bv_p}-\bv_p^T,\\
\bz_p^T(\bSigma_0^{\dagger})^{-1}\bv_p&=\frac{1}{\rho}\bz_p^T\bv_p=\frac{1}{\rho}\bv_p^T\bSigma_0(\bv_p\bv_p^T-\bI_p)\bv_p=0.
\end{align*}
Substituting the above expressions into \eqref{eq:sherman-morrison} yields
\begin{align*}
(\bSigma_c+\rho\bv_p\bv_p^T)^{-1}&=\bOmega_0-\frac{\rho\bOmega_0\bv_p\bv_p^T\bOmega_0-\bv_p\bv_p^T\bOmega_0}{\rho\bv_p^T\bOmega_0\bv_p} -\frac{\bv_p}{\rho}\biggl(\frac{\bv_p^T\bOmega_0}{\bv_p^T\bOmega_0\bv_p}-\bv_p^T\biggr)\\
&=\bOmega_0-\frac{\bOmega_0\bone_p\bone_p^T\bOmega_0}{\bone_p^T\bOmega_0\bone_p}+\frac{1}{\rho}\bv_p\bv_p^T.
\end{align*}
Therefore, by \eqref{eq:inverse},
\[
\bOmega_c=(\bSigma_c+\rho\bv_p\bv_p^T)^{-1}-\frac{1}{\rho}\bv_p\bv_p^T=\bOmega_0-\frac{\bOmega_0\bone_p\bone_p^T\bOmega_0} {\bone_p^T\bOmega_0\bone_p},
\]
which completes the proof.

\subsection{Proof of Proposition \ref{prop:omega_c}}
By the spectral decompositions of $\bSigma_c$ and $\bOmega_c$, we have
\[
\bSigma_c\bOmega_c=(\bV,\bv_p)
\Biggl(\begin{matrix}
\bI_{p-1} & \bzero\\
\bzero & 0
\end{matrix}\Biggr)
(\bV,\bv_p)^T=\bV\bV^T=\bI_p-\bv_p\bv_p^T=\bG,
\]
which proves part (a). To show part (b), by Theorem \ref{thm:rel} we have
\begin{align*}
\bG\bOmega_c&=\left(\bI_p-\frac{1}{p}\bone_p\bone_p^T\right)\biggl(\bOmega_0-\frac{\bOmega_0\bone_p\bone_p^T\bOmega_0} {\bone_p^T\bOmega_0\bone_p}\biggr)\\
&=\bOmega_0-\frac{\bOmega_0\bone_p\bone_p^T\bOmega_0}{\bone_p^T\bOmega_0\bone_p}=\bOmega_c
\end{align*}
and
\[
\bOmega_c\bone_p=\biggl(\bOmega_0-\frac{\bOmega_0\bone_p\bone_p^T\bOmega_0}{\bone_p^T\bOmega_0\bone_p}\biggr)\bone_p =\bOmega_0\bone_p-\bOmega_0\bone_p=\bzero.
\]
Part (c) is immediate from Theorem 9.2 of \cite{aitchison2003}. Part (d) follows from the Cauchy--Schwarz inequality, part (a), and the fact that $\bG^2=\bG$:
\begin{align*}
\sigma_{jj}^c\omega_{jj}^c=\|\bSigma_c^{1/2}\be_j\|_2^2\|\bOmega_c^{1/2}\be_j\|_2^2\ge(\be_j^T\bG^{1/2}\be_j)^2=(1-1/p)^2
\end{align*}
for all $j$. This completes the proof.

\subsection{Proof of Proposition \ref{prop:ident}}
By Theorem \ref{thm:rel}, we have
\[
\|\bOmega_0-\bOmega_c\|_{\max}=\frac{\|\bOmega_0\bone_p\bone_p^T\bOmega_0\|_{\max}}{\bone_p^T\bOmega_0\bone_p}.
\]
Since
\[
\lambda_{\min}^2(\bOmega_0)\le\frac{1}{p^2}\bone_p^T(\bOmega_0\bone_p\bone_p^T\bOmega_0)\bone_p \le\|\bOmega_0\bone_p\bone_p^T\bOmega_0\|_{\max}\le\|\bOmega_0\|_{L_1}^2
\]
and
\[
p\lambda_{\min}(\bOmega_0)\le\bone_p^T\bOmega_0\bone_p\le p\lambda_{\max}(\bOmega_0),
\]
it follows that
\[
\frac{R^{-3}}p\le\frac{\lambda_{\min}^2(\bOmega_0)}{p\lambda_{\max}(\bOmega_0)}\le\|\bOmega_0-\bOmega_c\|_{\max}
\le\frac{\|\bOmega_0\|_{L_1}^2}{p\lambda_{\min}(\bOmega_0)}\le\frac{RM_p^2}p.
\]

\subsection{Proof of Proposition \ref{prop:bound}}
By Theorem \ref{thm:rel}, we have
\begin{align*}
\|\bOmega_0-\bOmega_c\|_{L_1}&=\frac{\|\bOmega_0\bone_p\bone_p^T\bOmega_0\|_{L_1}}{\bone_p^T\bOmega_0\bone_p} \le\frac{\|\bOmega_0\bone_p\|_1\|\bOmega_0\bone_p\|_{\infty}}{p\lambda_{\min}(\bOmega_0)}\le\frac{\|\bOmega_0\bone_p\|_2\|\bOmega_0\|_{L_1}}{\sqrt{p}\lambda_{\min}(\bOmega_0)}\\
&\le\frac{\lambda_{\max}(\bOmega_0)\|\bOmega_0\|_{L_1}}{\lambda_{\min}(\bOmega_0)}\le R^2M_p,
\end{align*}
and hence
\[
\|\bOmega_c\|_{L_1}\le\|\bOmega_0\|_{L_1}+\|\bOmega_0-\bOmega_c\|_{L_1}\le(1+R^2)M_p.
\]
The eigenvalue bounds follow from part (c) of Proposition \ref{prop:omega_c} and the assumption that $1/R\le\lambda_{\min}(\bOmega_0)\le\lambda_{\max}(\bOmega_0)\le R$.

\subsection{Proof of Lemma \ref{lem:conc}}
To prepare for the proof of Lemma \ref{lem:conc}, we first show that tail conditions similar to Conditions \ref{cond:subG} and \ref{cond:poly} hold for the centered log-ratio variables. Let $\widetilde\bZ=\bOmega_0\bZ=(\tilde Z_1,\dots,\tilde Z_p)^T$ be the innovated transformation of $\bZ$, and $\widetilde\bSigma_c=\bOmega_0\bSigma_c\bOmega_0=(\tilde\sigma_{ij}^c)$ the covariance matrix of $\widetilde\bZ$.

\begin{lemma}\label{lem:tail}
Suppose that $\bOmega_0\in\cU_q(s_0(p),M_p)$ with $M_p=o(\sqrt p)$.
\begin{compactenum}[(a)]
\item If Condition \ref{cond:subG} holds, then for sufficiently large $p$, there exist some constants $\eta_1>0$ and $K_1>0$ such that
    \[
    E\{\exp(\eta_1Z_j^2/\sigma_{jj}^c)\}\le K_1,\qquad E\{\exp(\eta_1\tilde Z_j^2/\tilde\sigma_{jj}^c)\}\le K_1
    \]
    for all $j=1,\dots,p$.
\item If Condition \ref{cond:poly} holds, then there exists some constant $K_1'>0$ such that
    \[
    E|Z_j|^{4\gamma+4+\ve}\le K_1',\qquad E|\tilde Z_j|^{4\gamma+4+\ve}\le K_1'
    \]
    for all $j=1,\dots,p$.	
\end{compactenum}
\end{lemma}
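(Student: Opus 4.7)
The plan is to represent both $Z_j$ and $\tilde Z_j$ as linear functionals of the latent vector $\bY$ (with covariance $\bSigma_0$, on which Conditions \ref{cond:subG} and \ref{cond:poly} are stated) and then invoke those conditions at the specific coefficient vectors that arise. Since the clr map acts as $\bZ=\bG\bY$ with $\bG=\bI_p-\bone_p\bone_p^T/p$, I have $Z_j=\bG_{j\cdot}^T\bY$ and $\tilde Z_j=\bOmega_{0,j\cdot}^T\bG\bY=(\bG\bOmega_{0,j\cdot})^T\bY$. The variances of these two linear forms are exactly $\sigma_{jj}^c=\bG_{j\cdot}^T\bSigma_0\bG_{j\cdot}$ and $\tilde\sigma_{jj}^c=(\bG\bOmega_{0,j\cdot})^T\bSigma_0(\bG\bOmega_{0,j\cdot})$, i.e.\ the normalizers appearing in the target inequalities.

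Given this representation, each part reduces to instantiating the joint form of the corresponding condition at $\bv=\bG_{j\cdot}$ or $\bv=\bG\bOmega_{0,j\cdot}$. For part (a), the joint sub-Gaussian form of Condition \ref{cond:subG}---namely $E\exp\{\eta_0(\bv^T\bY)^2/(\bv^T\bSigma_0\bv)\}\le K_0$ for every nonzero $\bv$---delivers both desired bounds immediately with $\eta_1=\eta_0$ and $K_1=K_0$. For part (b), the joint moment form $E|\bv^T\bY|^{4\gamma+4+\ve}\le C(\bv^T\bSigma_0\bv)^{(4\gamma+4+\ve)/2}$ yields $E|Z_j|^{4\gamma+4+\ve}\le C(\sigma_{jj}^c)^{(4\gamma+4+\ve)/2}$ and the analog for $\tilde Z_j$; these are absorbed into a single constant $K_1'$ once $\sigma_{jj}^c$ and $\tilde\sigma_{jj}^c$ are shown to be uniformly bounded above.

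The step that actually uses the hypothesis $M_p=o(\sqrt p)$, and which I expect to be the main obstacle, is verifying uniform two-sided bounds on $\sigma_{jj}^c$ and $\tilde\sigma_{jj}^c$. The upper bounds $\sigma_{jj}^c\le R$ and $\tilde\sigma_{jj}^c\le R^3$ come from $\lambda_{\max}(\bSigma_0)\le R$, $\lambda_{\max}(\bOmega_0)\le R$, and $\|\bG\|_2\le 1$. The lower bound $\sigma_{jj}^c\ge(1-1/p)/R$ follows from $\|\bG_{j\cdot}\|_2^2=1-1/p$ together with $\lambda_{\min}(\bSigma_0)\ge 1/R$. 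The nontrivial one is
\[
\tilde\sigma_{jj}^c\ge\lambda_{\min}(\bSigma_0)\,\|\bG\bOmega_{0,j\cdot}\|_2^2\ge R^{-1}\bigl\{(\bOmega_0^2)_{jj}-(\bone_p^T\bOmega_{0,j\cdot})^2/p\bigr\}\ge R^{-1}\bigl(R^{-2}-M_p^2/p\bigr),
\]
using $\lambda_{\min}^2(\bOmega_0)\ge R^{-2}$ and $\|\bOmega_{0,j\cdot}\|_1\le\|\bOmega_0\|_{L_1}\le M_p$. This stays above $1/(2R^3)$ for all sufficiently large $p$, precisely because $M_p^2/p=o(1)$, matching the "sufficiently large $p$" clause in part (a) (and part (b) does not need it, as only the upper bounds are required there). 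Once these variance controls are in hand, the rest of the argument is mechanical substitution.
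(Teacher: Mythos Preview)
Your proposal is correct and follows the same overall strategy as the paper: represent $Z_j=(\bG\be_j)^T\bY$ and $\tilde Z_j=(\bG\bOmega_0\be_j)^T\bY$, then invoke the tail condition at those directions; the key lower bound $\|\bG\bOmega_0\be_j\|_2^2\ge R^{-2}-M_p^2/p$ (which is where $M_p=o(\sqrt p)$ enters) is derived identically in both. Your treatment of part (a) is actually slightly cleaner than the paper's. You observe that $\sigma_{jj}^c$ and $\tilde\sigma_{jj}^c$ are \emph{exactly} the variance normalizers appearing in Condition~\ref{cond:subG}, so the bounds follow with $\eta_1=\eta$ once the direction is nonzero. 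The paper instead passes through the inequality $\|\bG\be_j\|_2^2\,\Var(\bs_1^T\bY)\le\lambda_{\max}(\bSigma_0)$ (even though the left side equals $\sigma_{jj}^c$ exactly), and consequently must use the variance lower bounds quantitatively to control the ratios $\lambda_{\max}(\bSigma_0)/\sigma_{jj}^c$ and $\lambda_{\max}^2(\bOmega_0)\lambda_{\max}(\bSigma_0)/\tilde\sigma_{jj}^c$, ending up with the smaller $\eta_1=\eta/(2R^6)$. Part (b) is handled the same way in both arguments, relying only on the upper bounds $\|\bG\be_j\|_2\le1$ and $\|\bG\bOmega_0\be_j\|_2\le R$; as you note, $M_p=o(\sqrt p)$ is not needed there.
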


\begin{proof} To show part (a), first note that for all $j$,
\[
\frac{Z_j^2}{\sigma_{jj}^c}=\frac{(\be_j^T\bG\bY)^2}{\sigma_{jj}^c}=\|\bG\be_j\|_2^2\frac{\Var(\bs_1^T\bY)}{\sigma_{jj}^c}\frac{(\bs_1^T\bY)^2} {\Var(\bs_1^T\bY)}\le\frac{\lambda_{\max}(\bSigma_0)}{\sigma_{jj}^c}\frac{(\bs_1^T\bY)^2}{\Var(\bs_1^T\bY)},
\]
where $\bs_1=\bG\be_j/\|\bG\be_j\|_2$. Using $\bSigma_c=\bG\bSigma_0\bG$, we have
\[
\sigma_{jj}^c=\be_j^T\bG\bSigma_0\bG\be_j\ge\|\bG\be_j\|_2^2\lambda_{\min}(\bSigma_0)=\biggl(1-\frac{1}{p}\biggr)\lambda_{\min}(\bSigma_0).
\]
Hence, for any $p\ge 2$,
\[
\frac{\lambda_{\max}(\bSigma_0)}{\sigma_{jj}^c}\le\frac{\lambda_{\max}(\bSigma_0)}{(1-1/p)\lambda_{\min}(\bSigma_0)}\le 2R^2.
\]
Letting $\eta^*=\eta/(2R^2)$ and using Condition \ref{cond:subG}, we obtain
\[
E\{\exp(\eta^*Z_j^2/\sigma_{jj}^c)\}\le E\biggl\{\frac{\eta(\bs_1^T\bY)^2}{\Var(\bs_1^T\bY)}\biggr\}\le K.
\]

To derive the second inequality, we observe that for all $j$,
\begin{align*}
\frac{\tilde Z_j^2}{\tilde\sigma_{jj}^c}&=\frac{(\be_j^T\bOmega_0\bG\bY)^2}{\tilde\sigma_{jj}^c}=\|\bG\bOmega_0\be_j\|_2^2\frac{\Var(\bs_2^T\bY)} {\tilde\sigma_{jj}^c}\frac{(\bs_2^T\bY)^2}{\Var(\bs_2^T\bY)}\\
&\le\frac{\lambda_{\max}^2(\bOmega_0)\lambda_{\max}(\bSigma_0)}{\tilde\sigma_{jj}^c}\frac{(\bs_2^T\bY)^2}{\Var(\bs_2^T\bY)},
\end{align*}
where $\bs_2=\bG\bOmega_0\be_j/\|\bG\bOmega_0\be_j\|_2$. By the assumption $M_p=o(\sqrt p)$ and for sufficiently large $p$,
\begin{align*}
\tilde\sigma_{jj}^c&=\be_j^T\bOmega_0\bSigma_c\bOmega_0\be_j\ge\|\bG\bOmega_0\be_j\|_2^2\lambda_{\min}(\bSigma_0)\\
&=(\be_j^T\bOmega_0\bOmega_0\be_j-\be_j^T\bOmega_0\bv_p\bv_p^T\bOmega_0\be_j)\lambda_{\min}(\bSigma_0)\\
&\ge\biggl(\frac{1}{R^2}-\frac{RM_p^2}{p}\biggr)\frac{1}{R}\ge\frac{1}{2R^3},
\end{align*}
so that
\[
\frac{\lambda_{\max}^2(\bOmega_0)\lambda_{\max}(\bSigma_0)}{\tilde\sigma_{jj}^c}\le2R^6.
\]
Letting $\eta'=\eta/(2R^6)$ and using Condition \ref{cond:subG}, we conclude that
\[
E\{\exp(\eta'\tilde Z_j^2/\tilde\sigma_{jj}^c)\}\le E\biggl\{\frac{\eta(\bs_2^T\bY)^2}{\Var(\bs_2^T\bY)}\biggr\}\le K.
\]
Taking $\eta_1=\min(\eta^*,\eta')$ and $K_1=K$ proves part (a).

To show part (b), using Condition \ref{cond:poly} and $\lambda_{\max}(\bG)=1$, we have for all $j$,
\[
E|Z_j|^{4\gamma+4+\ve}\le(\|\bG\be_j\|_2)^{4\gamma+4+\ve}E|\bs_1^T\bY|^{4\gamma+4+\ve}\le K'.
\]
Similarly,
\begin{align*}
E|\tilde Z_j|^{4\gamma+4+\ve}&\le(\|\bG\bOmega_0\be_j\|_2)^{4\gamma+4+\ve}E|\bs_2^T\bY|^{4\gamma+4+\ve}\le(\|\bOmega_0\be_j\|_2)^{4\gamma+4+\ve} E|\bs_2^T\bY|^{4\gamma+4+\ve}\\
&\le R^{4\gamma+4+\ve}E|\bs_2^T\bY|^{4\gamma+4+\ve}\le R^{4\gamma+4+\ve}K'.
\end{align*}
Taking $K_1'=R^{4\gamma+4+\ve}K'$, we arrive at part (b) and conclude the proof.
\end{proof}

Next, we show that $\widehat\bSigma_c\bOmega_0$ concentrates around $\bSigma_c\bOmega_0$ with high probability.

\begin{lemma}\label{lem:sigma_c}
Suppose that $\bOmega_0\in\cU_q(s_0(p),M_p)$ with $M_p=o(\sqrt p)$.
\begin{compactenum}[(a)]
\item If Condition \ref{cond:subG} holds, then for any $\xi>0$, there exists some constant $\alpha>0$ depending only on $\xi$, $\eta$, $K$, and $R$ such that
    \[
    P\biggl\{\|(\widehat\bSigma_c-\bSigma_c)\bOmega_0\|_{\max}\le\alpha\sqrt{\frac{\log p}{n}}\biggr\}\ge 1-O(p^{-\xi}).
    \]
\item If Condition \ref{cond:poly} holds, then for any $\xi>0$, there exists some constant $\alpha'>0$ depending only on $\xi$, $\gamma$, $\ve$, $K'$, and $R$ such that
    \[
    P\biggl\{\|(\widehat\bSigma_c-\bSigma_c)\bOmega_0\|_{\max}\le\alpha'\sqrt{\frac{\log p}{n}}\biggr\}\ge 1-O(p^{-\xi/2}+n^{-\ve/8}).
    \]
\end{compactenum}
\end{lemma}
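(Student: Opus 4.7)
The $(j,k)$ entry of $(\widehat\bSigma_c - \bSigma_c)\bOmega_0$ is, after noting that $\bOmega_0\bZ_i = \widetilde\bZ_i$,
\[
\be_j^T(\widehat\bSigma_c-\bSigma_c)\bOmega_0\be_k=\frac{1}{n}\sum_{i=1}^n\bigl\{Z_{ij}\tilde Z_{ik}-E(Z_{ij}\tilde Z_{ik})\bigr\}+R_{jk},
\]
where $R_{jk}$ collects the negligible sample-mean contribution (of order $\log p/n$) if $\widehat\bSigma_c$ is taken with empirical centering. The plan for both parts is therefore to obtain a sharp tail bound for this centered sum uniformly in $(j,k)$, then conclude by a union bound over the $p^2$ entries. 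The key ingredient that makes this clean is Lemma \ref{lem:tail}, together with the uniform bounds $\sigma_{jj}^c\le\lambda_{\max}(\bSigma_0)\le R$ and $\tilde\sigma_{jj}^c\le\lambda_{\max}^2(\bOmega_0)\lambda_{\max}(\bSigma_0)\le R^3$, which let me replace normalized sub-Gaussian / polynomial moments of $Z_j/\sqrt{\sigma_{jj}^c}$ and $\tilde Z_j/\sqrt{\tilde\sigma_{jj}^c}$ by plain moment bounds on $Z_j$ and $\tilde Z_j$ with constants depending only on $R,\eta,K$ (or $R,\gamma,\ve,K'$).

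For part (a), sub-Gaussianity of $Z_{ij}$ and $\tilde Z_{ik}$ implies, via Cauchy--Schwarz, that the centered product $Z_{ij}\tilde Z_{ik}-E(Z_{ij}\tilde Z_{ik})$ is sub-exponential with an Orlicz norm bounded by a constant depending only on $\eta,K,R$. A standard Bernstein inequality then gives
\[
P\biggl(\biggl|\frac{1}{n}\sum_{i=1}^n\bigl\{Z_{ij}\tilde Z_{ik}-E(Z_{ij}\tilde Z_{ik})\bigr\}\biggr|>t\biggr)\le 2\exp\bigl\{-c\,n\min(t^2,t)\bigr\}
\]
with $c$ depending on $\eta,K,R$. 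Taking $t=\alpha\sqrt{\log p/n}$ with $\alpha$ large enough (depending on $\xi,\eta,K,R$) drives each entrywise probability below $2p^{-\xi-2}$, so a union bound over $j,k\in\{1,\dots,p\}$ yields the stated $1-O(p^{-\xi})$ statement.

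For part (b), the sub-Gaussian machinery is unavailable and I would truncate. Set $\tau_n$ to be a polynomial in $n$ (of order $n^{1/(2\gamma+2+\ve/4)}$, tuned to balance the two sources of error below), and split $Z_{ij}\tilde Z_{ik}$ into a truncated part $W_{ijk}=Z_{ij}\tilde Z_{ik}\mathbf{1}\{|Z_{ij}\tilde Z_{ik}|\le\tau_n\}$ and a remainder. On the truncated part, Bernstein's inequality with bound $\tau_n$ and variance $\le E(Z_{ij}\tilde Z_{ik})^2\le(E Z_{ij}^4)^{1/2}(E\tilde Z_{ik}^4)^{1/2}$ — which is finite and bounded by Lemma \ref{lem:tail}(b) — gives an exponential deviation bound; after union bounding over $p^2$ entries this piece can be made $O(p^{-\xi/2})$ for a suitable constant $\alpha'$. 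The truncation remainder is handled in two pieces: the max of $|Z_{ij}\tilde Z_{ik}|$ over the $n$ observations and $p^2$ entries is bounded by Markov's inequality applied to $E|Z_{ij}\tilde Z_{ik}|^{2\gamma+2+\ve/2}\le K'$ (Cauchy--Schwarz from Lemma \ref{lem:tail}(b)), giving $O(n^{-\ve/8})$ for the tuned $\tau_n$; and the bias $|E W_{ijk}-E(Z_{ij}\tilde Z_{ik})|$ is bounded by the same moment bound times $\tau_n^{-(2\gamma+1+\ve/2)}$, which is of smaller order than $\sqrt{\log p/n}$ under the chosen $\tau_n$ and the usual regime $\log p\ll n$.

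The routine part is Bernstein plus union bound in (a); the main obstacle is the delicate bookkeeping in (b), where the truncation level $\tau_n$ must simultaneously make the bias negligible relative to $\sqrt{\log p/n}$, keep the maximum-of-$np^2$-terms tail probability at $O(n^{-\ve/8})$, and still allow Bernstein to yield $O(p^{-\xi/2})$ for the truncated sum. Choosing the exponent of $\tau_n$ to balance these three constraints is what forces the specific rate $p^{-\xi/2}+n^{-\ve/8}$ in the lemma.
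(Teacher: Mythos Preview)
Your proposal is correct and follows essentially the same route as the paper: Bernstein's inequality for sub-exponential products plus a union bound over $p^2$ entries in (a), and truncation plus Bernstein on the truncated part plus Markov on the remainder in (b). The only cosmetic differences are that the paper uses the polarization identity $4ab=(a+b)^2-(a-b)^2$ (as in Bickel--Levina) to reduce to squares of sub-Gaussians in (a), and in (b) takes the concrete truncation level $\sqrt{n/(\log p)^3}$ and reduces the $p^2$ product events to $2p$ marginal events $\{Z_{ki}^2>\tau_n\}$ and $\{\tilde Z_{kj}^2>\tau_n\}$ before applying Markov---your union bound over $np^2$ product events would work too, just with a slightly different $\tau_n$.
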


\begin{proof}
To show part (a), we begin with the decomposition
\[
(\widehat\bSigma_c\bOmega_0-\bSigma_c\bOmega_0)_{ij}=\frac{1}{n}\sum_{k=1}^nZ_{ki}\tilde Z_{kj}-E(Z_{ki}\tilde Z_{kj}) -\Biggl(\frac{1}{n}\sum_{k=1}^nZ_{ki}\Biggr)\Biggl(\frac{1}{n}\sum_{k=1}^n\tilde Z_{kj}\Biggr),
\]
which implies
\begin{equation}\label{Sup_align5}
|(\widehat\bSigma_c\bOmega_0-\bSigma_c\bOmega_0)_{ij}|\le\Biggl|\frac{1}{n}\sum_{k=1}^nZ_{ki}\tilde Z_{kj}-E(Z_{ki}\tilde Z_{kj})\Biggr| +\Biggl|\Biggl(\frac{1}{n}\sum_{k=1}^nZ_{ki}\Biggr)\Biggl(\frac{1}{n}\sum_{k=1}^n\tilde Z_{kj}\Biggr)\Biggr|.
\end{equation}
For any $t_1,t_2>0$, we define the following two events:
\begin{align*}
A_{ij}(t_1)&\equiv\Biggl\{\Biggl|\frac{1}{n}\sum_{k=1}^nZ_{ki}\tilde Z_{kj}-E(Z_{ki}\tilde Z_{kj})\Biggr|>t_1\Biggr\},\\
B_{ij}(t_2)&\equiv\Biggl\{\Biggl|\Biggl(\frac{1}{n}\sum_{k=1}^nZ_{ki}\Biggr)\Biggl(\frac{1}{n}\sum_{k=1}^n\tilde Z_{kj}\Biggr)\Biggr|>t_2\Biggr\}.
\end{align*}
For the event $A_{ij}(t_1)$, as in the proof of Lemma A.3 in \cite{bickel2008}, we have
\begin{align*}
P\{A_{ij}(t_1)\}&=P\Biggl\{\Biggl|\sum_{k=1}^n\biggl(\frac{Z_{ki}}{\sqrt{\sigma_{ii}^c}}\frac{\tilde Z_{kj}}{\sqrt{\tilde\sigma_{jj}^c}}-\rho_{ij}^c\biggr) \Biggr|>\frac{nt_1}{\sqrt{\sigma_{ii}^c\tilde\sigma_{jj}^c}}\Biggr\}\\
&\le P\Biggl[\Biggl|\sum_{k=1}^n\biggl\{\biggl(\frac{Z_{ki}}{\sqrt{\sigma_{ii}^c}}+\frac{\tilde Z_{kj}}{\sqrt{\tilde\sigma_{jj}^c}}\biggr)^2-2(1+\rho_{ij}^c) \biggr\}\Biggr|>\frac{2nt_1}{\sqrt{\sigma_{ii}^c\tilde\sigma_{jj}^c}}\Biggr]\\
&\relph{}+P\Biggl[\Biggl|\sum_{k=1}^n\biggl\{\biggl(\frac{Z_{ki}}{\sqrt{\sigma_{ii}^c}}-\frac{\tilde Z_{kj}}{\sqrt{\tilde\sigma_{jj}^c}}\biggr)^2 -2(1-\rho_{ij}^c)\biggr\}\Biggr|>\frac{2nt_1}{\sqrt{\sigma_{ii}^c\tilde\sigma_{jj}^c}}\Biggr],
\end{align*}
where $\rho_{ij}^c=E(Z_{ki}\tilde Z_{kj})/\sqrt{\sigma_{ii}^c\tilde\sigma_{jj}^c}$. Let
\[
U_{kij}=\frac{Z_{ki}}{\sqrt{\sigma_{ii}^c}}+\frac{\tilde Z_{kj}}{\sqrt{\tilde\sigma_{jj}^c}},\qquad
V_{kij}=\frac{Z_{ki}}{\sqrt{\sigma_{ii}^c}}-\frac{\tilde Z_{kj}}{\sqrt{\tilde\sigma_{jj}^c}}.
\]
We first bound the term
\[
P\Biggl[\Biggl|\sum_{k=1}^n\{U_{kij}^2-2(1+\rho_{ij}^c)\}\Biggr|>\frac{2nt_1}{\sqrt{\sigma_{ii}^c\tilde\sigma_{jj}^c}}\Biggr].
\]
Let $\eta_2=\eta_1/4$. By Lemma \ref{lem:tail} and the Cauchy--Schwarz inequality, we have
\begin{align*}
E\{\exp(\eta_2U_{kij}^2)\}&\le E[\exp\{\eta_1Z_{ki}^2/(2\sigma_{ii}^c)\}\exp\{\eta_1\tilde Z_{kj}^2/(2\tilde\sigma_{jj}^c)\}]\\
&\le[E\{\exp(\eta_1Z_{ki}^2/\sigma_{ii}^c)\}]^{1/2}[E\{\exp(\eta_1\tilde Z_{kj}^2/\tilde\sigma_{jj}^c)\}]^{1/2}\le K_1.
\end{align*}
It is easy to show that $E(|U_{kij}|^r)\le K_1\eta_2^{-r/2}\Gamma(r/2+1)$ for $r\ge 1$, where $\Gamma(x)=\int_0^{\infty}u^{x-1}e^{-u}\,du$. Then we obtain
\begin{align*}
&E(\exp[t\{U_{kij}^2-E(U_{kij}^2)\}])=1+\sum_{r=2}^{\infty}\frac{t^r}{r!}E[\{U_{kij}^2-E(U_{kij}^2)\}^r]\\
&\quad\le 1+\sum_{r=2}^{\infty}\frac{|t|^r2^r}{r!}E(U_{kij}^{2r})\le 1+\sum_{r=2}^{\infty}K_1|t|^r2^r\eta_2^{-r} =1+\frac{4K_1\eta_2^{-2}t^2}{1-2|t|\eta_2^{-1}}
\end{align*}
for $|t|<\eta_2/2$, which further implies
\[
E(\exp[t\{U_{kij}^2-E(U_{kij}^2)\}])\le 1+8K_1\eta_2^{-2}t^2\le\exp(8K_1\eta_2^{-2}t^2)\quad\text{for }|t|<\eta_2/4.
\]
Thus, $U_{kij}^2$ is sub-exponential with parameters $a^2=16K_1\eta_2^{-2}$ and $b=4/\eta_2$:
\[
E(\exp[t\{U_{kij}^2-E(U_{kij}^2)\}])\le\exp(a^2t^2/2)\quad\text{for }|t|<1/b.
\]
Now applying Bernstein's inequality yields
\[
P\Biggl[\Biggl|\sum_{k=1}^n\{U_{kij}^2-2(1+\rho_{ij}^c)\}\Biggr|>\frac{2nt_1}{\sqrt{\sigma_{ii}^c\tilde\sigma_{jj}^c}}\Biggr]\le
2\exp\biggl\{-\min\biggl(\frac{2nt_1^2}{a^2\sigma_{ii}^c\tilde\sigma_{jj}^c},\frac{nt_1}{b\sqrt{\sigma_{ii}^c\tilde\sigma_{jj}^c}}\biggr)\biggr\}.
\]
If $t_1\le 2K_1\eta_2^{-1}\sqrt{\sigma_{ii}^c\tilde\sigma_{jj}^c}$, then this becomes
\[
P\Biggl[\Biggl|\sum_{k=1}^n\{U_{kij}^2-2(1+\rho_{ij}^c)\}\Biggr|>\frac{2nt_1}{\sqrt{\sigma_{ii}^c\tilde\sigma_{jj}^c}}\Biggr]\le 2\exp\biggl(-\frac{n\eta_2^2t_1^2}{8K_1\sigma_{ii}^c\tilde\sigma_{jj}^c}\biggr).
\]
The same deviation bound for $V_{kij}^2-2(1-\rho_{ij}^c)$ can be similarly derived. Since $\tilde\sigma_{jj}^c=\be_j^T\bOmega_0\bSigma_c\bOmega_0\be_j\le R^3$, we have, for $t_1\le 2K_1\eta_2^{-1}\sqrt{\sigma_{ii}^c\tilde\sigma_{jj}^c}$,
\[
P\{A_{ij}(t_1)\}\le 4\exp\biggl(-\frac{n\eta_2^2t_1^2}{8K_1\sigma_{ii}^c\tilde\sigma_{jj}^c}\biggr)\le 4\exp\biggl(-\frac{n\eta_2^2t_1^2}{8K_1R^4}\biggr).
\]
For any $\xi>0$, let $t_1=\alpha_1\sqrt{(\log p)/n}$, where $\alpha_1=2\eta_2^{-1}R^2\sqrt{2K_1(\xi+2)}$. Clearly, $t_1\le2K_1\eta_2^{-1}\sqrt{\sigma_{ii}^c\tilde\sigma_{jj}^c}$ for sufficiently large $n$ and $p$. Thus, we obtain
\begin{equation}\label{Sup_align6}
P\Biggl\{\Biggl|\frac{1}{n}\sum_{k=1}^nZ_{ki}\tilde Z_{kj}-E(Z_{ki}\tilde Z_{kj})\Biggr|>\alpha_1\sqrt{\frac{\log p}{n}}\Biggr\}\le 4p^{-\xi-2}.
\end{equation}
For the event $B_{ij}(t_2)$, note that
\begin{align*}
P\{B_{ij}(t_2)\}&=P\Biggl\{\Biggl|\Biggl(\frac{1}{n}\sum_{k=1}^n\frac{Z_{ki}}{\sqrt{\sigma_{ii}^c}}\Biggr)\Biggl(\frac{1}{n}\sum_{k=1}^n \frac{\tilde Z_{kj}}{\sqrt{\tilde\sigma_{jj}^c}}\Biggr)\Biggr|>\frac{t_2}{\sqrt{\sigma_{ii}^c\tilde\sigma_{jj}^c}}\Biggr\}\\
&\le P\Biggl\{\Biggl|\frac{1}{n}\sum_{k=1}^n\frac{Z_{ki}}{\sqrt{\sigma_{ii}^c}}\Biggr|>\frac{t_2^{1/2}}{(\sigma_{ii}^c\tilde\sigma_{jj}^c)^{1/4}} \Biggr\}+P\Biggl\{\Biggl|\frac{1}{n}\sum_{k=1}^n\frac{\tilde Z_{kj}}{\sqrt{\tilde \sigma_{jj}^c}}\Biggr|>\frac{t_2^{1/2}}{(\sigma_{ii}^c \tilde\sigma_{jj}^c)^{1/4}}\Biggr\}\\
&\equiv T_1+T_2.
\end{align*}
We first deal with the term $T_1$. For any $d\in \mathbb{R}$, by Taylor expansion and Lemma \ref{lem:tail},
\begin{align*}
&E\biggl\{\exp\biggl(d\frac{Z_{ki}}{\sqrt{\sigma_{ii}^c}}\biggr)\biggr\}\\
&\quad=1+E\biggl\{\sum_{m=2}^{\infty}\frac{1}{m!}\biggl(\frac{dZ_{ki}}{\sqrt{\sigma_{ii}^c}}\biggr)^m\biggr\}\le 1+\frac{d^2}{2}E\biggl\{ \frac{Z_{ki}^2}{\sigma_{ii}^c}\sum_{m=2}^{\infty}\frac{1}{(m-2)!}\biggl(\frac{dZ_{ki}}{\sqrt{\sigma_{ii}^c}}\biggr)^{m-2}\biggr\}\\
&\quad\le 1+\frac{d^2}{2}E\biggl\{\frac{Z_{ki}^2}{\sigma_{ii}^c}\exp\biggl(\frac{|dZ_{ki}|}{\sqrt{\sigma_{ii}^c}}\biggr)\biggr\}\le 1+\frac{d^2}{2}E\biggl\{\frac{Z_{ki}^2}{\sigma_{ii}^c}\exp\biggl(\frac{d^2}{2\eta_1}+\frac{\eta_1Z_{ki}^2}{2\sigma_{ii}^c}\biggr)\biggr\}\\
&\quad\le 1+\frac{d^2}{\eta_1}\exp\biggl(\frac{d^2}{2\eta_1}\biggr)E\biggl\{\exp\biggl(\frac{\eta_1Z_{ki}^2}{\sigma_{ii}^c}\biggr)\biggr\}
\le 1+\frac{K_1d^2}{\eta_1}\exp\biggl(\frac{d^2}{2\eta_1}\biggr)\\
&\quad\le\exp\biggl(\frac{d^2}{2\eta_1}\biggr)\biggl(1+\frac{K_1d^2}{\eta_1}\biggr)\le\exp\biggl\{\frac{(2K_1+1)d^2}{2\eta_1}\biggr\}.
\end{align*}
Applying Hoeffding's inequality for sub-Gaussian variables yields
\[
T_1\le2\exp\biggl\{-\frac{n\eta_1t_2}{2(2K_1+1)\sqrt{\sigma_{ii}^c\tilde\sigma_{jj}^c}}\biggr\}.
\]
Similarly, we obtain the same bound for the term $T_2$. Therefore,
\[
P\{B_{ij}(t_2)\}\le 4\exp\biggl\{-\frac{n\eta_1t_2}{2(2K_1+1)\sqrt{\sigma_{ii}^c\tilde\sigma_{jj}^c}}\biggr\}\le 4\exp\biggl\{-\frac{n\eta_1t_2}{2(2K_1+1)R^2}\biggr\}.
\]
For any $\xi>0$, let $t_2=\alpha_{2n}\sqrt{(\log p)/n}$, where $\alpha_{2n}=2\eta_1^{-1}R^2(2K_1+1)(\xi+2)\sqrt{(\log p)/n}$. Then it follows that
\begin{equation}\label{Sup_align7}
P\Bigg\{\Biggl|\Biggl(\frac{1}{n}\sum_{k=1}^nZ_{ki}\Biggr)\Biggl(\frac{1}{n}\sum_{k=1}^n\tilde Z_{kj}\Biggr)\Biggr|>\alpha_{2n}\sqrt{\frac{\log p}{n}}\Biggr\}\le 4p^{-\xi-2}.
\end{equation}
Combining \eqref{Sup_align5}--\eqref{Sup_align7}, we conclude that
\[
P\bigg\{|(\widehat\bSigma_c\bOmega_0-\bSigma_c\bOmega_0)_{ij}|>\alpha\sqrt{\frac{\log p}{n}}\text{ for some }1\le i,j\le p\biggr\}\le 8p^{-\xi},
\]
where $\alpha=6\eta_1^{-1}R^2(2K_1+1)(\xi+2)>\alpha_1+\alpha_{2n}$ with $\eta_1=\eta/(2R^6)$ and $K_1=K$. This proves part (a).

To show part (b), let $D_{kij}=Z_{ki}\tilde Z_{kj}-E(Z_{ki}\tilde Z_{kj})$ and $\theta=\max_{1\le i,j\le p}E(D_{kij}^2)$, which, by Lemma \ref{lem:tail}, is bounded by a constant depending only on $\gamma$, $\ve$, and $K_1'$. Also, define
\[
D_{kij}'=D_{kij}I\biggl\{|Z_{ki}\tilde Z_{kj}|\le\sqrt{\frac{n}{(\log p)^3}}\biggr\},\qquad D_{kij}^{*}=D_{kij}-D_{kij}'.
\]
By Bernstein's inequality and some basic calculations, we have
\begin{align}\label{Sup_align8}
&P\Biggl\{\max_{1\le i, j\le p}\Biggl|\sum_{k=1}^nD_{kij}'\Biggr|>\sqrt{(\theta+1)(4+\xi)n\log p}\Biggr\}\notag\\
&\quad\le p^2\max_{1\le i, j\le p}P\Biggl\{\Biggl|\sum_{k=1}^nD_{kij}'\Biggr|>\sqrt{(\theta+1)(4+\xi)n\log p}\Biggr\}\notag\\
&\quad\le 2p^2\exp\Biggl\{-\frac{(\theta+1)(4+\xi)n\log p}{2nE(D_{1ij}')^2+2n\sqrt{(\theta+1)(4+\xi)}/(3\log p)}\Biggr\}\notag\\
&\quad\le 2p^2\exp\Biggl\{-\frac{(\theta+1)(4+\xi)\log p}{2\theta+2\sqrt{(\theta+1)(4+\xi)}/(3\log p)}\Biggr\}=O(p^{-\xi/2}).
\end{align}
Let $\alpha_{3n}=\max_{1\le i, j\le p}E|Z_{ki}\tilde Z_{kj}|I\{|Z_{ki}\tilde Z_{kj}|> \sqrt{n/(\log p)^3}\}$. Then, by the Cauchy--Schwarz inequality and Lemma \ref{lem:tail},
\begin{align}\label{Sup_align9}
\alpha_{3n}&\le \frac{\max_{1\le i, j\le p}E(|Z_{ki}\tilde Z_{kj}|^{2\gamma+2+\ve/2})}{\{n/(\log p)^3\}^{\gamma+1/2+\ve/4}}\notag\\
&\le \frac{\max_{1\le i, j\le p}\{E(|Z_{ki}|^{4\gamma+4+\ve})E(|\tilde Z_{kj}|^{4\gamma+4+\ve})\}^{1/2}}{\{n/(\log p)^3\}^{\gamma+1/2+\ve/4}} \notag\\
&\le K_1'\frac{(\log p)^{3(\gamma+1/2+\ve/4)}}{n^{\ve/4}}\frac{1}{n^{\gamma+1/2}}=o(n^{-\gamma-1/2}).
\end{align}	
Applying Markov's inequality and Lemma \ref{lem:tail} yields
\begin{align}\label{Sup_align10}
&P\Biggl(\max_{1\le i,j\le p}\Biggl|\sum_{k=1}^nD_{kij}^{*}\Biggr|>2n\alpha_{3n}\Biggr)\notag\\
&\quad\le P\Biggl(\max_{1\le i,j\le p}\Biggl|\sum_{k=1}^nZ_{ki}\tilde Z_{kj}I\{|Z_{ki}\tilde Z_{kj}|>\sqrt{n/(\log p)^3}\}\Biggr|>n\alpha_{3n} \Biggr)\notag\\
&\quad\le P\Biggl(\max_{1\le i,j\le p}\sum_{k=1}^n|Z_{ki}\tilde Z_{kj}|I\{Z_{ki}^2+\tilde Z_{kj}^2>2\sqrt{n/(\log p)^3}\}>n\alpha_{3n}\Biggr) \notag\\
&\quad\le npP\biggl(Z_1^2>\sqrt{\frac{n}{(\log p)^3}}\biggr)+npP\biggl(\tilde Z_1^2>\sqrt{\frac{n}{(\log p)^3}}\biggr)\notag\\
&\quad\le \frac{2K_1'np}{n^{\gamma+1}}\frac{(\log p)^{3(\gamma+1+\ve/4)}}{n^{\ve/8}}\frac{1}{n^{\ve/8}}=O(n^{-\ve/8}).
\end{align}
Combining \eqref{Sup_align8} and \eqref{Sup_align10}, it follows that
\begin{equation}\label{Sup_align11}
P\Biggl\{\max_{1\le i, j\le p}\Biggl|\sum_{k=1}^nD_{kij}\Biggr|>\sqrt{(\theta+1)(4+\xi)}\sqrt{\frac{\log p}{n}}+\alpha_{3n}\Biggr\} =O(p^{-\xi/2}+n^{-\ve/8}).
\end{equation}
By the same techniques and Bernstein's inequality, we obtain
\begin{equation}\label{Sup_align12}
P\Biggl\{\max_{1\le i, j\le p} \Biggl|\Biggl(\frac{1}{n}\sum_{k=1}^nZ_{ki}\Biggr)\Biggl(\frac{1}{n}\sum_{k=1}^n\tilde Z_{kj}\Biggr)\Biggr|> \frac{R(4+\xi)\log p}{n}\Biggr\}=O(p^{-\xi/2}+n^{-\ve/8}).
\end{equation}
Combining \eqref{Sup_align9}, \eqref{Sup_align11}, and \eqref{Sup_align12} yields
\[
\|(\widehat\bSigma_c-\bSigma_c)\bOmega_0\|_{\max}\le 3R\sqrt{(\theta+1)(4+\xi)}\sqrt{\frac{\log p}{n}}
\]
with probability at least $1-O(p^{-\xi/2}+n^{-\ve/8})$, which proves part (b) and concludes the proof.
\end{proof}

We now proceed to prove Lemma \ref{lem:conc} with the aid of Lemma \ref{lem:sigma_c}.

\begin{proof}[Proof of Lemma \ref{lem:conc}] Under Condition \ref{cond:subG}, we consider the event
\[
E_1=\biggl\{\|(\widehat\bSigma_c-\bSigma_c)\bOmega_0\|_{\max}\le\alpha\sqrt{\frac{\log p}{n}}\biggr\},
\]
which, by Lemma \ref{lem:sigma_c}, holds with probability at least $1-O(p^{-\xi})$. From part (a) of Proposition \ref{prop:omega_c} we have
\begin{align*}
\|\widehat\bSigma_c\bOmega_0-\bG\|_{\max}&=\|\widehat\bSigma_c\bOmega_0-\bSigma_c\bOmega_c\|_{\max} \\
&\le \|(\widehat\bSigma_c-\bSigma_c)\bOmega_0\|_{\max}+\|\bSigma_c(\bOmega_0-\bOmega_c)\|_{\max} \\
&\equiv T_1+T_2.
\end{align*}	
Note that, on the event $E_1$,
\[
T_1\le\alpha\sqrt{\frac{\log p}{n}}.
\]
To bound the term $T_2$, it follows from $\bSigma_c=\bG\bSigma_0\bG$ and Theorem \ref{thm:rel} that
\begin{align*}
T_2&=\biggl\|(\bG\bSigma_0\bG)\biggl(\frac{\bOmega_0\bone_p\bone_p^T\bOmega_0}{\bone_p^T\bOmega_0\bone_p}\biggr)\biggr\|_{\max} =\biggl\|\frac{1}{p}\bG\bSigma_0\bone_p\bone_p^T\bOmega_0\biggr\|_{\max}\\
&\le\biggl\|\frac{1}{p}\bSigma_0\bone_p\bone_p^T\bOmega_0\biggr\|_{\max}+\biggl\|\frac{1}{p^2}(\bone_p^T\bSigma_0\bone_p)\bone_p\bone_p^T\bOmega_0 \biggr\|_{\max}\\
&\le\frac{1}{p}\|\bSigma_0\|_{L_1}\|\bOmega_0\|_{L_1}+\frac{1}{p}\lambda_{\max}(\bSigma_0)\|\bOmega_0\|_{L_1}\\
&\le\frac{1}{\sqrt p}\lambda_{\max}(\bSigma_0)\|\bOmega_0\|_{L_1}+\frac{1}{p}\lambda_{\max}(\bSigma_0)\|\bOmega_0\|_{L_1}\\
&\le\frac{RM_p}{\sqrt p}+\frac{RM_p}{p}=O\biggl(\frac{M_p}{\sqrt p}\biggr).
\end{align*}	
Combining these two pieces yields, for some constant $C_0>0$,
\[
\|\widehat\bSigma_c\bOmega_0-\bG\|_{\max}\le C_0\biggl(\sqrt{\frac{\log p}{n}}+\frac{M_p}{\sqrt p}\biggr)
\]
with probability at least $1-O(p^{-\xi})$. Under Condition \ref{cond:poly}, we similarly obtain, for some constant $C_0'>0$,
\[
\|\widehat\bSigma_c\bOmega_0-\bG\|_{\max}\le C_0'\biggl(\sqrt{\frac{\log p}{n}}+\frac{M_p}{\sqrt p}\biggr)
\]
with probability at least $1-O(p^{-\xi/2}+n^{-\ve/8})$. This completes the proof of Lemma \ref{lem:conc}.
\end{proof}

\subsection{Proof of Theorem \ref{thm:omega}}
By the definition of $\widetilde\bomega_j$, we have, for all $j$,
\[
\|\widetilde\bomega_j\|_1\le \|\bomega_j^0\|_1\le \|\bOmega_0\|_{L_1},
\]
where $\bomega_j^0$ is the $j$th column of $\bOmega_0$. Thus, $\|\widetilde\bOmega\|_{L_1}\le\|\bOmega_0\|_{L_1}\le M_p$. In view of Lemma \ref{lem:conc}, we condition on the event
\[
E_2=\biggl\{\|\widehat\bSigma_c\bOmega_0-\bG\|_{\max}\le C_0\biggl(\sqrt{\frac{\log p}{n}}+\frac{M_p}{\sqrt p}\biggr)\biggr\},
\]
which holds with probability at least $1-O(p^{-\xi})$ under Condition \ref{cond:subG}. Choosing $\lambda_j\asymp\sqrt{(\log p)/n}+M_p/\sqrt p$ for all $j$, we have
\begin{align*}
&\|\widetilde\bOmega-\bOmega_0\|_{\max}\\
&\quad\le\|(\bG-\bOmega_0\widehat\bSigma_c)\widetilde\bOmega\|_{\max}+\|\bOmega_0(\widehat\bSigma_c\widetilde\bOmega-\bG)\|_{\max} +\biggl\|\frac{1}{p}\bone_p\bone_p^T\widetilde\bOmega\biggr\|_{\max}+\biggl\|\frac{1}{p}\bOmega_0\bone_p\bone_p^T\biggr\|_{\max}\\
&\quad\le\|\widehat\bSigma_c\bOmega_0-\bG\|_{\max}\|\widetilde\bOmega\|_{L_1}+\|\widehat\bSigma_c\widetilde\bOmega-\bG\|_{\max}\|\bOmega_0\|_{L_1} +\frac{1}{p}\|\widetilde\bOmega\|_{L_1}+\frac{1}{p}\|\bOmega_0\|_{L_1}\\
&\quad\le C_0\biggl(M_p\sqrt{\frac{\log p}{n}}+\frac{M_p^2}{\sqrt p}\biggr)+\Bigl(\max_{1\le j\le p}\lambda_j\Bigr)M_p+\frac{2M_p}{p}\\
&\quad=O\biggl(M_p\sqrt{\frac{\log p}{n}}+\frac{M_p^2}{\sqrt p}\biggr).
\end{align*}
Clearly, the same bound holds for the symmetrized version $\widehat\bOmega$, that is, for some constant $C_1>0$,
\begin{equation}\label{Sup_align13}
\|\widehat\bOmega-\bOmega_0\|_{\max}\le C_1\biggl(M_p\sqrt{\frac{\log p}{n}}+\frac{M_p^2}{\sqrt p}\biggr)
\end{equation}
with probability at least $1-O(p^{-\xi})$. Let $c_{np}=\|\widehat\bOmega-\bOmega_0\|_{\max}$ and $\bd_j=\widehat\bomega_j-\bomega_j^0=\bd_j^{(1)}+\bd_j^{(2)}$, where $\widehat\bomega_j$ is the $j$th column of $\widehat\bOmega$, and $\bd_j^{(k)}=(d_{1j}^{(k)},\dots,d_{pj}^{(k)})^T$, $k=1,2$, with
\begin{align*}
d_{ij}^{(1)}&=\hat\omega_{ij}I(|\hat\omega_{ij}|>2c_{np})-\omega_{ij}^0,\\
d_{ij}^{(2)}&=\hat\omega_{ij}I(|\hat\omega_{ij}|\le2c_{np}).
\end{align*}
It follows from the definition of $\widehat\bOmega$ that
\begin{align*}
\|\bd_j^{(2)}\|_1-\|\bd_j^{(1)}\|_1&\le\sum_{i=1}^p|\hat\omega_{ij}|I(|\hat\omega_{ij}|\le2c_{np})+\sum_{i=1}^p |\hat\omega_{ij}|I(|\hat\omega_{ij}|>2c_{np})-\|\bomega_j^0\|_1\\
&=\|\widehat\bomega_j\|_1-\|\bomega_j^0\|_1\le\|\widetilde\bomega_j\|_1-\|\bomega_j^0\|_1\le 0,
\end{align*}
and hence $\|\bd_j\|_1\le 2\|\bd_j^{(1)}\|_1$. Furthermore,
\begin{align*}
\|\bd_j^{(1)}\|_1&\le\sum_{i=1}^p|\hat\omega_{ij}-\omega_{ij}^0|I(|\hat\omega_{ij}|>2c_{np}) +\sum_{i=1}^p|\omega_{ij}^0|I(|\hat\omega_{ij}|\le2c_{np})\\
&\le\sum_{i=1}^pc_{np}I(|\omega_{ij}^0|>c_{np})+\sum_{i=1}^p|\omega_{ij}^0|I(|\omega_{ij}^0|\le3c_{np})\\
&\le c_{np}^{1-q}s_0(p)+(3c_{np})^{1-q}s_0(p)<4c_{np}^{1-q}s_0(p).
\end{align*}
Combining with the bound \eqref{Sup_align13} for $c_{np}$ yields, for some constant $C_2>0$,
\begin{equation}\label{Sup_align14}
\|\widehat\bOmega-\bOmega_0\|_{L_1}\le C_2s_0(p)\biggl(M_p\sqrt{\frac{\log p}{n}}+\frac{M_p^2}{\sqrt p}\biggr)^{1-q}
\end{equation}
with probability at least $1-O(p^{-\xi})$. Finally, \eqref{Sup_align13}, \eqref{Sup_align14}, and the inequality $\|\bA\|_F^2\le p\|\bA\|_{L_1}\|\bA\|_{\max}$ together imply
\[
\frac{1}{p}\|\widehat\bOmega-\bOmega_0\|_F^2\le C_1C_2s_0(p)\biggl(M_p\sqrt{\frac{\log p}{n}}+\frac{M_p^2}{\sqrt p}\biggr)^{2-q}.
\]
The case under Condition \ref{cond:poly} is similar. This completes the proof of Theorem \ref{thm:omega}.

\subsection{Proof of Corollary \ref{cor:minimax}}
By the assumption $M_p=o(\sqrt{p(\log p)/n})$, the second terms in the bounds of Theorem \ref{thm:omega} are negligible since
\[
\frac{M_p^2/\sqrt p}{M_p\sqrt{(\log p)/n}}=M_p\sqrt{\frac{n}{p\log p}}=o(1).
\]
This proves Corollary \ref{cor:minimax}.

\subsection{Proof of Theorem \ref{thm:supp}}
Note that $\sgn(\hat\omega_{ij}^t)\ne\sgn(\omega_{ij}^0)$ implies either (a) $\omega_{ij}^0=0$ and $\hat\omega_{ij}^t\ne0$, or (b) $\omega_{ij}^0\ne0$ and $\hat\omega_{ij}^t\omega_{ij}^0\le0$. In case (a), by the definition of $\hat\omega_{ij}^t$,
\[
|\hat\omega_{ij}^t-\omega_{ij}^0|=|\hat\omega_{ij}^t|>\tau_{np}.
\]
In case (b), by the assumption $\min_{(i,j):\omega_{ij}^0\ne0}|\omega_{ij}^0|>2\tau_{np}$,
\[
|\hat\omega_{ij}^t-\omega_{ij}^0|>2\tau_{np}-\tau_{np}=\tau_{np}.
\]
Then, by Theorem \ref{thm:omega}, it follows that
\[
P\{\sgn(\hat\omega_{ij}^t)\ne\sgn(\omega_{ij}^0)\text{ for some }1\le i,j\le p\}\le P\Bigl(\max_{1\le i,j\le p}|\hat\omega_{ij}^t-\omega_{ij}^0| >\tau_{np}\Bigr)=O(p^{-\xi}),
\]
which proves Theorem \ref{thm:supp}.

\subsection{Proof of Theorem \ref{thm:cv}}
We first introduce a technical lemma to be used in the proof of Theorem \ref{thm:cv}.

\begin{lemma}\label{lem:inner}
If Condition \ref{cond:subG} holds, then for any fixed $\bU_i\in\mathbb{R}^{p\times p}$ with $\|\bU_i\|_F^2=1$,
\begin{align*}
\max_{1\le i\le N}|\langle\bOmega_0^{1/2}(\widehat\bSigma_c^{(2)}-\bSigma_c)\bOmega_0^{1/2}\bU_i,\bU_i\rangle|&=O_p\biggl(\sqrt{\frac{\log N +\log p}{n_2}}\biggr),\\
\max_{1\le i\le N}|\langle(\widehat\bSigma_c^{(2)}-\bSigma_c)\bOmega_0, \bU_i\rangle|&=O_p\biggl(\sqrt{\frac{\log N +\log p}{n_2}}\biggr),
\end{align*}
where $\langle\cdot,\cdot\rangle$ denotes the Frobenius inner product.
\end{lemma}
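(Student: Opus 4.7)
The plan is to reduce each inner product, for each fixed $\bU_i$, to a centered sample average of quadratic forms in the centered log-ratio vectors $\bZ_k$, apply a Hanson--Wright style concentration inequality, and close with a union bound over $i=1,\dots,N$. By cyclicity of the trace,
\begin{align*}
\langle\bOmega_0^{1/2}(\widehat\bSigma_c^{(2)}-\bSigma_c)\bOmega_0^{1/2}\bU_i,\bU_i\rangle&=\mathrm{tr}\{\bA_i(\widehat\bSigma_c^{(2)}-\bSigma_c)\},\\
\langle(\widehat\bSigma_c^{(2)}-\bSigma_c)\bOmega_0,\bU_i\rangle&=\mathrm{tr}\{\mathbf{B}_i(\widehat\bSigma_c^{(2)}-\bSigma_c)\},
\end{align*}
where $\bA_i=\bOmega_0^{1/2}\bU_i\bU_i^T\bOmega_0^{1/2}$ and $\mathbf{B}_i=\bOmega_0\bU_i^T$; after symmetrizing inside the scalar quadratic form, both traces can be written through a single symmetric matrix $\bM$ (equal to $\bA_i$ in the first case, to $(\mathbf{B}_i+\mathbf{B}_i^T)/2$ in the second) whose Frobenius and spectral norms are $O(R)$ (see the obstacle below). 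Writing $\widehat\bSigma_c^{(2)}=n_2^{-1}\sum_k\bZ_k\bZ_k^T-\bar\bZ\bar\bZ^T$, each of these reduces to $n_2^{-1}\sum_k\{\bZ_k^T\bM\bZ_k-E(\bZ_k^T\bM\bZ_k)\}$ plus a $\bar\bZ^T\bM\bar\bZ$ correction; since $\bar\bZ$ has covariance $\bSigma_c/n_2$, the correction has mean $O(\|\bM\|_F\|\bSigma_c\|_F/n_2)=O(\sqrt{p}/n_2)$ with fluctuations of order $O_p(1/n_2)$, so it is negligible under the ambient scaling ($p=O(n_2\log p)$ suffices).

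For the main stochastic term we write $\bZ_k=\bG\bY_k$, so that $\bZ_k^T\bM\bZ_k=\bY_k^T(\bG\bM\bG)\bY_k$ is a centered quadratic form in the sub-Gaussian vector $\bY_k$ (Condition \ref{cond:subG}), with $\|\bG\bM\bG\|_F\le\|\bM\|_F$ and $\lambda_{\max}(|\bG\bM\bG|)\le\lambda_{\max}(|\bM|)$, both $O(R)$. Diagonalizing $\bG\bM\bG=\sum_l\mu_l\bp_l\bp_l^T$, each $(\bp_l^T\bY_k)^2-E(\bp_l^T\bY_k)^2$ is centered sub-exponential by Condition \ref{cond:subG}, exactly as the $U_{kij}^2-E(U_{kij}^2)$ terms handled in the proof of Lemma \ref{lem:sigma_c}(a); a weighted Bernstein argument (with weights $\mu_l$, variance proxy $\sum_l\mu_l^2\le\|\bG\bM\bG\|_F^2$, and scale proxy $\max_l|\mu_l|$) then yields
\[
P\Biggl(\Biggl|\frac{1}{n_2}\sum_{k=1}^{n_2}\{\bZ_k^T\bM\bZ_k-E(\bZ_k^T\bM\bZ_k)\}\Biggr|>t\Biggr)\le 2\exp\bigl\{-cn_2\min(t^2,t)\bigr\}
\]
for some constant $c>0$ depending only on $\eta$, $K$, and $R$. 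Taking $t=C\sqrt{(\log N+\log p)/n_2}$ with $C$ large enough and union-bounding over $i=1,\dots,N$ delivers the claimed rate for both quantities.

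The main obstacle is verifying that the Hanson--Wright parameters remain $O(R)$ uniformly in $i$, since $\bU_i$ is only constrained in Frobenius norm, not spectrally. This works out because $\|\bU_i\bU_i^T\|_F\le\|\bU_i\|_F^2=1$, $\lambda_{\max}(\bU_i\bU_i^T)\le\|\bU_i\|_F^2=1$, and the largest singular value of $\bU_i$ is bounded by $\|\bU_i\|_F=1$; combining with $\lambda_{\max}(\bOmega_0)\le R$ and $\lambda_{\max}(\bG)=1$ via Frobenius--spectral submultiplicativity yields $\|\bM\|_F=O(R)$ and $\lambda_{\max}(|\bM|)=O(R)$ uniformly in $i$. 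Consequently no $p$-dependent factor enters the final constants, and the resulting rate matches what one would obtain in the classical (non-compositional) setting.
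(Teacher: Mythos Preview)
There is a real gap in your diagonalization step. After writing $\bG\bM\bG=\sum_l\mu_l\bp_l\bp_l^T$, you invoke a ``weighted Bernstein argument'' with variance proxy $\sum_l\mu_l^2$ and scale proxy $\max_l|\mu_l|$ to obtain the sub-exponential tail for $W_k=\sum_l\mu_l\{(\bp_l^T\bY_k)^2-E(\bp_l^T\bY_k)^2\}$. That would be valid if the summands $(\bp_l^T\bY_k)^2$ were independent across $l$ for fixed $k$, but they are not: they share the same $\bY_k$. What you actually need is a Hanson--Wright inequality for $\bY_k^T(\bG\bM\bG)\bY_k$, and Condition~\ref{cond:subG}---sub-Gaussianity of every one-dimensional marginal---is \emph{not} sufficient for this in general; counterexamples with bounded covariance eigenvalues exist (e.g., $\bY=(\epsilon_1Z,\dots,\epsilon_pZ)$ with i.i.d.\ Rademacher $\epsilon_j$ and bounded nondegenerate $Z$, for which $\bY^T(\bI_p/\sqrt p)\bY=\sqrt pZ^2$ has variance of order $p$, not $O(1)$). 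The paper sidesteps this by decomposing the Frobenius inner product column by column: writing $\bu_{ij}$ for the $j$th column of $\bU_i$ and $\check\bZ_k=\bOmega_0^{1/2}\bZ_k$, one has $\langle\bOmega_0^{1/2}(\widehat\bSigma_c^{(2)}-\bSigma_c)\bOmega_0^{1/2}\bU_i,\bU_i\rangle=\sum_j\|\bu_{ij}\|_2^2H_{ij}$, where each $H_{ij}$ involves only the square of a \emph{single} projection $(\bu_{ij}/\|\bu_{ij}\|_2)^T\check\bZ_k$, to which Condition~\ref{cond:subG} applies directly. Since the weights $\|\bu_{ij}\|_2^2$ are nonnegative and sum to $1$, the whole expression is bounded by $\max_j|H_{ij}|$, and a union bound over the $Np$ directions gives the stated rate (this is also where the $\log p$ term enters, which your union bound over $N$ alone would not produce).

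For the first inequality your route can in fact be repaired, because $\bA_i=\bOmega_0^{1/2}\bU_i\bU_i^T\bOmega_0^{1/2}$ is positive semidefinite with $\mathrm{tr}(\bG\bA_i\bG)\le R$: Jensen's inequality then shows that $W_k$, being (up to the factor $R$) a convex combination of centered sub-exponentials, is itself sub-exponential with $O(R)$ parameters, no independence needed. But for the second inequality the matrix $(\bOmega_0\bU_i^T+\bU_i\bOmega_0)/2$ need not be semidefinite, and only its nuclear norm---which can be as large as $\sqrt pR$---controls the convexity argument, so this fix does not carry over. Your handling of the $\bar\bZ^T\bM\bar\bZ$ correction likewise introduces an unnecessary $\sqrt p/n_2$ term and the extraneous side condition $p=O(n_2\log p)$; the paper's column decomposition controls this term at rate $O_p((\log N+\log p)/n_2)$ directly via Hoeffding on each coordinate.
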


\begin{proof}
Let $\bu_{ij}$ denote the $j$th column of $\bU_i$, so that $\sum_{j=1}^p\|\bu_{ij}\|_2^2=1$. Further, let $\widetilde\bu_{ij}=\bu_{ij}/\|\bu_{ij}\|_2$ and $\check\bZ_k=\bOmega_0^{1/2}\bZ_k$. Then we write
\begin{align*}
&\langle\bOmega_0^{1/2}(\widehat\bSigma_c^{(2)}-\bSigma_c)\bOmega_0^{1/2}\bU_i,\bU_i\rangle\\
&\quad=\sum_{j=1}^p\Biggl\{\frac{1}{n_2}\sum_{k=1}^{n_2}(\bu_{ij}^T\check\bZ_k)^2-E(\bu_{ij}^T\check\bZ_k)^2\Biggr\}
+\sum_{j=1}^p\Biggl(\frac{1}{n_2}\sum_{k=1}^{n_2}\bu_{ij}^T\check\bZ_k\Biggr)^2,\\
&\quad=\sum_{j=1}^p\|\bu_{ij}\|_2^2\Biggl\{\frac{1}{n_2}\sum_{k=1}^{n_2}(\widetilde\bu_{ij}^T\check\bZ_k)^2-E(\widetilde\bu_{ij}^T\check\bZ_k)^2\Biggr\} +\sum_{j=1}^p\|\bu_{ij}\|_2^2\Biggl(\frac{1}{n_2}\sum_{k=1}^{n_2}\widetilde\bu_{ij}^T\check\bZ_k\Biggr)^2\\
&\quad\equiv T_1+T_2.
\end{align*}
By the same techniques as in the proof of Lemma \ref{lem:tail}, we can show that $(\widetilde\bu_{i, j}^T\check\bZ_k)^2$ is sub-exponential. Define $H_{ij}=n_2^{-1}\sum_{k=1}^{n_2}(\widetilde\bu_{ij}^T\check\bZ_k)^2-E(\widetilde\bu_{ij}^T\check\bZ_k)^2$. Then, by Bernstein's inequality and the union bound, there exists some constant $C_3>0$ such that
\[
P\Bigl(\max_{1\le i\le N}\max_{1\le j\le p}|H_{ij}|>t_3\Bigr)\le 2Np\exp(-C_3n_2t_3^2)
\]
for all sufficiently small $t_3>0$. For any $\xi>0$, letting $t_3=C_3'\sqrt{(\log N+\log p)/n_2}$ with $C_3'=\sqrt{(\xi+1)/C_3}$, we obtain
\[
P\biggl(\max_{1\le i\le N}\max_{1\le j\le p}|H_{ij}|\le C_3'\sqrt{\frac{\log N+\log p}{n_2}}\biggr)\ge 1-2(Np)^{-\xi},
\]
and hence
\[
T_1=O_p\biggl(\sqrt{\frac{\log N+\log p}{n_2}}\biggr).
\]
Also, we apply Hoeffding's inequality to obtain
\[
\max_{1\le i\le N}\max_{1\le j\le p}\Biggl|\frac{1}{n_2}\sum_{k=1}^{n_2}\widetilde\bu_{ij}^T\check\bZ_k\Biggr| =O_p\biggl(\sqrt{\frac{\log N+\log p}{n_2}}\biggr),
\]
and hence
\[
T_2=O_p\biggl(\frac{\log N+\log p}{n_2}\biggr).
\]
Combining these two pieces yields the first inequality. The second inequality follows essentially from the above arguments and those used for Lemma \ref{lem:conc}.
\end{proof}

\begin{proof}[Proof of Theorem \ref{thm:cv}]
Let $\mu_{j\ell}=\delta_j\ell/N$ for $j=1,\dots,p$ and $\ell=1,\dots,N$. Define the oracle estimator $\widehat\bOmega^o=(\widehat\bomega_1^o, \dots,\widehat\bomega_p^o)$, where
\[
\widehat\bomega_j^o=\widehat\bomega_j^{(1)}(\mu_j^o),\quad\mu_j^o=\argmin_{\mu_{j\ell}}L\{\widehat\bomega_j^{(1)}(\mu_{j\ell}),\bSigma_c\},\quad j=1,\dots,p,
\]
and the optimal data-driven estimator $\widehat\bOmega^d=(\widehat\bomega_1^d,\dots,\widehat\bomega_p^d)$, where
\[
\widehat\bomega_j^d=\widehat\bomega_j^{(1)}(\mu_j^d),\quad\mu_j^d=\argmin_{\mu_{j\ell}}L\{\widehat\bomega_j^{(1)}(\mu_{j\ell}), \widehat\bSigma_c^{(2)}\},\quad j=1,\dots,p.
\]
Using $\bSigma_c=\bG\bSigma_0\bG$, we write
\begin{align*}
L(\bomega_j,\bSigma_c)&=\frac{1}{2}\bomega_j^T\bSigma_c\bomega_j-\be_j^T\bG\bomega_j=\frac{1}{2}(\bG\bomega_j)^T\bSigma_0\bG\bomega_j -\be_j^T\bG\bomega_j\\
&=\frac{1}{2}(\bG\bomega_j-\bomega_j^0+\bomega_j^0)^T\bSigma_0(\bG\bomega_j-\bomega_j^0+\bomega_j^0)-\be_j^T(\bG\bomega_j-\bomega_j^0+\bomega_j^0)\\
&=\frac{1}{2}\|\bSigma_0^{1/2}(\bG\bomega_j-\bomega_j^0)\|_2^2-\frac{1}{2}\be_j^T\bomega_j^0.
\end{align*}
Define $a_{np}=\|\bSigma_0^{1/2}(\bG\widehat\bOmega^o-\bOmega_0)\|_F^2/p$ and $b_{np}=\|\bSigma_0^{1/2}(\bG\widehat\bOmega^d-\bOmega_0)\|_F^2/p$. By Theorem \ref{thm:omega}, for some sufficiently large $\delta_j$, there exists a vector of tuning parameters $\bmu=(\mu_{1\ell_1},\dots,\mu_{p\ell_p})^T$ such that $\mu_{j\ell_j}\asymp\sqrt{(\log p)/n}+M_p/\sqrt p$ for all $j$, and
\[
\frac{1}{p}\|\widehat\bOmega^{(1)}(\bmu)-\bOmega_0\|_F^2=O_p\biggl\{s_0(p)\biggl(M_p^2\sqrt{\frac{\log p}{n}}+\frac{M_p^2}{\sqrt p}\biggr)^{2-q}\biggr\},
\]
where $\widehat\bOmega^{(1)}(\bmu)=(\widehat\bomega_1^{(1)}(\mu_{1\ell_1}),\dots,\widehat\bomega_p^{(1)}(\mu_{p\ell_p}))$. It follows from the assumption $\min_{1\le j\le p}\delta_j\ge NC_0(\sqrt{(\log p)/n}+M_p/\sqrt p)$ that $\|\widehat\bOmega^{(1)}(\bmu)\|_{L_1}\le \|\bOmega_0\|_{L_1}\le M_p$. Then, by the definition of $\widehat\bomega_j^o$ and the fact that $\lambda_{\max}(\bSigma_0)\le R$, we have
\begin{align}\label{Sup_align15}
a_{np}&=\frac{1}{p}\sum_{j=1}^p\|\bSigma_0^{1/2}(\bG\widehat\bomega_j^o-\bomega_j^0)\|_2^2\le \frac{1}{p}\sum_{j=1}^p\|\bSigma_0^{1/2}\{\bG\widehat\bomega_j^{(1)}(\mu_{j\ell_j})-\bomega_j^0\}\|_2^2\notag\\
&\le\frac{R}{p}\sum_{j=1}^p\|\bG\widehat\bomega_j^{(1)}(\mu_{j\ell_j})-\bomega_j^0\|_2^2 =\frac{R}{p}\|\bG\widehat\bOmega^{(1)}(\bmu)-\bOmega_0\|_F^2\notag\\
&\le\frac{2R}{p}\biggl\{\|\widehat\bOmega^{(1)}(\bmu)-\bOmega_0\|_F^2+
\biggl\|\frac{1}{p}\bone_p\bone_p^T\widehat\bOmega^{(1)}(\bmu)\biggr\|_F^2\biggr\}\notag\\
&\le\frac{2R}{p}\{\|\widehat\bOmega^{(1)}(\bmu)-\bOmega_0\|_F^2+\|\widehat\bOmega^{(1)}(\bmu)\|_{L_1}^2\}\notag\\
&=O_p\biggl\{s_0(p)\biggl(M_p\sqrt{\frac{\log p}{n}}+\frac{M_p^2}{\sqrt p}\biggr)^{2-q}+\frac{M_p^2}{p}\biggr\}\notag\\
&=O_p\biggl\{s_0(p)\biggl(M_p\sqrt{\frac{\log p}{n}}+\frac{M_p^2}{\sqrt p}\biggr)^{2-q}\biggr\}.
\end{align}
Clearly, $a_{np}\le b_{np}$ by the definition of $\widehat\bomega_j^o$. We will show that $b_{np}=O_p(a_{np}+M_p^2/p)$ and hence the result follows. Noting that $\widehat\bSigma_c^{(2)}=\bG\widehat\bSigma_c^{(2)}\bG$, $\bSigma_c=\bG\bSigma_0\bG$, and $\bG^2=\bG$, we write
\begin{align}\label{Sup_align16}
L(\bomega_j,\widehat\bSigma_c^{(2)})&=\frac{1}{2}\bomega_j^T\widehat\bSigma_c^{(2)}\bomega_j-\be_j^T\bG\bomega_j\notag\\
&=\frac{1}{2}(\bG\bomega_j-\bomega_j^0+\bomega_j^0)^T\widehat\bSigma_c^{(2)}(\bG\bomega_j -\bomega_j^0+\bomega_j^0)-\be_j^T\bG(\bG\bomega_j-\bomega_j^0+\bomega_j^0)\notag\\
&=\frac{1}{2}(\bG\bomega_j-\bomega_j^0)^T(\widehat\bSigma_c^{(2)}-\bSigma_c)(\bG\bomega_j-\bomega_j^0)
+\frac{1}{2}(\bG\bomega_j-\bomega_j^0)^T\bSigma_0(\bG\bomega_j-\bomega_j^0)\notag\\
&\relph{}+(\bG\bomega_j-\bomega_j^0)^T\bSigma_0(\bomega_j^0-\bG\bomega_j^0)+(\bG\bomega_j-\bomega_j^0)^T(\widehat\bSigma_c^{(2)}\bomega_j^0 -\bG\be_j)\notag\\
&\relph{}+\frac{1}{2}(\bomega_j^0-\bG\bomega_j^0)^T\bSigma_0(\bomega_j^0-\bG\bomega_j^0)+\frac{1}{2}\bomega_j^{0T}\widehat\bSigma_c^{(2)} \bomega_j^0-\be_j^T\bG\bomega_j^0.
\end{align}
Then, by the definition of $\widehat\bomega_j^d$,
\[
\frac{1}{p}\sum_{j=1}^pL(\widehat\bomega_j^d,\widehat\bSigma_c^{(2)})\le \frac{1}{p}\sum_{j=1}^pL(\widehat\bomega_j^o,\widehat\bSigma_c^{(2)}).
\]
Expanding using \eqref{Sup_align16} and rearranging gives
\begin{align*}
&\frac{1}{p}\sum_{j=1}^p(\bG\widehat\bomega_j^d-\bomega_j^0)^T(\widehat\bSigma_c^{(2)}-\bSigma_c)(\bG\widehat\bomega_j^d-\bomega_j^0) +b_{np}\notag\\
&\quad\le\frac{1}{p}\sum_{j=1}^p(\bG\widehat\bomega_j^o-\bomega_j^0)^T(\widehat\bSigma_c^{(2)}-\bSigma_c)(\bG\widehat\bomega_j^o-\bomega_j^0) +a_{np}\notag\\
&\quad\relph{}-\frac{2}{p}\sum_{j=1}^p(\bG\widehat\bomega_j^d-\bomega_j^0)^T\bSigma_0(\bomega_j^0-\bG\bomega_j^0)+\frac{2}{p}\sum_{j=1}^p (\bG\widehat\bomega_j^o-\bomega_j^0)^T\bSigma_0(\bomega_j^0-\bG\bomega_j^0)\notag\\
&\quad\relph{}-\frac{2}{p}\sum_{j=1}^p(\bG\widehat\bomega_j^d -\bomega_j^0)^T(\widehat\bSigma_c^{(2)}\bomega_j^0-\bG\be_j)
+\frac{2}{p}\sum_{j=1}^p(\bG\widehat\bomega_j^o -\bomega_j^0)^T(\widehat\bSigma_c^{(2)}\bomega_j^0-\bG\be_j),
\end{align*}
or
\begin{align}\label{Sup_align17}
&\frac{1}{p}\langle(\widehat\bSigma_c^{(2)}-\bSigma_c)(\bG\widehat\bOmega^d-\bOmega_0),\bG\widehat\bOmega^d-\bOmega_0\rangle+b_{np}\notag\\
&\quad\le\frac{1}{p}\langle(\widehat\bSigma_c^{(2)}-\bSigma_c)(\bG\widehat\bOmega^o-\bOmega_0),\bG\widehat\bOmega^o-\bOmega_0\rangle+a_{np}\notag\\
&\quad\relph{}-\frac{2}{p}\langle\bSigma_0(\bG\widehat\bOmega^d-\bOmega_0),\bOmega_0-\bG\bOmega_0\rangle+\frac{2}{p}\langle\bSigma_0
(\bG\widehat\bOmega^o-\bOmega_0),\bOmega_0-\bG\bOmega_0\rangle\notag\\
&\quad\relph{}-\frac{2}{p}\langle\widehat\bSigma_c^{(2)}\bOmega_0-\bG,\bG\widehat\bOmega^d-\bOmega_0\rangle+\frac{2}{p}\langle \widehat\bSigma_c^{(2)}\bOmega_0-\bG,\bG\widehat\bOmega^o-\bOmega_0\rangle.
\end{align}
Now, we write
\begin{align*}
&\frac{1}{p}\langle(\widehat\bSigma_c^{(2)}-\bSigma_c)(\bG\widehat\bOmega^d-\bOmega_0),\bG\widehat\bOmega^d-\bOmega_0\rangle\\
&\quad=\frac{1}{p}\langle(\widehat\bSigma_c^{(2)}-\bSigma_c)\bOmega_0^{1/2}\bSigma_0^{1/2}(\bG\widehat\bOmega^d-\bOmega_0),\bOmega_0^{1/2} \bSigma_0^{1/2}(\bG\widehat\bOmega^d-\bOmega_0)\rangle\\
&\quad=\langle\bOmega_0^{1/2}(\widehat\bSigma_c^{(2)}-\bSigma_c)\bOmega_0^{1/2}\bU_d,\bU_d\rangle b_{np},
\end{align*}
where $\bU_d=\bSigma_0^{1/2}(\bG\widehat\bOmega^d-\bOmega_0)/\|\bSigma_0^{1/2}(\bG\widehat\bOmega^d-\bOmega_0)\|_F$. Thus, by Lemma \ref{lem:inner} and the assumption $\log N=o(n)$, we have
\[
\frac{1}{p}\langle(\widehat\bSigma_c^{(2)}-\bSigma_c)(\bG\widehat\bOmega^d-\bOmega_0),\bG\widehat\bOmega^d-\bOmega_0\rangle=o_p(b_{np}).
\]
Similarly,
\[
\frac{1}{p}\langle(\widehat\bSigma_c^{(2)}-\bSigma_c)(\bG\widehat\bOmega^o-\bOmega_0),\bG\widehat\bOmega^o-\bOmega_0\rangle=o_p(a_{np}).
\]
By the Cauchy--Schwarz inequality, we have
\[
\frac{1}{p}|\langle\bSigma_0(\bG\widehat\bOmega^d-\bOmega_0),\bOmega_0-\bG\bOmega_0\rangle|=\frac{1}{p}|\langle\bSigma_0^{1/2}
(\bG\widehat\bOmega^d-\bOmega_0),\bSigma_0^{1/2}(\bOmega_0-\bG\bOmega_0)\rangle|\le\frac{\sqrt RM_p}{\sqrt p}b_{np}^{1/2}.
\]
Thus,
\[
\frac{1}{p}\langle\bSigma_0(\bG\widehat\bOmega^d-\bOmega_0),\bOmega_0-\bG\bOmega_0\rangle=O\biggl(\frac{M_p}{\sqrt p}b_{np}^{1/2}\biggr).
\]
Similarly,
\[
\frac{1}{p}\langle\bSigma_0(\bG\widehat\bOmega^o-\bOmega_0),\bOmega_0-\bG\bOmega_0\rangle=O\biggl(\frac{M_p}{\sqrt p}a_{np}^{1/2}\biggr).
\]
To bound the last two terms in \eqref{Sup_align17}, note that
\begin{align*}
&\frac{1}{p}|\langle\widehat\bSigma_c^{(2)}\bOmega_0-\bG, \bG\widehat\bOmega^d-\bOmega_0\rangle|\\
&\quad\le\frac{1}{p}|\langle(\widehat\bSigma_c^{(2)}-\bSigma_c)\bOmega_0,\bG\widehat\bOmega^d-\bOmega_0\rangle|
+\frac{1}{p}|\langle\bSigma_c(\bOmega_0-\bOmega_c), \bG\widehat\bOmega^d-\bOmega_0\rangle|\\
&\quad\equiv T_1+T_2.
\end{align*}
For the term $T_1$, we have
\begin{align*}
T_1&=\frac{1}{p}|\langle(\widehat\bSigma_c^{(2)}-\bSigma_c)\bOmega_0,\bV_d\rangle|\|\bG\widehat\bOmega^d-\bOmega_0\|_F\\
&\le\frac{\sqrt{R}}{p}|\langle(\widehat\bSigma_c^{(2)}-\bSigma_c)\bOmega_0,\bV_d\rangle|\|\bSigma_0^{1/2}(\bG\widehat\bOmega^d-\bOmega_0)\|_F\\
&=\sqrt{\frac{R}{p}}|\langle(\widehat\bSigma_c^{(2)}-\bSigma_c)\bOmega_0,\bV_d\rangle|b_{np}^{1/2},
\end{align*}
where $\bV_d=(\bG\widehat\bOmega^d-\bOmega_0)/\|\bG\widehat\bOmega^d-\bOmega_0\|_F$. Applying Lemma \ref{lem:inner}, we obtain
\[
T_1=o_p(b_{np}^{1/2}p^{-1/2}).
\]
Also, it follows from $R^{-1}\le \lambda_{\min}(\bSigma_0)\le\lambda_{\max}(\bSigma_0)\le R$ and $M_p=o(\sqrt p)$ that
\begin{align*}
T_2&\le\frac{1}{p}\|\bSigma_c(\bOmega_0-\bOmega_c)\|_F\|\bG\widehat\bOmega^d-\bOmega_0\|_F\\
&\le\frac{1}{p}\biggl(\biggl\|\frac{1}{p}\bSigma_0\bone_p\bone_p^T\bOmega_0\biggr\|_{F}+\biggl\|\frac{1}{p^2}(\bone_p^T\bSigma_0\bone_p) \bone_p\bone_p^T\bOmega_0\biggr\|_{F}\biggr)\|\bG\widehat\bOmega^d-\bOmega_0\|_F\\
&=O\biggl(\frac{M_pb_{np}^{1/2}}{p}\biggr)=o(b_{np}^{1/2}p^{-1/2}).
\end{align*}
Combining the bounds for $T_1$ and $T_2$ yields
\[
\frac{1}{p}\langle\widehat\bSigma_c^{(2)}\bOmega_0-\bG, \bG\widehat\bOmega^d-\bOmega_0\rangle=o_p(b_{np}^{1/2}p^{-1/2}),
\]
and similarly,
\[
\frac{1}{p}\langle\widehat\bSigma_c^{(2)}\bOmega_0-\bG, \bG\widehat\bOmega^o-\bOmega_0\rangle=o_p(a_{np}^{1/2}p^{-1/2}).
\]
Putting these pieces together, \eqref{Sup_align17} becomes
\[
b_{np}\{1+o_{p}(1)\}\le a_{np}\{1+o_{p}(1)\}+O_p\biggl(\frac{M_p}{\sqrt p}b_{np}^{1/2}+\frac{M_p}{\sqrt p}a_{np}^{1/2}\biggr).
\]
Dividing both sides by $b_{np}^{1/2}$ and using the fact that $a_{np}\le b_{np}$, we find that $b_{np}=O_p(a_{np}+M_p^2/p)$. Also, the assumption $\min_{1\le j\le p}\delta_j\ge NC_0(\sqrt{\log p/n}+M_p/\sqrt p)$ implies that $\|\widehat\bOmega^{d}\|_{L_1}\le M_p$. Therefore, it follows from \eqref{Sup_align15} and $\lambda_{\min}(\bSigma_0)\ge R^{-1}$ that
\begin{align*}
\frac{1}{p}\|\widehat\bOmega^d-\bOmega_0\|_F^2&\le\frac{2}{p}\|\bG\widehat\bOmega^d-\bOmega_0\|_F^2+\frac{2}{p}\biggl\|\frac{1}{p}\bone_p\bone_p^T \widehat\bOmega^d\biggr\|_F^2\\
&\le 2Rb_{np}+\frac{2}{p}\|\widehat\bOmega^d\|_{L_1}^2\\
&=O_p\biggl\{s_0(p)\biggl(M_p\sqrt{\frac{\log p}{n}}+\frac{M_p^2}{\sqrt p}\biggr)^{2-q}\biggr\},
\end{align*}
which completes the proof of Theorem \ref{thm:cv}.
\end{proof}

\subsection{Proof of Theorem \ref{thm:perturb}}
Note that
\begin{align*}
\|\check\bSigma_c\bOmega_0-\bG\|_{\max}&\le\|\widehat\bSigma_c\bOmega_0-\bG\|_{\max}+\|(\check\bSigma_c-\widehat\bSigma_c)\bOmega_0\|_{\max}\\
&\le\lambda_0+\|\check\bSigma_c-\widehat\bSigma_c\|_{\max}\|\bOmega_0\|_{L_1}\le\lambda_0+M_p\|\bE\|_{\max}.
\end{align*}
If we choose $\lambda\ge\lambda_0+M_p\|\bE\|_{\max}$, then $\bOmega_0$ belongs to the feasible sets of both problem \eqref{eq:opt} and its perturbed version, and hence $\|\widehat\bOmega\|_{L_1}\le\|\bOmega_0\|_{L_1}\le M_p$ and $\|\check\bOmega\|_{L_1}\le\|\bOmega_0\|_{L_1}\le M_p$. From the decomposition
\begin{align*}
\check\bOmega-\widehat\bOmega&=\bG\check\bOmega+\frac{1}{p}\bone_p\bone_p^T\check\bOmega-\widehat\bOmega\bG-\frac{1}{p}\widehat\bOmega\bone_p \bone_p^T\\
&=(\bG-\widehat\bOmega\widehat\bSigma_c)\check\bOmega+\widehat\bOmega(\widehat\bSigma_c-\check\bSigma_c)\check\bOmega+\widehat\bOmega (\check\bSigma_c\check\bOmega-\bG)+\frac{1}{p}\bone_p\bone_p^T\check\bOmega-\frac{1}{p}\widehat\bOmega\bone_p\bone_p^T,
\end{align*}
we have
\begin{align*}
\|\check\bOmega-\widehat\bOmega\|_{\max}&\le\|\bG-\widehat\bOmega\widehat\bSigma_c\|_{\max}\|\check\bOmega\|_{L_1}+\|\widehat\bOmega\|_{L_1} \|\widehat\bSigma_c-\check\bSigma_c\|_{\max}\|\check\bOmega\|_{L_1}\\
&\relph{}+\|\widehat\bOmega\|_{L_1}\|\check\bSigma_c\check\bOmega-\bG\|_{\max}+\frac{1}{p}\|\check\bOmega\|_{L_1} +\frac{1}{p}\|\widehat\bOmega\|_{L_1}\\
&\le 2\lambda M_p+M_p^2\|\bE\|_{\max}+2M_p/p.
\end{align*}
Choosing $\lambda=\lambda_0+M_p\|\bE\|_{\max}$ yields the desired bound.

\section{An Alternative Method}\label{sec:aclime}
In an earlier version of our article, we considered an alternative estimation strategy that follows the ACLIME approach more closely, which we briefly discuss here. That strategy consists of a
two-stage adaptive constrained $\ell_1$-minimization procedure for estimating $\bOmega_c$ and an additional thresholding step for obtaining a sparse estimator of $\bOmega_0$. More specifically, the first-stage estimator $\widetilde\bOmega_c= (\widetilde\bomega_1,\dots,\widetilde\bomega_p)$ is defined through the solutions $\widetilde\bomega_j$ to the optimization problems
\[
\text{minimize }\|\bomega_j\|_1\quad\text{subject to}\quad\left\|\widehat\bSigma_c\bomega_j-\left(\be_j-\frac{\bone_p}{p}\right)\right\|_{\infty}\le\lambda_1\left(\max_{1\le i\le p}\hat\sigma_{ii}^c\right)\omega_{jj}\sqrt{\frac{\log p}{n}},
\]
where $\omega_{jj}$ is the $j$th component of $\bomega_j$, and $\lambda_1>0$ is a tuning parameter. After obtaining the estimates $\tilde\omega_{jj}^c$ of $\omega_{jj}^c$, we define the estimator $\widetilde\bOmega_c^*=(\widetilde\bomega_1^*,\dots,\widetilde\bomega_p^*)$ with $\widetilde\bomega_j^*$ being the solutions to the optimization problems
\begin{align*}
&\text{minimize }\|\bomega_j\|_1\quad\text{subject to}\notag\\
&\left|\left\{\hat\bSigma_c\bomega_j-\left(\be_j-\frac{\bone_p}{p}\right)\right\}_i\right|\le\lambda_2\sqrt{\frac{\hat\sigma_{ii}^c \tilde\omega_{jj}^c\log p}{n}},\quad i=1,\dots,p,
\end{align*}
where $\lambda_2>0$ is a tuning parameter. The second-stage estimator $\widehat\bOmega_c=(\hat\omega_{ij}^c)$ of $\bOmega_c$ is then formed by symmetrizing $\widetilde\bOmega_c^*$:
\[
\hat\omega_{ij}^c=\hat\omega_{ji}^c=\tilde\omega_{ij}^*I(|\tilde\omega_{ij}^*|\le|\tilde\omega_{ji}^*|)+\tilde\omega_{ji}^* I(|\tilde\omega_{ij}^*|>|\tilde\omega_{ji}^*|).
\]
Finally, we define our estimator as the hard-thresholded estimator $\widehat\bOmega^*=(\hat\omega_{ij}^*)$ with
\[
\hat\omega_{ij}^*=\hat\omega_{ij}^cI(|\hat\omega_{ij}^c|>2\tau_{np}),
\]
where $\tau_{np}=C_0(M_p\sqrt{(\log p)/n}+M_p^2/p)$ is the ideal threshold level. The most notable difference of this three-step method from our single-step method presented in Section \ref{sec:method} is that a scaling factor of $\sqrt{(\log p)/n}$ is still used in the adaptive estimation procedure and the identification error is not taken into account until the thresholding step.

At the cost of a required thresholding step and the more stringent assumption that $M_p=o(\sqrt{n/\log p}\wedge\sqrt p)$, we can show that the three-step method achieves a faster rate of convergence ($M_p^2/p$) for the identification error than the single-step method ($M_p^2/\sqrt{p}$). On the other hand, the single-step method has the following noteworthy advantages: (1) it is methodologically much simpler and incurs less computational burden; (2) the tuning parameters involved in the procedure are easier to tune via cross-validation; (3) stronger theoretical guarantees can be provided for the data-driven version; (4) it has better practical performance on simulated and real data. Therefore, we regard the single-step method as superior to the three-step method and have chosen to present it in the main article.

\section{Additional Numerical Results}

\subsection{Timing Results}
We compare the computational efficiency of all methods based on model (a) in our simulation settings. The timing results for computing the entire solution path and for tuning parameter selection via fivefold cross-validation with varying dimensionality are shown in Figure \ref{fig:runtime}. As seen from Figure \ref{fig:runtime}(a), our method is the fastest among the four competitors. Owing to high computational burdens, CD-trace and gCoda use BIC for tuning parameter selection, and are not included in the comparisons in Figure \ref{fig:runtime}(b). We observe that the computational cost of our method grows much more slowly than that of SPIEC-EASI and is only about the same as that of gCoda even with the computational overhead of cross-validation.

\begin{figure}
\begin{subfigure}{.5\textwidth}
\centering
\includegraphics[width=\textwidth]{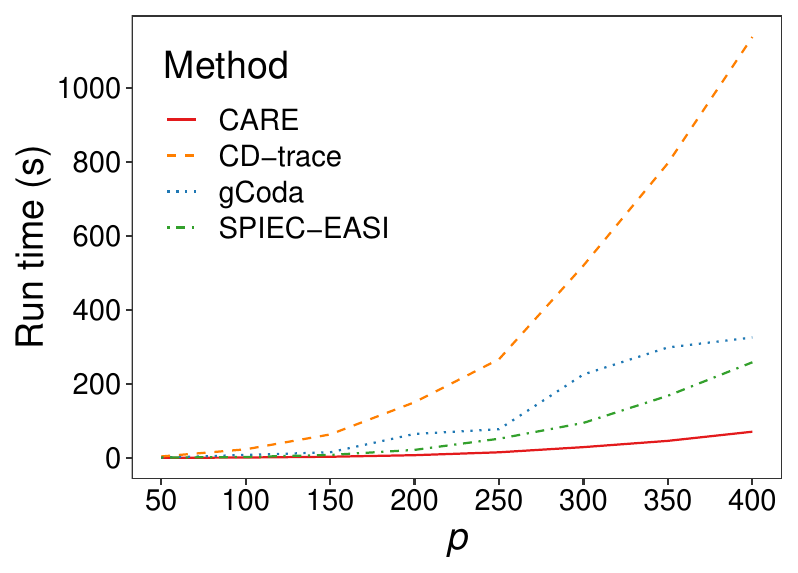}
\caption{}
\end{subfigure}%
\begin{subfigure}{.5\textwidth}
\centering
\includegraphics[width=\textwidth]{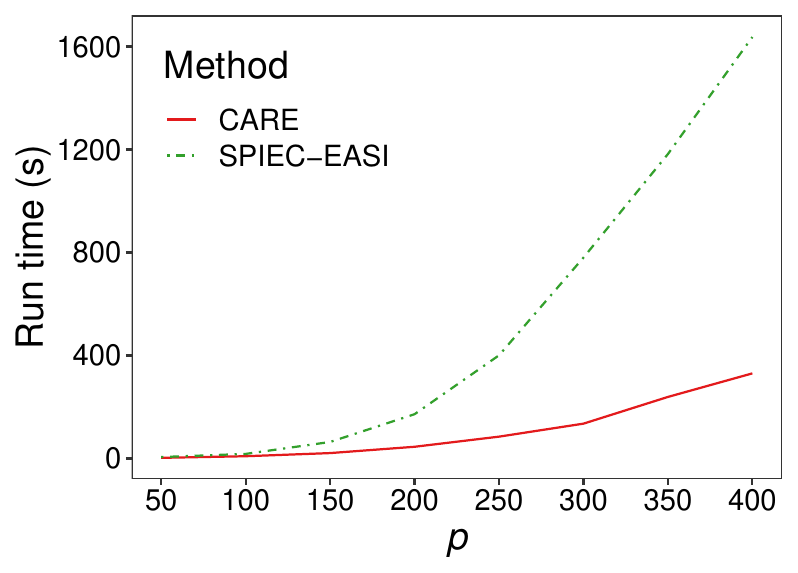}
\caption{}
\end{subfigure}
\caption{Timing results for (a) computing the entire solution path and (b) fivefold cross-validation using different methods with varying dimensionality.}\label{fig:runtime}
\end{figure}

\subsection{Simulation Results Based on Settings of \citet{yuan2019}}
We closely follow the simulation settings of \citet{yuan2019} and reexamine the performance of all methods. The main differences from our settings in the main article are as follows: (1) they fixed $p=50$ and varied the sample size from $n=200$ to 500, while we fixed $n=200$ and varied the dimensionality from $p=50$ to 400; (2) they fixed the number of edges at $e=150$, while the number of edges in our settings is determined by the network models and varies with $p$; (3) they included the scale-free graph, which is not covered in our settings. The simulation results under these settings are reported in Tables \ref{tab:band}--\ref{tab:scale_free}, and the ROC curves are shown in Figure \ref{fig:roc_yuan}. We observe similar trends to our previous results, suggesting that our method consistently outperforms the competitors.

\begin{table}
\caption{Means and standard errors (in parentheses) of performance measures for different methods in the band graph over 100 replications.}\label{tab:band}
\def~{\phantom{0}}
\begin{tabular*}{\textwidth}{@{}c*{5}{@{\extracolsep{\fill}}c}@{}}
\hline
& \multicolumn{5}{c}{Method}\\
\cline{2-6}
$n$ & CARE & Oracle & CD-trace & gCoda & SPIEC-EASI\\
\hline
\multicolumn{6}{c}{Spectral norm loss}\\
200 & 13.55 (0.54) & 13.34 (0.56) & 16.48 (0.17) & 16.35 (0.37) & 23.27 (0.02)\\
300 & 11.72 (0.56) & 11.31 (0.68) & 16.13 (0.16) & 16.30 (0.28) & 23.23 (0.04)\\
500 & ~9.50 (0.74) & ~8.68 (0.72) & 15.67 (0.15) & 16.31 (0.23) & 23.18 (0.04)\\
\multicolumn{6}{c}{Matrix $\ell_1$-norm loss}\\
200 & 21.09 (1.16) & 20.46 (1.12) & 24.55 (0.44) & 23.56 (0.73) & 31.29 (0.07)\\
300 & 18.59 (1.11) & 17.64 (1.44) & 23.65 (0.43) & 23.26 (0.61) & 31.14 (0.12)\\
500 & 15.47 (1.41) & 13.68 (1.21) & 22.50 (0.37) & 23.11 (0.48) & 30.99 (0.15)\\
\multicolumn{6}{c}{Frobenius norm loss}\\
200 & 40.27 (1.09) & 39.76 (1.26) & 56.56 (0.49) & 55.37 (1.62) & 94.24 (0.15)\\
300 & 33.29 (1.06) & 32.32 (1.18) & 54.78 (0.41) & 55.06 (1.36) & 93.78 (0.36)\\
500 & 25.34 (1.01) & 24.05 (1.00) & 52.26 (0.41) & 55.12 (1.03) & 93.35 (0.36)\\
\multicolumn{6}{c}{True positive rate (\%)}\\
200 &  71.1 (3.4)  &  71.2 (3.6)  &  42.8 (2.2)  &  57.4 (4.9)  &  34.2 (3.2)\\
300 &  86.9 (2.8)  &  87.6 (2.7)  &  53.4 (2.4)  &  59.0 (4.3)  &  42.9 (6.9)\\
500 &  97.1 (1.2)  &  97.8 (1.1)  &  68.0 (2.2)  &  59.3 (3.1)  &  49.8 (5.3)\\
\multicolumn{6}{c}{False positive rate (\%)}\\
200 &  ~3.0 (0.6)  &  ~2.3 (0.5)  &  ~5.1 (0.6)  &  ~7.3 (0.8)  &  ~5.7 (0.6)\\
300 &  ~4.8 (0.7)  &  ~3.8 (0.7)  &  ~5.5 (0.6)  &  ~7.2 (0.7)  &  ~6.7 (1.0)\\
500 &  ~7.4 (0.9)  &  ~5.8 (0.9)  &  ~6.1 (0.5)  &  ~7.0 (0.6)  &  ~7.6 (0.8)\\
\hline
\end{tabular*}
\end{table}

\begin{table}
\caption{Means and standard errors (in parentheses) of performance measures for different methods in the block graph over 100 replications.}\label{tab:block}
\def~{\phantom{0}}
\begin{tabular*}{\textwidth}{@{}c*{5}{@{\extracolsep{\fill}}c}@{}}
\hline
& \multicolumn{5}{c}{Method}\\
\cline{2-6}
$n$ & CARE & Oracle & CD-trace & gCoda & SPIEC-EASI\\
\hline
\multicolumn{6}{c}{Spectral norm loss}\\
200 & 14.02 (0.63) & 14.25 (0.72) & 16.34 (0.24) & 17.35 (0.30) & 22.98 (0.01)\\
300 & 12.37 (0.70) & 12.32 (0.78) & 15.86 (0.20) & 17.38 (0.26) & 22.98 (0.01)\\
500 & ~9.73 (0.81) & ~9.23 (0.84) & 15.38 (0.17) & 17.37 (0.23) & 22.96 (0.03)\\
\multicolumn{6}{c}{Matrix $\ell_1$-norm loss}\\
200 & 22.69 (1.32) & 22.65 (1.21) & 27.64 (0.53) & 28.83 (0.71) & 35.33 (0.06)\\
300 & 20.11 (1.23) & 19.56 (1.41) & 26.82 (0.52) & 28.92 (0.58) & 35.34 (0.06)\\
500 & 16.03 (1.22) & 15.12 (1.33) & 25.62 (0.48) & 29.02 (0.47) & 35.27 (0.13)\\
\multicolumn{6}{c}{Frobenius norm loss}\\
200 & 41.61 (1.07) & 41.32 (1.17) & 55.87 (0.50) & 61.78 (1.50) & 96.21 (0.06)\\
300 & 35.08 (1.08) & 34.10 (1.33) & 54.49 (0.46) & 62.03 (1.23) & 96.20 (0.07)\\
500 & 26.65 (1.00) & 25.34 (0.93) & 52.51 (0.40) & 62.10 (0.99) & 95.99 (0.33)\\
\multicolumn{6}{c}{True positive rate (\%)}\\
200 &  61.3 (3.7)  &  61.5 (3.6)  &  35.7 (2.5)  &  32.8 (4.3)  &  37.3 (2.6)\\
300 &  79.7 (3.0)  &  81.9 (3.9)  &  45.9 (2.4)  &  31.9 (3.4)  &  36.6 (2.6)\\
500 &  95.3 (1.6)  &  96.7 (1.4)  &  59.3 (2.1)  &  31.2 (2.9)  &  40.4 (7.5)\\
\multicolumn{6}{c}{False positive rate (\%)}\\
200 &  ~1.9 (0.5)  &  ~1.6 (0.5)  &  ~0.8 (0.3)  &  ~1.8 (0.4)  &  ~0.4 (0.2)\\
300 &  ~3.4 (0.6)  &  ~3.0 (0.8)  &  ~0.9 (0.3)  &  ~1.4 (0.3)  &  ~0.3 (0.2)\\
500 &  ~6.3 (0.9)  &  ~5.4 (0.9)  &  ~1.1 (0.3)  &  ~0.9 (0.3)  &  ~0.4 (0.5)\\
\hline
\end{tabular*}
\end{table}

\begin{table}
\caption{Means and standard errors (in parentheses) of performance measures for different methods in the scale-free graph over 100 replications.}\label{tab:scale_free}
\def~{\phantom{0}}
\begin{tabular*}{\textwidth}{@{}c*{5}{@{\extracolsep{\fill}}c}@{}}
\hline
& \multicolumn{5}{c}{Method}\\
\cline{2-6}
$n$ & CARE & Oracle & CD-trace & gCoda & SPIEC-EASI\\
\hline
\multicolumn{6}{c}{Spectral norm loss}\\
200 & 15.62 (1.08) & 15.23 (1.14) & 19.12 (0.22) & 20.16 (0.50) & ~27.84 (0.03)\\
300 & 12.72 (1.07) & 12.13 (1.00) & 18.63 (0.21) & 20.16 (0.42) & ~27.75 (0.09)\\
500 & ~9.77 (0.90) & ~8.98 (0.84) & 18.06 (0.19) & 20.09 (0.31) & ~27.59 (0.10)\\
\multicolumn{6}{c}{Matrix $\ell_1$-norm loss}\\
200 & 38.19 (5.13) & 37.29 (5.16) & 42.34 (0.77) & 44.41 (1.36) & ~63.39 (0.08)\\
300 & 29.87 (4.43) & 28.76 (4.31) & 40.71 (0.70) & 44.32 (1.19) & ~63.19 (0.20)\\
500 & 22.79 (3.63) & 20.84 (3.22) & 38.76 (0.63) & 44.03 (0.97) & ~62.83 (0.25)\\
\multicolumn{6}{c}{Frobenius norm loss}\\
200 & 40.59 (1.40) & 39.99 (1.46) & 57.32 (0.58) & 62.95 (2.18) & 100.57 (0.16)\\
300 & 33.14 (1.29) & 31.99 (1.34) & 55.29 (0.47) & 63.19 (1.84) & 100.11 (0.45)\\
500 & 25.11 (1.14) & 23.81 (1.11) & 52.74 (0.47) & 63.03 (1.39) & ~99.33 (0.48)\\
\multicolumn{6}{c}{True positive rate (\%)}\\
200 &  73.2 (4.2)  &  74.2 (4.5)  &  56.0 (2.1)  &  55.4 (3.9)  &  ~50.2 (3.2)\\
300 &  89.6 (3.2)  &  91.2 (2.7)  &  64.9 (1.9)  &  55.1 (3.5)  &  ~54.7 (4.4)\\
500 &  98.6 (1.1)  &  99.1 (0.8)  &  75.2 (1.7)  &  55.6 (3.1)  &  ~60.9 (2.9)\\
\multicolumn{6}{c}{False positive rate (\%)}\\
200 &  ~3.0 (0.6)  &  ~2.7 (0.5)  &  ~8.0 (0.1)  &  ~8.4 (0.8)  &  ~29.9 (3.2)\\
300 &  ~4.4 (0.7)  &  ~4.0 (0.7)  &  ~9.0 (0.7)  &  ~8.1 (0.6)  &  ~33.1 (2.7)\\
500 &  ~6.4 (0.8)  &  ~5.6 (0.8)  &  10.1 (0.7)  &  ~7.9 (0.6)  &  ~36.6 (1.4)\\
\hline
\end{tabular*}
\end{table}

\begin{figure}
\centering
\includegraphics[width=.33\textwidth]{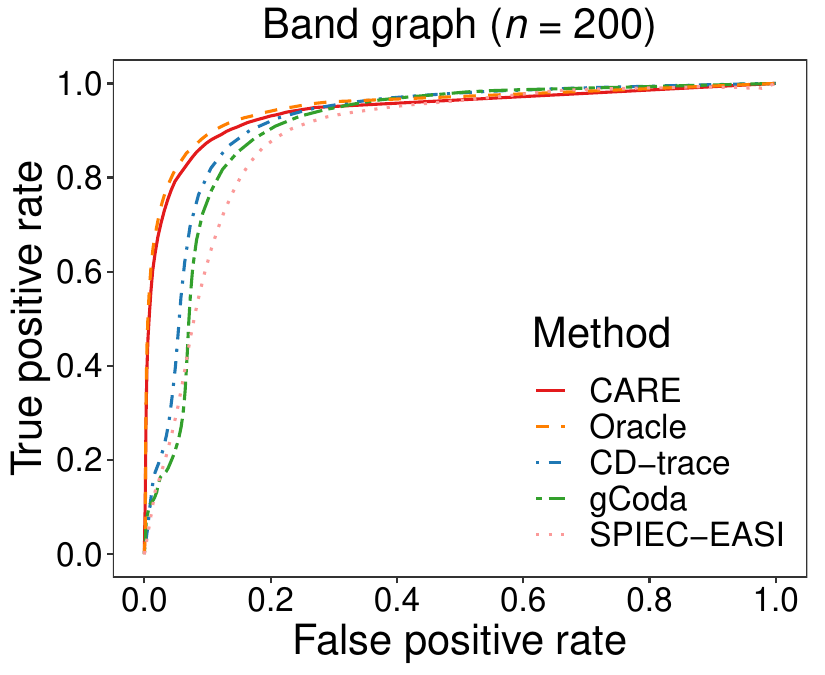}%
\includegraphics[width=.33\textwidth]{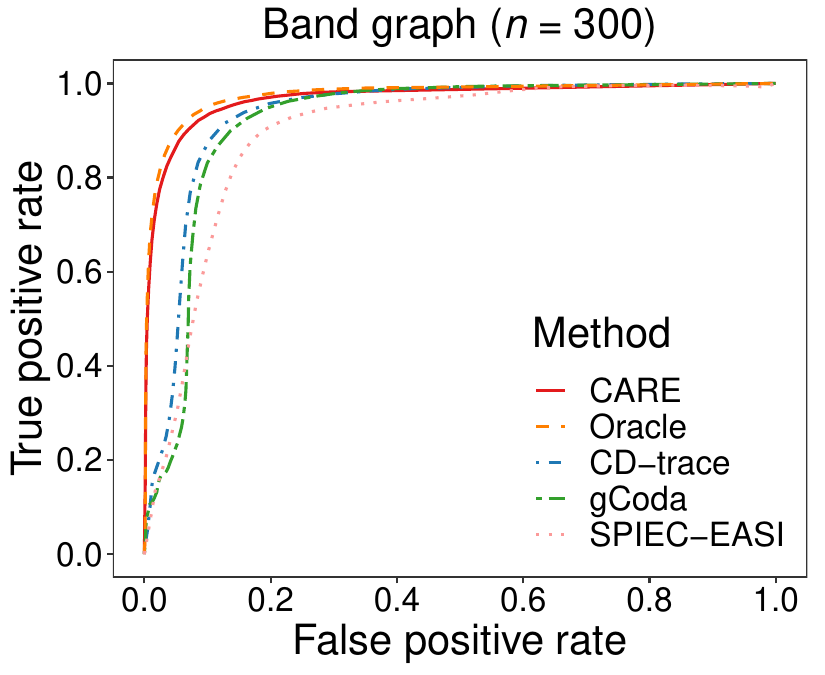}%
\includegraphics[width=.33\textwidth]{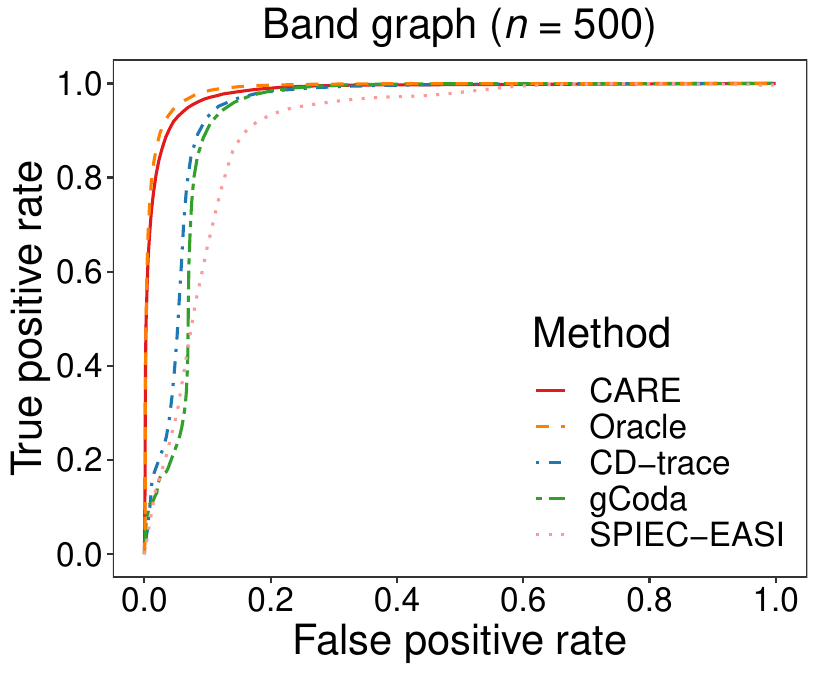}\\
\includegraphics[width=.33\textwidth]{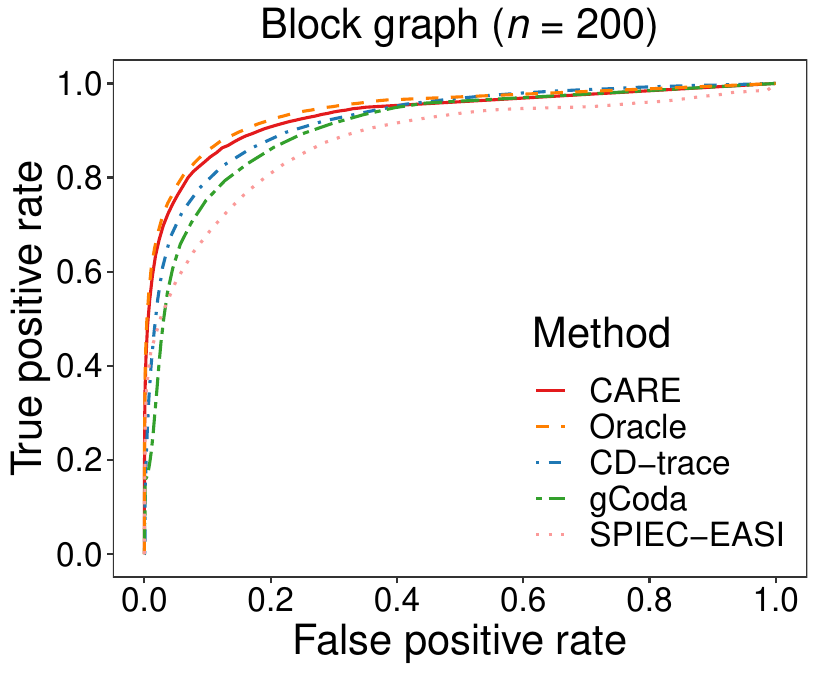}%
\includegraphics[width=.33\textwidth]{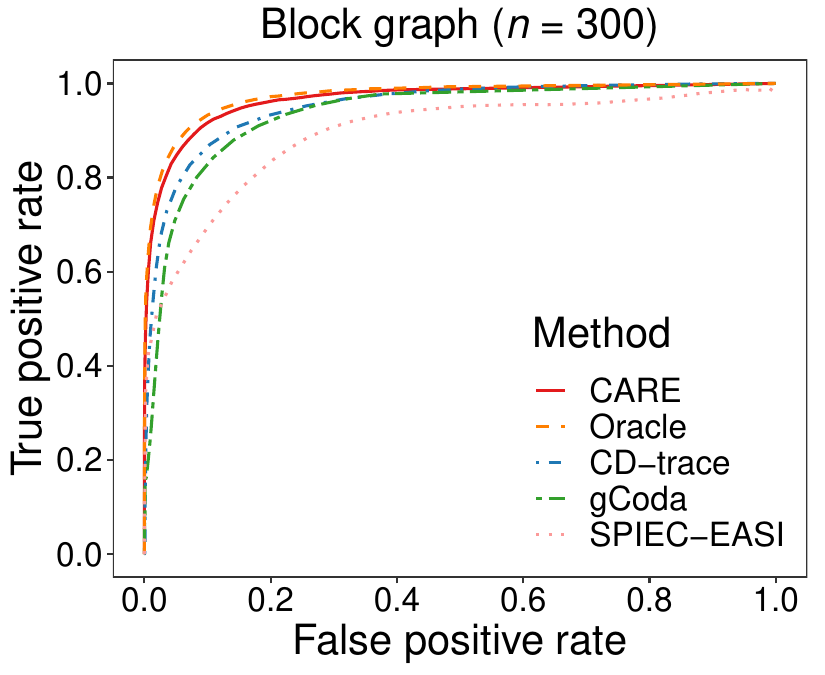}%
\includegraphics[width=.33\textwidth]{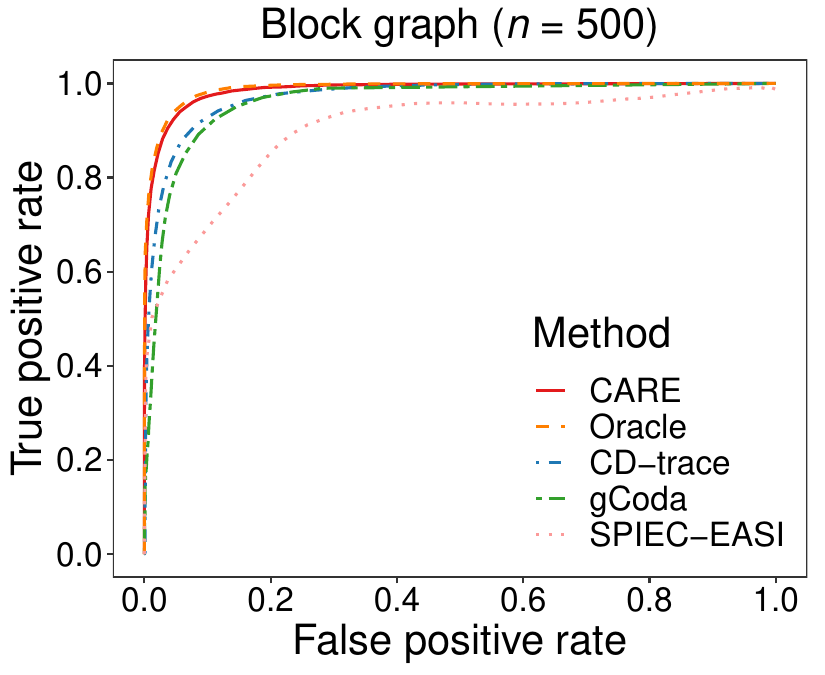}\\
\includegraphics[width=.33\textwidth]{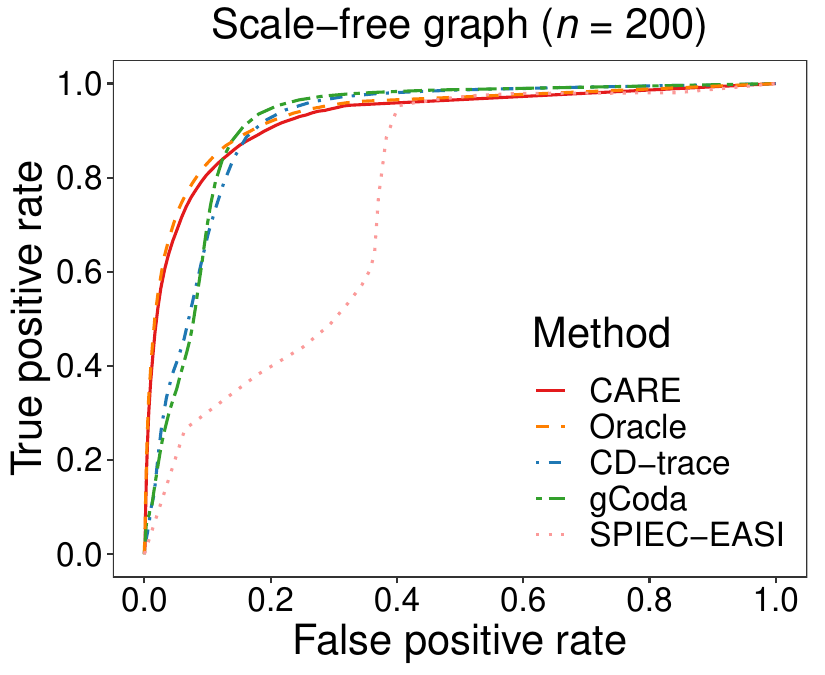}%
\includegraphics[width=.33\textwidth]{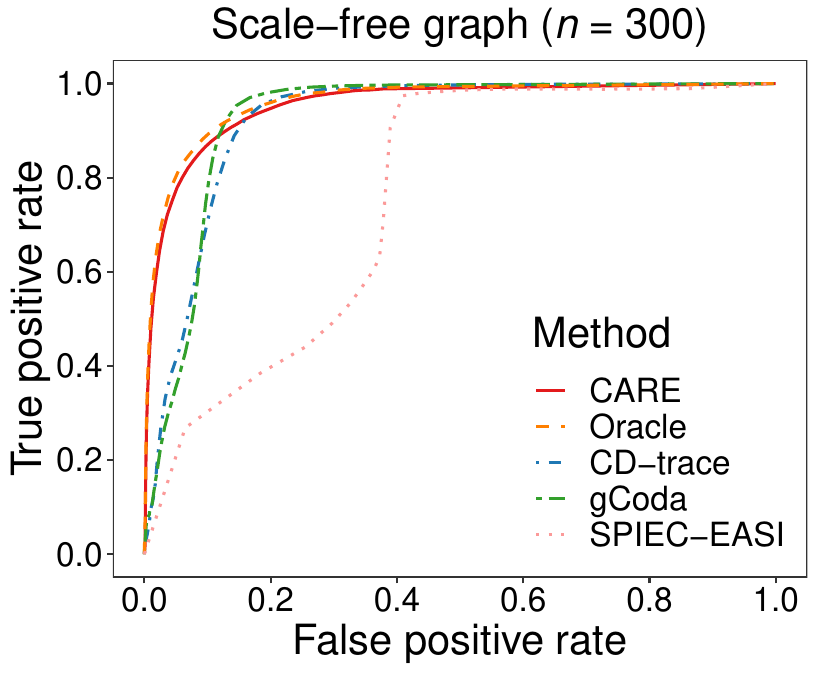}%
\includegraphics[width=.33\textwidth]{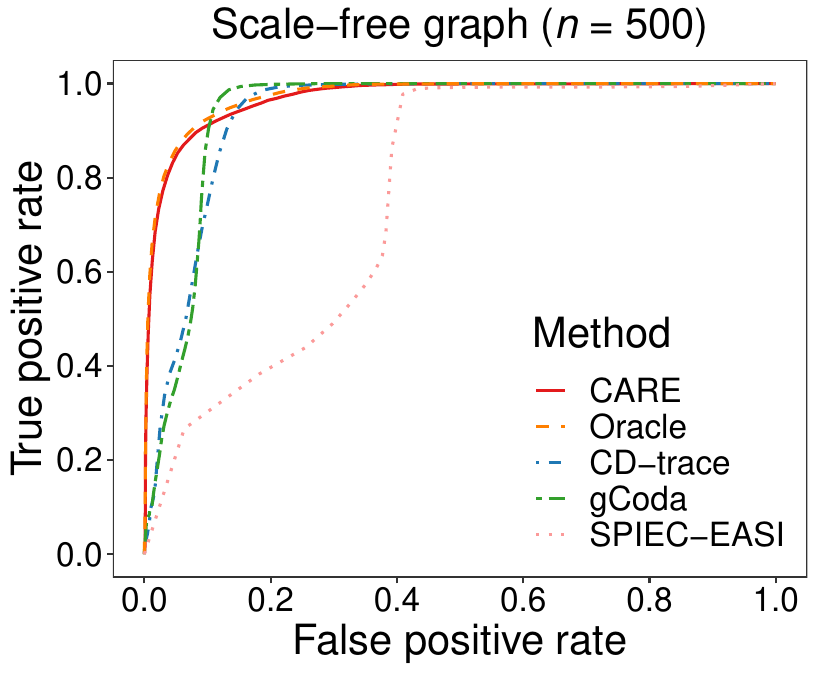}
\caption{The ROC curves for different methods in the band, block, and scale-free graphs with $n=200,300,500$.}\label{fig:roc_yuan}
\end{figure}

\subsection{More Discussion on Microbiome Data}\label{ssec:genera}
A further comparison of the lean and obese networks identifies the interactions between \emph{Barnesiella} and \emph{Butyricimonas} and between \emph{Dialister} and \emph{Phascolarctobacterium} as the shared links between the two groups. Among these genera, \emph{Barnesiella} and \emph{Butyricimonas} have been found to drive the efficacy of prebiotic intervention in obese patients \citep{Rodriguez.etal2020}. Also, the genera \emph{Dialister} and \emph{Phascolarctobacterium} have been associated with insulin sensitivity and metabolic benefits in obesity and metabolic syndrome \citep{mocanu2021}. Moreover, we observe that the interaction between \emph{Prevotella} and \emph{Paraprevotella} has the highest strength in the obese networks constructed by CARE, gCoda, and SPIEC-EASI. The genus \emph{Prevotella} is known to characterize one of three human enterotypes \citep{arumugam2011} and has been suggested to play a role in dietary fiber-induced improvement in glucose metabolism in mice and humans \citep{kovatcheva2015}. The genus \emph{Paraprevotella} is a member of the family \emph{Prevotellaceae} closely related to \emph{Prevotella} and has been implicated in type 2 diabetes \citep{forslund2015}. More microbial interactions in the identified networks are yet to be elucidated and could be exploited to develop therapeutic strategies for obesity.

\subsection{Supplementary Tables and Figures}
The simulation results for models (c) and (d) over 100 replications are reported in Tables \ref{tab:model_c} and \ref{tab:model_d}, respectively. The microbial interaction networks identified by the CD-trace, gCoda, and SPIEC-EASI methods are shown in Figures \ref{fig:CD_net}--\ref{fig:SPIEC_net}.

\begin{table}
\caption{Means and standard errors (in parentheses) of performance measures for different methods in model (c) over 100 replications.}\label{tab:model_c}
\def~{\phantom{0}}
\begin{tabular*}{\textwidth}{@{}c*{5}{@{\extracolsep{\fill}}c}@{}}
\hline
& \multicolumn{5}{c}{Method}\\
\cline{2-6}
$p$ & CARE & Oracle & CD-trace & gCoda & SPIEC-EASI\\
\hline
\multicolumn{6}{c}{Spectral norm loss}\\
~50 & ~3.35 (0.20) & ~3.00 (0.24) & ~3.81 (0.08) & ~4.54 (0.06) & ~4.34 (0.03)\\
100 & ~2.86 (0.16) & ~2.72 (0.16) & ~3.35 (0.08) & ~4.06 (0.08) & ~4.05 (0.03)\\
200 & ~3.80 (0.16) & ~3.71 (0.16) & ~4.28 (0.08) & ~4.94 (0.10) & ~5.14 (0.02)\\
400 & ~3.69 (0.12) & ~3.64 (0.12) & ~4.39 (0.07) & ~4.98 (0.07) & ~5.33 (0.02)\\
\multicolumn{6}{c}{Matrix $\ell_1$-norm loss}\\
~50 & ~4.48 (0.33) & ~4.04 (0.31) & ~4.94 (0.13) & ~5.80 (0.07) & ~5.41 (0.08)\\
100 & ~4.39 (0.36) & ~4.15 (0.34) & ~5.31 (0.20) & ~5.84 (0.14) & ~5.75 (0.10)\\
200 & ~7.55 (0.54) & ~7.26 (0.59) & ~7.82 (0.33) & ~9.29 (0.23) & ~8.92 (0.07)\\
400 & ~7.21 (0.66) & ~7.05 (0.62) & ~7.53 (0.28) & 11.70 (0.31) & ~9.88 (0.08)\\
\multicolumn{6}{c}{Frobenius norm loss}\\
~50 & ~7.02 (0.24) & ~6.53 (0.24) & ~8.18 (0.13) & 11.86 (0.28) & 11.95 (0.04)\\
100 & ~9.41 (0.24) & ~9.09 (0.24) & 10.74 (0.15) & 15.63 (0.51) & 17.01 (0.13)\\
200 & 16.92 (0.27) & 16.66 (0.28) & 20.59 (0.14) & 25.92 (0.64) & 32.56 (0.05)\\
400 & 24.28 (0.24) & 24.13 (0.25) & 29.16 (0.13) & 38.66 (0.73) & 47.09 (0.05)\\
\multicolumn{6}{c}{True positive rate (\%)}\\
~50 &  76.8 (4.2)  &  82.7 (3.8)  &  60.0 (2.9)  &  37.9 (5.5)  &  63.8 (2.4)\\
100 &  79.3 (2.8)  &  81.5 (2.4)  &  66.4 (2.1)  &  60.4 (3.6)  &  72.8 (1.9)\\
200 &  73.0 (1.8)  &  74.3 (1.8)  &  47.6 (1.5)  &  48.6 (4.5)  &  63.9 (1.6)\\
400 &  73.1 (1.4)  &  73.6 (1.3)  &  44.9 (1.1)  &  40.9 (3.9)  &  67.2 (1.2)\\
\multicolumn{6}{c}{False positive rate (\%)}\\
~50 &  ~2.8 (0.6)  &  ~2.1 (0.5)  &  ~1.0 (0.2)  &  ~2.7 (0.4)  &  ~2.1 (0.2)\\
100 &  ~1.5 (0.2)  &  ~1.4 (0.2)  &  ~1.1 (0.1)  &  ~2.9 (0.2)  &  ~4.3 (0.2)\\
200 &  ~1.0 (0.1)  &  ~0.9 (0.1)  &  ~0.4 (0.0)  &  ~1.3 (0.1)  &  ~1.8 (0.1)\\
400 &  ~0.4 (0.0)  &  ~0.4 (0.0)  &  ~0.1 (0.0)  &  ~0.5 (0.0)  &  ~1.0 (0.0)\\
\hline
\end{tabular*}
\end{table}

\begin{table}
\caption{Means and standard errors (in parentheses) of performance measures for different methods in model (d) over 100 replications.}\label{tab:model_d}
\def~{\phantom{0}}
\begin{tabular*}{\textwidth}{@{}c*{5}{@{\extracolsep{\fill}}c}@{}}
\hline
& \multicolumn{5}{c}{Method}\\
\cline{2-6}
$p$ & CARE & Oracle & CD-trace & gCoda & SPIEC-EASI\\
\hline
\multicolumn{6}{c}{Spectral norm loss}\\
~50 & ~2.86 (0.17) & ~2.53 (0.17) & ~3.55 (0.08) & ~3.96 (0.11) & ~4.92 (0.06)\\
100 & ~3.24 (0.15) & ~3.10 (0.16) & ~3.76 (0.06) & ~4.63 (0.13) & ~5.69 (0.07)\\
200 & ~3.29 (0.07) & ~3.20 (0.07) & ~4.16 (0.04) & ~4.87 (0.12) & ~5.65 (0.05)\\
400 & ~3.37 (0.05) & ~3.32 (0.06) & ~4.10 (0.03) & ~4.77 (0.09) & ~5.11 (0.01)\\
\multicolumn{6}{c}{Matrix $\ell_1$-norm loss}\\
~50 & ~4.70 (0.41) & ~4.03 (0.40) & ~5.29 (0.17) & ~6.15 (0.26) & ~7.71 (0.03)\\
100 & ~5.34 (0.44) & ~5.05 (0.41) & ~6.28 (0.38) & ~8.53 (0.48) & ~9.42 (0.06)\\
200 & ~6.50 (0.62) & ~6.07 (0.49) & ~7.07 (0.16) & 10.38 (0.62) & 10.11 (0.18)\\
400 & ~6.49 (0.40) & ~6.40 (0.39) & ~7.43 (0.22) & ~8.66 (0.39) & ~8.28 (0.13)\\
\multicolumn{6}{c}{Frobenius norm loss}\\
~50 & ~6.68 (0.27) & ~6.24 (0.28) & ~8.04 (0.16) & ~9.94 (0.37) & 14.75 (0.26)\\
100 & 11.41 (0.26) & 11.09 (0.26) & 13.68 (0.12) & 18.40 (0.61) & 26.19 (0.40)\\
200 & 17.02 (0.26) & 16.75 (0.26) & 21.16 (0.14) & 27.77 (0.81) & 36.72 (0.33)\\
400 & 24.92 (0.21) & 24.74 (0.22) & 30.22 (0.12) & 39.64 (0.90) & 47.82 (0.06)\\
\multicolumn{6}{c}{True positive rate (\%)}\\
~50 &  85.8 (4.0)  &  88.9 (3.0)  &  76.4 (3.3)  &  84.0 (4.0)  &  35.4 (6.9)\\
100 &  76.8 (2.9)  &  78.6 (3.0)  &  56.0 (2.4)  &  51.0 (5.4)  &  33.3 (6.1)\\
200 &  75.6 (1.9)  &  76.8 (1.9)  &  50.3 (1.7)  &  46.5 (5.3)  &  39.7 (4.4)\\
400 &  74.6 (1.2)  &  75.1 (1.3)  &  45.8 (1.3)  &  44.1 (5.2)  &  72.8 (1.2)\\
\multicolumn{6}{c}{False positive rate (\%)}\\
~50 &  ~3.1 (0.5)  &  ~2.2 (0.5)  &  ~4.2 (0.5)  &  ~9.6 (0.7)  &  ~7.8 (3.2)\\
100 &  ~1.5 (0.2)  &  ~1.4 (0.2)  &  ~2.0 (0.2)  &  ~3.9 (0.2)  &  ~6.4 (2.5)\\
200 &  ~0.9 (0.1)  &  ~0.8 (0.1)  &  ~0.5 (0.0)  &  ~1.7 (0.1)  &  ~3.2 (0.3)\\
400 &  ~0.4 (0.0)  &  ~0.4 (0.0)  &  ~0.1 (0.0)  &  ~0.8 (0.1)  &  ~1.3 (0.1)\\
\hline
\end{tabular*}
\end{table}

\begin{figure}
\begin{subfigure}{\textwidth}
\centering
\includegraphics[width=.75\textwidth]{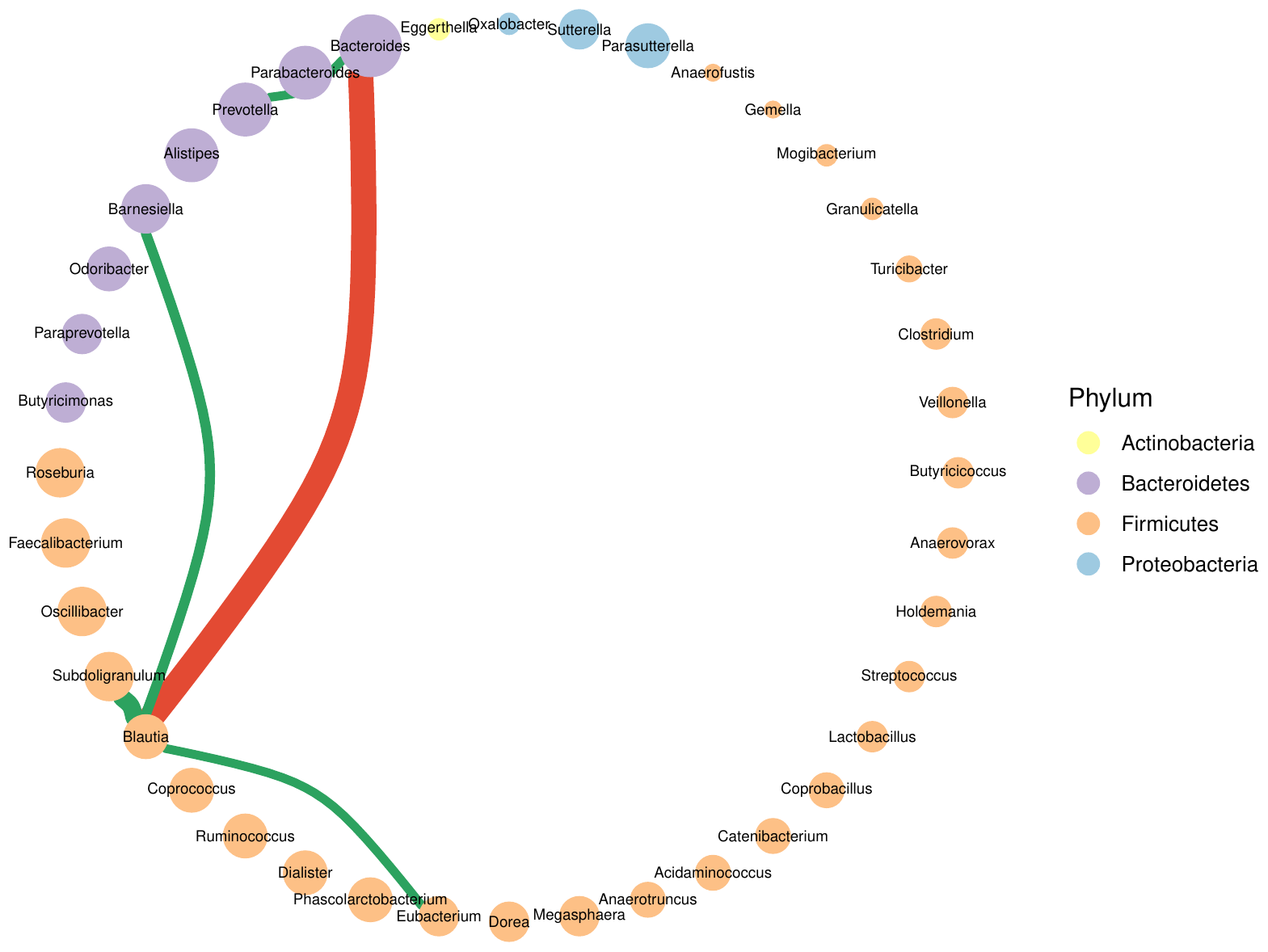}
\caption{Lean}
\end{subfigure}
\begin{subfigure}{\textwidth}
\centering
\includegraphics[width=.75\textwidth]{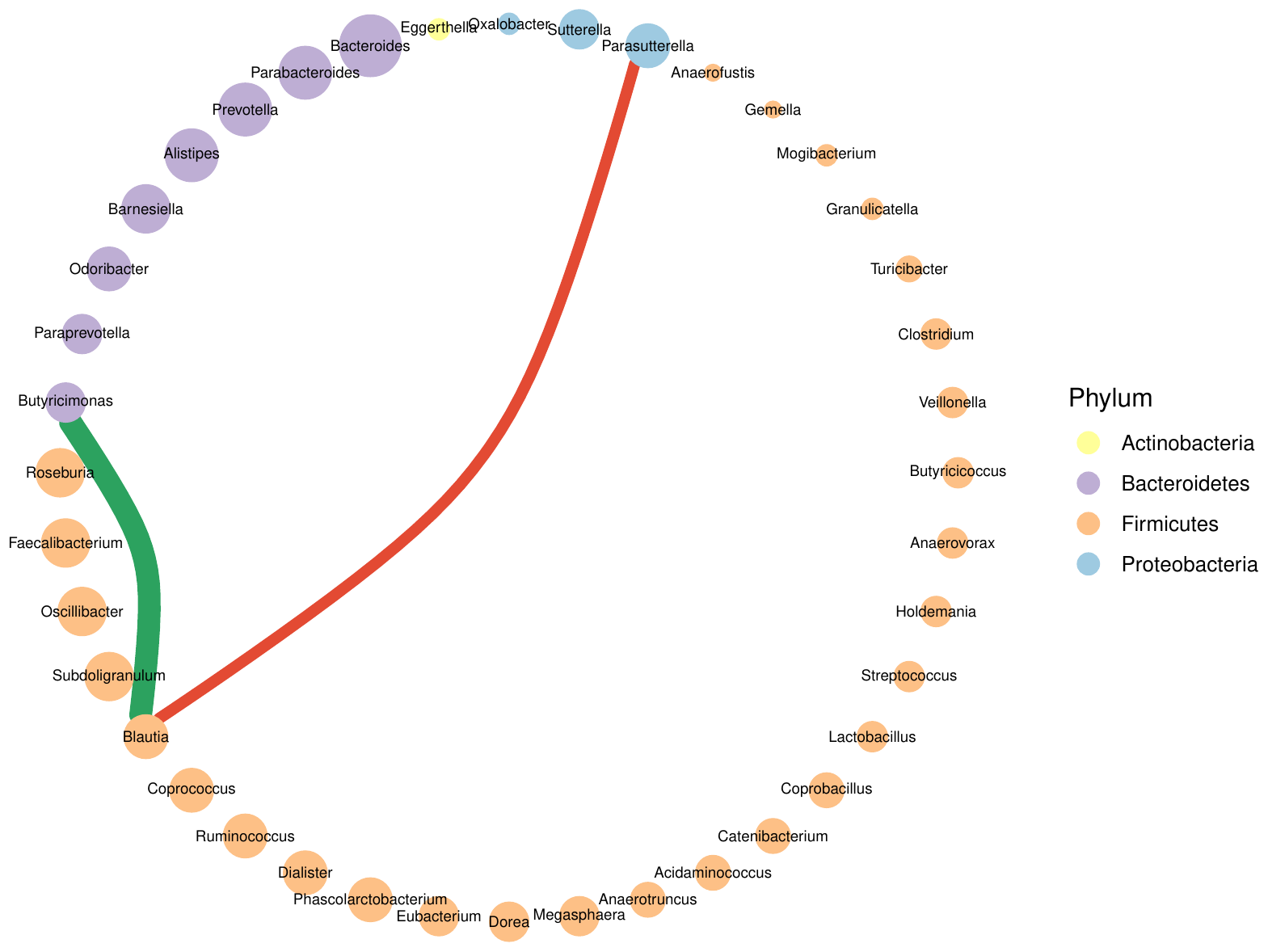}
\caption{Obese}
\end{subfigure}
\caption{Microbial interaction networks identified by the CD-trace method for the (a) lean and (b) obese groups in the gut microbiome data. Positive and negative edges are displayed in green and red, respectively, with thicknesses proportional to their strengths. Node sizes are proportional to the relative abundances of genera among all samples.}\label{fig:CD_net}
\end{figure}

\begin{figure}
\begin{subfigure}{\textwidth}
\centering
\includegraphics[width=.75\textwidth]{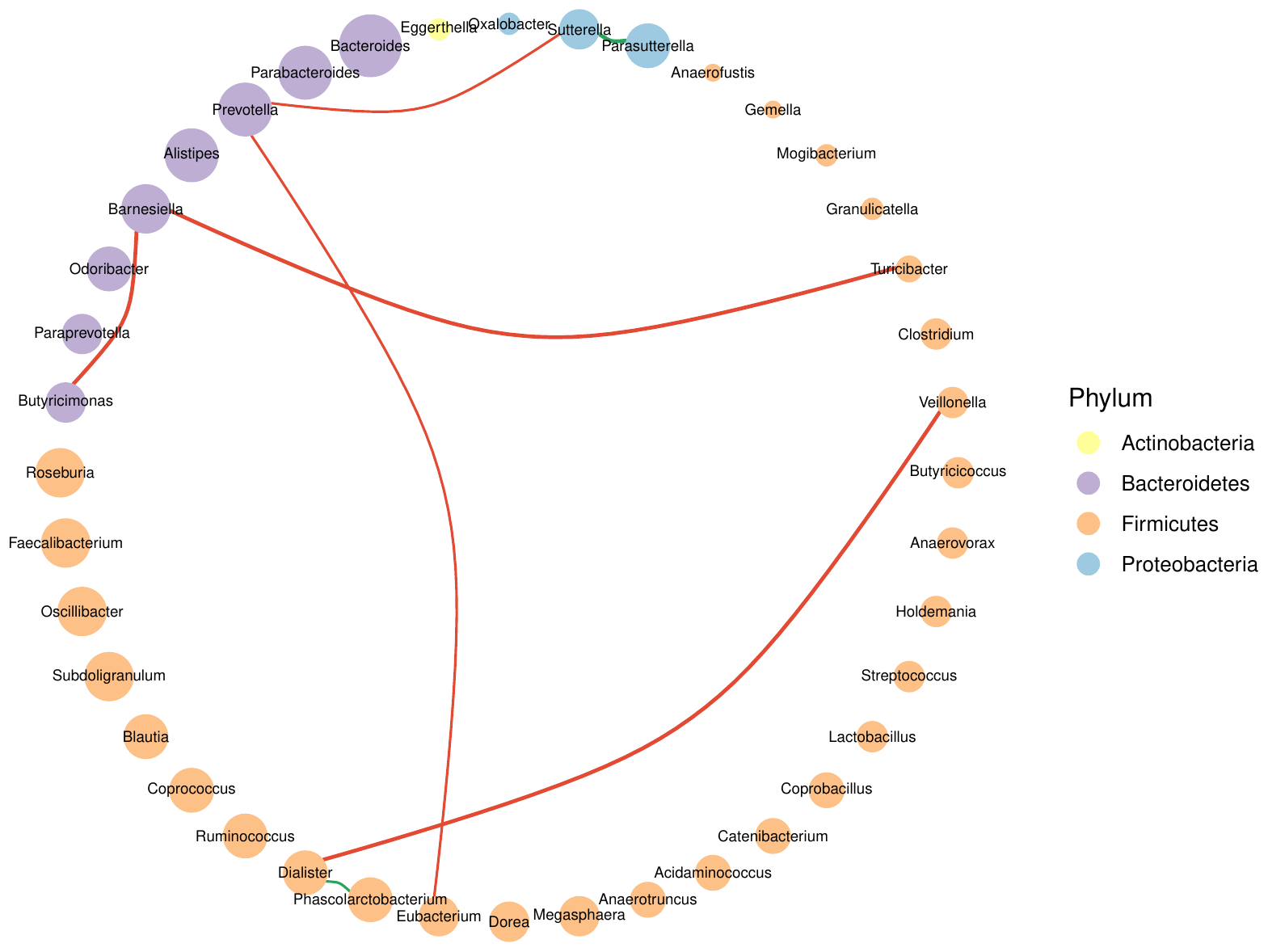}
\caption{Lean}
\end{subfigure}
\begin{subfigure}{\textwidth}
\centering
\includegraphics[width=.75\textwidth]{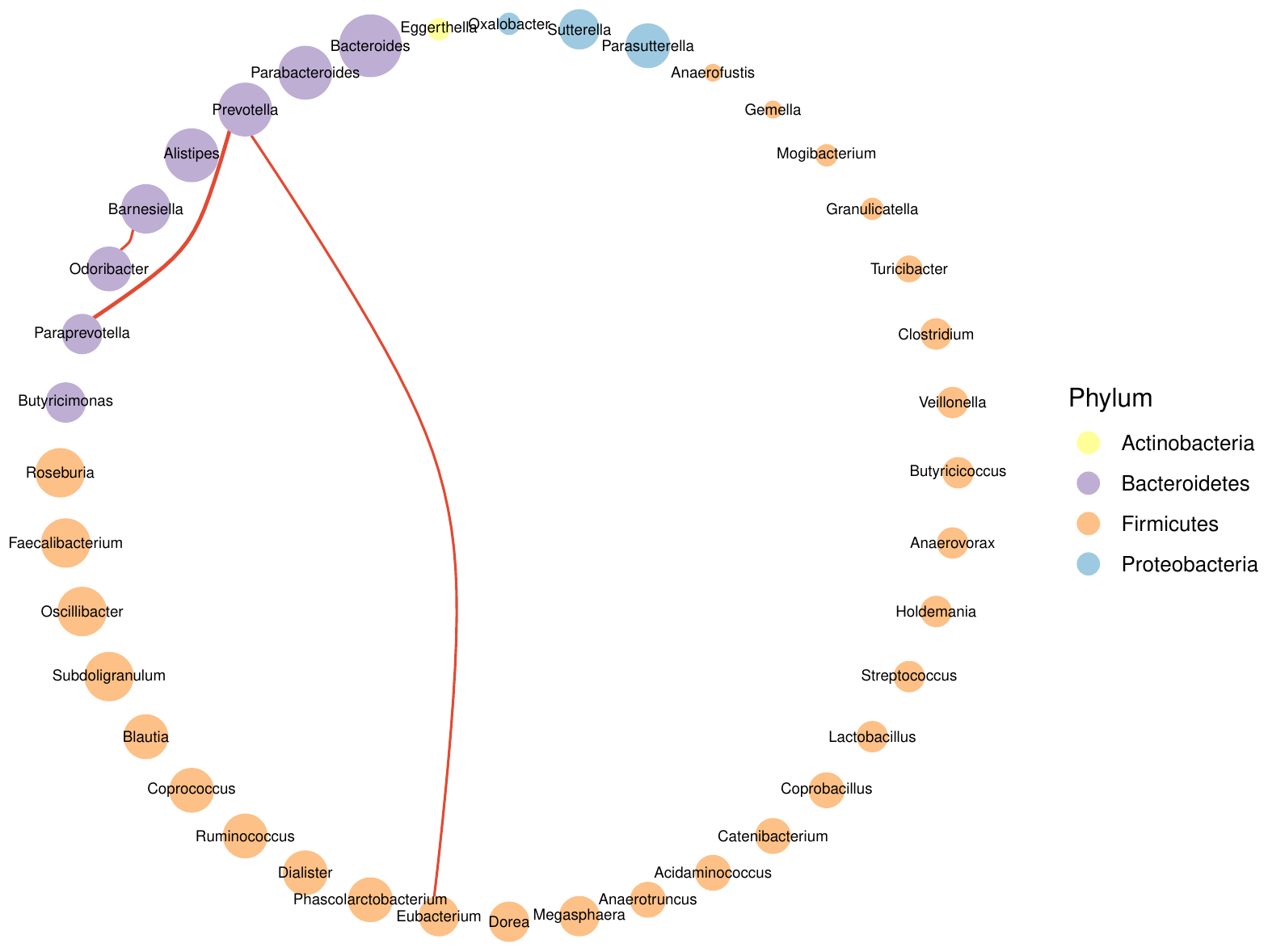}
\caption{Obese}
\end{subfigure}
\caption{Microbial interaction networks identified by the gCoda method for the (a) lean and (b) obese groups in the gut microbiome data. Positive and negative edges are displayed in green and red, respectively, with thicknesses proportional to their strengths. Node sizes are proportional to the relative abundances of genera among all samples.}\label{fig:gCoda_net}
\end{figure}

\begin{figure}
\begin{subfigure}{\textwidth}
\centering
\includegraphics[width=.75\textwidth]{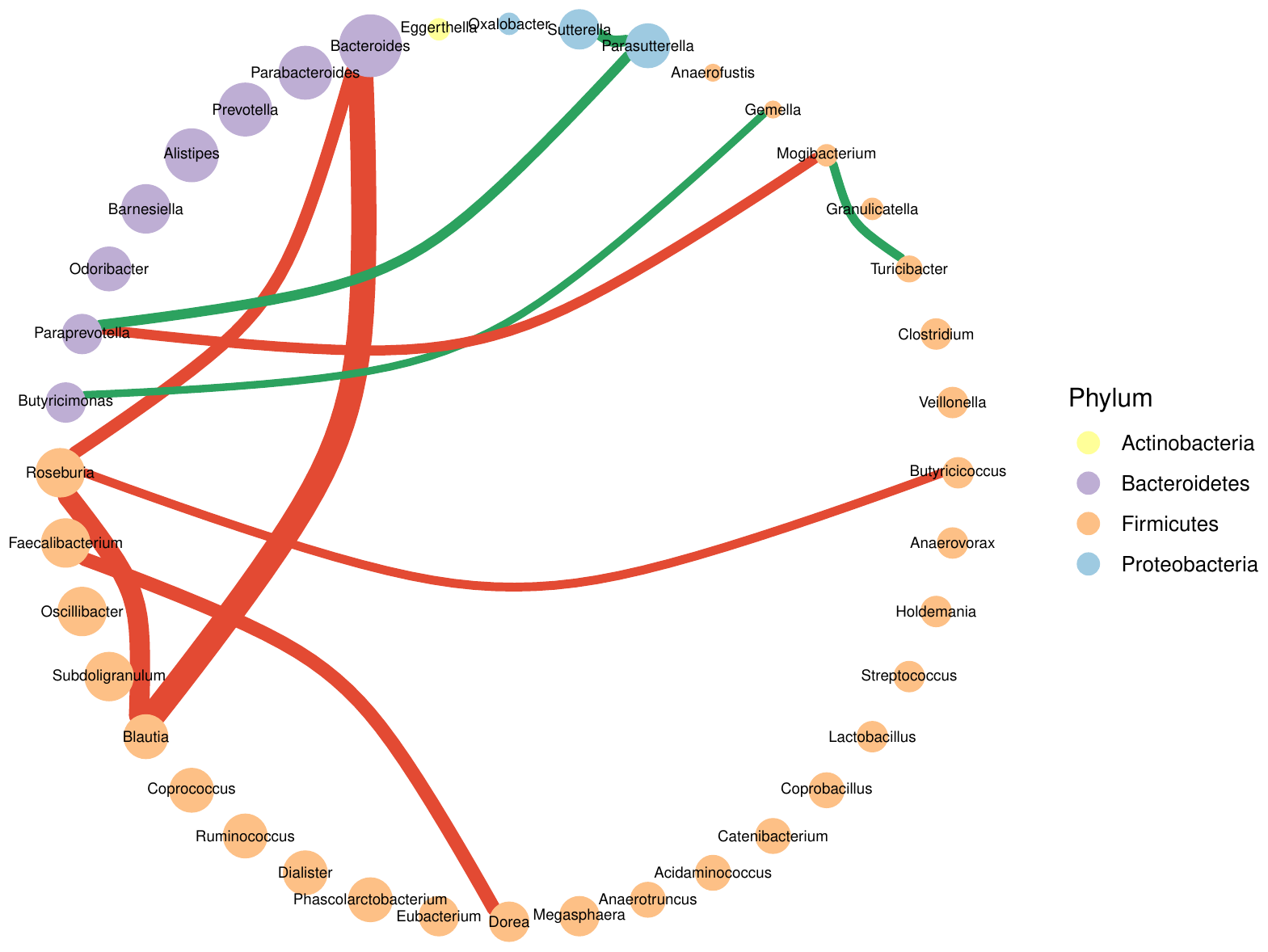}
\caption{Lean}
\end{subfigure}
\begin{subfigure}{\textwidth}
\centering
\includegraphics[width=.75\textwidth]{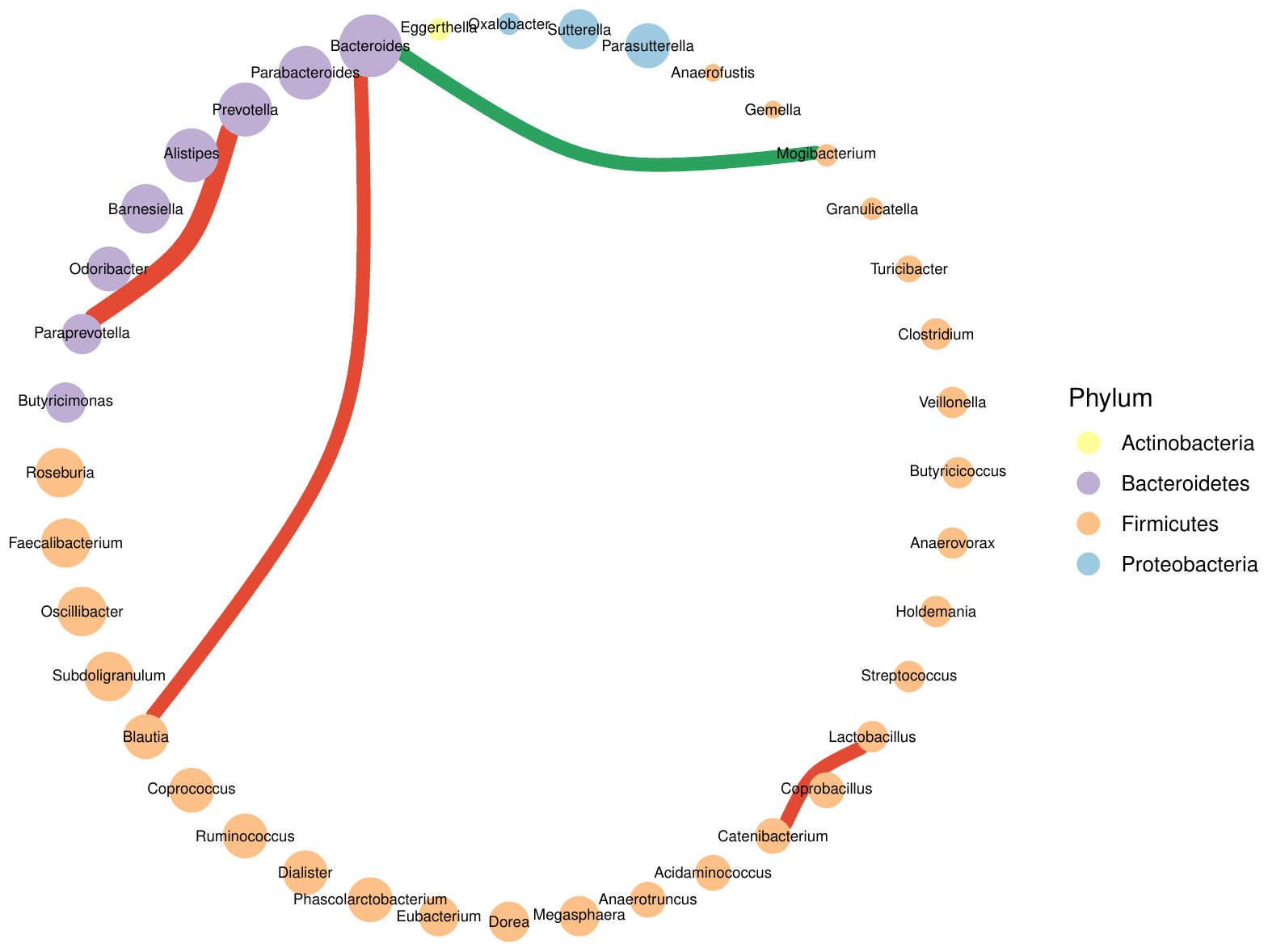}
\caption{Obese}
\end{subfigure}
\caption{Microbial interaction networks identified by the SPIEC-EASI method for the (a) lean and (b) obese groups in the gut microbiome data. Positive and negative edges are displayed in green and red, respectively, with thicknesses proportional to their strengths. Node sizes are proportional to the relative abundances of genera among all samples.}\label{fig:SPIEC_net}
\end{figure}

\clearpage
\bibliographystyle{jasa}
\bibliography{care_ref}